\documentclass[12pt,a4paper,notitlepage]{article}
\usepackage{amsmath}
\usepackage{amssymb}
\usepackage{amsfonts}
\usepackage[onehalfspacing,nodisplayskipstretch]{setspace}
\usepackage[a4paper,nohead,left=1in,right=1in,top=1in,bottom=1in]{geometry}
\usepackage{hyperref}

\newtheorem{theorem}{Theorem}

\newtheorem{corollary}[theorem]{Corollary}

\newtheorem{example}[theorem]{Example}
\let\oldexample\example
\renewcommand{\example}{\oldexample\normalfont}

\newtheorem{lemma}[theorem]{Lemma}

\newtheorem{proposition}[theorem]{Proposition}

\newenvironment{proof}[1][Proof]{\noindent\textbf{#1.} }{\hfill $\square$}

\input{tcilatex}
\begin{document}

\title{Proper Calibeating\thanks{%
Previous versions: December 2025; May 2026 (\texttt{arXiv:2605.26703v1});
June 2026 (\texttt{arXiv:2605.26703v2}). Latest version may be downloaded
from \url{http://ma.huji.ac.il/hart/abs/calib-proper.html}. A general
presentation that includes results from this paper is available at %
\url{http://ma.huji.ac.il/hart/pres.html\#calib-beat-p}.}}
\author{Dean P. Foster\thanks{%
Department of Statistics, Wharton, University of Pennsylvania, Philadelphia,
and Amazon, New York. \emph{e-mail}: \texttt{dean@foster.net}. \ \emph{web
page}: \texttt{\url{http://deanfoster.net}.}} \and Sergiu Hart\thanks{%
Institute of Mathematics, Department of Economics, and Federmann Center for
the Study of Rationality, The Hebrew University of Jerusalem. \emph{e-mail}: 
\texttt{hart@huji.ac.il}. \ \emph{web page}: \texttt{%
\url{http://ma.huji.ac.il/hart}.}}}
\maketitle

\begin{abstract}
The classic concept of \textquotedblleft calibrated
forecasts\textquotedblright\ and its more recent refinement,
\textquotedblleft calibeating,\textquotedblright\ are defined with respect
to the standard quadratic scoring rule. We extend these notions to the class
of proper scoring rules (for which the true distribution is an optimal
forecast) and define \emph{proper calibration} and \emph{proper calibeating}
by requiring the corresponding guarantees to hold uniformly over all bounded
proper scoring rules. We first establish that calibration always implies
proper calibration, whereas calibeating need not imply proper calibeating.
Second, we show how to guarantee proper calibeating and proper
multicalibeating; in particular, \emph{complete calibeating}---a strong form
of calibeating that calibeats the joint binning---is always proper. Finally,
we consider \emph{decision-making under uncertainty}, where one best replies
to the forecasts. We establish that (proper) calibration is equivalent to
universal no regret, and (proper) complete calibeating is equivalent to
universal no regret together with universal subsuming of the reference
sequence, where \emph{universal}\ refers to all bounded utility functions.
\end{abstract}

\tableofcontents

%TCIMACRO{%
%\TeXButton{References Without Numbers}{\def\@biblabel#1{#1\hfill}
%\def\thebibliography#1{\section*{References}
%\addcontentsline{toc}{section}{References}
%\list
%{}{
%\labelwidth 0pt
%\leftmargin 1.8em
%\itemindent -1.8em
%\usecounter{enumi}}
%\def\newblock{\hskip .11em plus .33em minus .07em}
%\sloppy\clubpenalty4000\widowpenalty4000
%\sfcode`\.=1000\relax\def\baselinestretch{1}\large \normalsize}
%\let\endthebibliography=\endlist} }%
%BeginExpansion
\def\@biblabel#1{#1\hfill}
\def\thebibliography#1{\section*{References}
\addcontentsline{toc}{section}{References}
\list
{}{
\labelwidth 0pt
\leftmargin 1.8em
\itemindent -1.8em
\usecounter{enumi}}
\def\newblock{\hskip .11em plus .33em minus .07em}
\sloppy\clubpenalty4000\widowpenalty4000
\sfcode`\.=1000\relax\def\baselinestretch{1}\large \normalsize}
\let\endthebibliography=\endlist
%EndExpansion
%TCIMACRO{%
%\TeXButton{define shfrac}{\newcommand{\shfrac}[2]{\ensuremath{{}^{#1} \hspace{-0.04in}/_{\hspace{-0.03in}#2}}}} }%
%BeginExpansion
\newcommand{\shfrac}[2]{\ensuremath{{}^{#1} \hspace{-0.04in}/_{\hspace{-0.03in}#2}}}
%EndExpansion
%TCIMACRO{%
%\TeXButton{def \T \B}{\newcommand\T{\rule{0pt}{2.6ex}}
%\newcommand\B{\rule[-1.2ex]{0pt}{0pt}}}}%
%BeginExpansion
\newcommand\T{\rule{0pt}{2.6ex}}
\newcommand\B{\rule[-1.2ex]{0pt}{0pt}}%
%EndExpansion

\section{Introduction\label{s:intro}}

Forecasting the probability of future events is a foundational problem.
Forecasters issue probabilistic predictions that are then compared with the
outcomes that materialize. How should one evaluate such a forecaster? A
natural approach is to use a \emph{scoring rule}: a penalty function that
assigns a loss to each forecast--outcome pair. A scoring rule is deemed 
\emph{proper} if the loss is minimized when the forecast equals the
underlying probability distribution of the outcome. The standard and most
widely used proper scoring rule is the \emph{quadratic} rule (Brier 1950;
see Savage 1971, Schervish 1989, and the comprehensive treatment in Gneiting
and Raftery 2007).

A classic result (Sanders 1963; Murphy 1972; DeGroot and Fienberg 1983) is
that the Brier quadratic score $\mathcal{B}$ decomposes as 
\begin{equation*}
\mathcal{B}=\mathcal{K}+\mathcal{R},
\end{equation*}%
where $\mathcal{K}$ is the \emph{calibration} score---measuring how close
the forecasts are to the realized frequencies---and $\mathcal{R}$ is the 
\emph{refinement} score---measuring how informative the partitioning of the
outcomes into bins is, where the partitioning is determined by the announced
forecasts (see Section \ref{sus:sequences} for the precise setup and
definitions). A surprising result of Foster and Vohra (1998) is that one can
generate forecasts that are \emph{guaranteed} to be calibrated (i.e., $%
\mathcal{K}$ converges to zero), no matter what the outcomes turn out to be
(for the ensuing literature, see the survey of Olszewski 2015).\footnote{%
For work on calibration that appeared after this survey, see Foster and Hart
(2018, 2021, 2023), Hart (2025), Marx, Kuleshov, and Ermon (2024),
Okoroafor, Sun, and Kleinberg (2024), Qiao and Valiant (2021).}

In Foster and Hart (2023), we introduced the notion of \textquotedblleft
calibeating\textquotedblright : beating a reference forecaster\footnote{%
The \textquotedblleft reference\textquotedblright\ forecaster could also be
the forecaster's own alternative procedure (based, for instance, on certain
knowledge or information that he possesses); in this case, calibeating
entails incorporating this knowledge into the new forecast.} by achieving a
Brier score no worse than its refinement score---thus gaining calibration
without losing expertise, which, as we argued there, is reflected in the
induced partitioning into bins and the resulting refinement score. We showed
that calibeating can be achieved in several ways: by a simple deterministic
online procedure; by a stochastic procedure that is itself calibrated; and
by a deterministic continuously calibrated procedure. All these results,
however, were stated exclusively for the quadratic score.\footnote{%
In Appendix A.9 of the full version of the paper, Foster and Hart (2023f),
we show how a modification of our simple procedure yields calibeating with
respect to the logarithmic proper scoring rule.}

However, there is an extensive class of proper scoring rules (in particular,
every decision-making setting under uncertainty generates one; see below and
Section \ref{s:utility}). A fundamental concern is thus whether these
guarantees are robust or merely artifacts of the quadratic functional form.
If a forecaster's performance evaluation is sensitive to the specific choice
of a proper loss function, its theoretical and practical relevance may well
be questionable. This raises the natural question:

\begin{quote}
\emph{If a procedure is calibrated (or calibeats) under the quadratic score,
do these guarantees persist under every other proper scoring rule?}
\end{quote}

This paper answers that question for proper scoring rules that are bounded.%
\footnote{%
This excludes the logarithmic scoring rule. Some of the proper calibeating
results are limited to Lipschitz proper scoring rules.} The answer turns out
to be positive for calibration and negative for calibeating.

We will call a procedure \emph{proper calibrated} (respectively, \emph{%
proper calibeating}) if the corresponding guarantee holds \emph{%
simultaneously} for all bounded proper scoring rules (see Section \ref%
{s:proper} for the precise definitions).

\textbf{Calibration entails proper calibration. }We show, first, that every
calibrated procedure is automatically proper calibrated. The key observation
is that the calibration score under any bounded proper scoring rule is
bounded by a multiple of the square root of the quadratic calibration score.
Consequently, all established calibrated forecasting procedures---the
stochastic procedures, as well as the deterministic ones that are
continuously calibrated\footnote{%
See Foster and Hart (2021), with precursors Kakade and Foster (2004), Foster
and Kakade (2006), Foster and Hart (2018).}---are already proper, with no
modification needed.

\textbf{Calibeating does \emph{not}} \textbf{entail proper calibeating. }Our
second result is perhaps a surprise: unlike calibration, calibeating does 
\emph{not} transfer to proper calibeating. We exhibit a concrete example
(Section \ref{sus:example}) in which a forecasting sequence $\mathbf{c}$
calibeats a reference sequence $\mathbf{b}$ under the quadratic scoring
rule, yet \emph{fails} to calibeat $\mathbf{b}$ under, for instance, the $2$%
-spherical proper scoring rule (which is bounded and even Lipschitz). Both
sequences $\mathbf{b}$ and $\mathbf{c}$ are perfectly\footnote{%
Throughout, the term \textquotedblleft perfect\textquotedblright\ refers to
the ideal situation where there are no errors; thus, perfect calibration
means that forecasts and state averages always coincide.} calibrated; the
failure is purely in the refinement comparison.

The reason for the asymmetry between calibration and calibeating is
instructive. Calibration scores shrink whenever the quadratic calibration
score shrinks, regardless of which proper scoring rule is used; but
refinement scores under different proper rules need not move in tandem.

\textbf{Achieving proper calibeating. }We next provide three positive
results on proper calibeating, i.e., calibeating simultaneously under the
appropriate class of proper scoring rules (bounded or Lipschitz).

\emph{First,} we show that calibeating the \emph{joint binning} that
cross-classifies forecasts by the bins of both the reference forecaster and
the procedure itself---we call this \emph{complete calibeating}\footnote{%
This turns out to be a useful strengthening of calibeating; see Sections \ref%
{sus:joint} and \ref{susus:u-subsuming}.}---is proper (i.e., robust across
all bounded proper scoring rules), and thus yields both proper calibeating
and proper calibration. A key insight is that it is not enough to calibeat
the reference forecaster and to be calibrated separately; what matters is
calibeating the joint binning, which keeps track of how the two sets of bins
interact (Appendix \ref{sus-a:bxc} elaborates further on this).

\emph{Second,} the very simple deterministic procedure of Theorem 3 of
Foster and Hart (2023)---which forecasts the running state average in the
reference forecaster's bin---is shown to be calibeating for every Lipschitz
proper scoring rule (which we term \textquotedblleft \emph{proper-Li}
calibeating\textquotedblright ). We further show that this result does not
hold for all bounded proper scoring rules.

\emph{Third,} we establish the existence of a deterministic procedure that
is simultaneously proper-Li calibeating and proper continuously-calibrated;
this is obtained via an approximate decomposition of the score (a
generalization of the quadratic case; see Appendix A.7 in Foster and Hart
2023ff).

All three results extend to multiple reference forecasters by considering
the joint binning that cross-classifies all of their bins.

\textbf{Decision-making under uncertainty.} Forecasts are commonly used in
decision-making under uncertainty, where the probabilities of the various
states of nature are unknown (see Section \ref{s:utility}). The forecast is
used to choose a decision that maximizes the decision maker's expected
utility. This process induces a scoring rule in which the loss is the
realized disutility from the decision that is optimal given the forecast.
The scoring rule is proper: when the forecast is correct,\ i.e., equal to
the true distribution, expected utility is maximized and expected loss is
minimized. We show that the calibration score of the sequence of forecasts,
computed according to the induced scoring rule, is exactly the
\textquotedblleft regret\textquotedblright\ from best replying to these
forecasts. Consequently, proper calibration is equivalent to achieving no
regret \textquotedblleft universally,\textquotedblright\ i.e.,
simultaneously across all bounded utility functions. When given a reference
sequence, proper complete calibeating is shown to be equivalent to universal
no regret together with universal subsuming of the decision value of the
reference sequence (which can thus be ignored with no loss in utility).

\subsection{Related Work\label{sus:intro-related}}

The question of how forecasting guarantees extend across different scoring
rules has attracted attention from several directions.

Universality across losses is central to U-calibration (Kleinberg, Leme,
Schneider, and Teng 2023) and omniprediction (Gopalan, Kalai, Reingold,
Sharan, and Wieder 2022). Hu and Wu (2024) give a decision-theoretic
calibration measure based on regret for all bounded payoffs. Lee, Noarov,
Pai, and Roth (2022) study multicalibeating through online multiobjective
optimization. Chen, Huang, Jordan, and Luo (2026) and, independently,
Fichtl, Guzm\'{a}n, and Mehta (2026) develop calibeating for general proper
losses through regret-minimization methods; the latter use a
Bregman-divergence approach and obtain simultaneous guarantees for several
classes of proper losses. On the statistical side, Dimitriadis, Gneiting,
and Jordan (2021) and Popordanoska, Gruber, Tiulpin, Buettner, and Blaschko
(2023) analyze the calibration--refinement decomposition for general proper
rules. Related connections between proper losses, calibration, and decision
guarantees appear in Blasiok, Gopalan, Hu, and Nakkiran (2023) and Luo,
Senapati, and Sharan (2025).

\subsection{Outline of the paper\label{sus:intro-outline}}

Section \ref{s:setup} sets up the framework: scoring rules, divergences, and
the Brier, calibration, and refinement scores. Section \ref{s:proper}
introduces the \textquotedblleft proper\textquotedblright\ concepts (proper
calibration and proper calibeating). Section \ref{s:proper calibration}
proves that calibration automatically yields proper calibration. Section \ref%
{s:proper calibeating} presents the counterexample showing that calibeating
does not imply proper calibeating, establishes the three routes to proper
calibeating and proper-Li calibeating, and extends these results to
multicalibeating. The decision-making under uncertainty setup is discussed
in Section \ref{s:utility}, which establishes how utility maximization,
universal no regret, and universal subsuming relate to proper calibration,
calibeating, and complete calibeating. The Appendix collects background
material on scoring rules, includes deferred proofs, and provides further
analysis of the joint-binning condition and utility-improving properties.

\section{The Setup\label{s:setup}}

The setup follows our previous work (Foster and Hart 2018, 2021, 2023), with
the set of forecasts $C$ specified as a probability simplex.

Let $A$ be a finite set of \emph{states of nature} (\emph{states} for short;
also referred to as \textquotedblleft actions\textquotedblright ), and let%
\footnote{%
We write $\Delta (Z)$ for the set of probability distributions over the set $%
Z$.} $C:=\Delta (A)=\{c\in \mathbb{R}_{+}^{A}:\sum_{a\in A}c(a)=1\}$, the
simplex of probability distributions over the set $A$, be the set of \emph{%
forecasts.} We identify the elements of $A$ with the unit vectors of $C$.

\subsection{Scoring Rules\label{sus:scoring rules}}

A \emph{scoring rule}\footnote{%
See Appendix \ref{sus-a:scoring rules} for further details.} $L_{A}:A\times
C\rightarrow \mathbb{R}$ assigns a loss $L_{A}(a,c)$ to every forecast $c\in
C$ and every realized state\footnote{%
There are scoring rules, such as the logarithmic scoring rule, that allow
the loss to be infinite; as they are not bounded, we will not deal with them
here (for \textquotedblleft log-calibeating,\textquotedblright\ see Appendix
A.9 in Foster and Hart 2023f).} $a\in A$. The function $L_{A}$ is linearly
extended to $L:C\times C\rightarrow \mathbb{R}$ by\footnote{%
We thus have $L_{A}(a,c)=L(\mathbf{1}_{a},c)$.} $L(d,c):=\sum_{a\in
A}d(a)L_{A}(a,c)$; thus, $L(d,c)=\mathbb{E}_{a\sim d}\left[ L_{A}(a,c)\right]
$ is the expected loss when the state is drawn from the distribution $d\in
C. $ Letting $\mathbf{L}(c)$ denote the vector $(L_{A}(a,c))_{a\in A}$ in $%
\mathbb{R}^{A}$, we thus have%
\begin{equation}
L(d,c)=d\cdot \mathbf{L}(c)  \label{eq:d*Lambda(c)}
\end{equation}%
for every $c,d\in C.$

A scoring rule $L$ is \emph{proper} if $L(d,c)\geq L(d,d)$ for every $c,d\in
C$; i.e., forecasting the true distribution minimizes the expected loss (if $%
L(d,c)>L(d,d)$ for all $c\neq d$, then $L$ is \emph{strictly proper}). The
deviation of a forecast $c$ from the \textquotedblleft
perfect\textquotedblright\ forecast $d$ is measured by the $L$\emph{%
-divergence}\footnote{%
At times this is written $D(d\,||\,c)$ (as for the Kullback--Leibler
divergence).} $D\equiv D^{L}:C\times C\rightarrow \mathbb{R}$, defined by%
\begin{equation}
D(d,c)\,:=\,L(d,c)-L(d,d).  \label{eq:D-le}
\end{equation}%
Thus, $L$ is proper if and only if $D$ is always $\geq 0.$ The standard
scoring rule is the \emph{quadratic (Brier) scoring rule}, whose divergence
is $D(d,c)=\left\Vert c-d\right\Vert ^{2}$ (and $L_{A}(a,c)=-2c(a)+\left%
\Vert c\right\Vert ^{2}$).

We will deal here with scoring rules that are bounded or
Lipschitz-continuous.\footnote{%
One may also consider $\alpha $-H\"{o}lder continuity for $0<\alpha <1$; see
Remark (b) in Section \ref{susus:online refinement}.} To avoid superfluous
constants, we say that a scoring rule $L$ is $M$\emph{-bounded} if%
\begin{equation*}
\left\Vert \mathbf{L}(c)-\mathbf{L}(c^{\prime })\right\Vert \leq M,
\end{equation*}%
and $M$\emph{-Lipschitz} if%
\begin{equation*}
\left\Vert \mathbf{L}(c)-\mathbf{L}(c^{\prime })\right\Vert \leq M\left\Vert
c-c^{\prime }\right\Vert
\end{equation*}%
for all $c,c^{\prime }\in C$, where $\left\Vert \cdot \right\Vert $ denotes
the standard Euclidean norm and $M$ is finite.

The following proposition collects several useful properties; for details
and proofs see Appendix \ref{sus-a:scoring rules}.

\begin{proposition}
\label{p:L}Let $L$ be a proper scoring rule.

\begin{description}
\item[(i)] The function $H\equiv H^{L}:C\rightarrow \mathbb{R}$
(\textquotedblleft the $L$-entropy\textquotedblright ) given by $%
H(c):=L(c,c) $ is concave, and we have 
\begin{eqnarray*}
L(d,c) &=&H(c)-(c-d)\cdot \mathbf{L}(c)\;\;\;\text{and} \\
D(d,c) &=&H(c)-H(d)-(c-d)\cdot \mathbf{L}(c)
\end{eqnarray*}%
for every $c,d$ in $C$.

\item[(ii)] If $L$ is $M$-bounded then%
\begin{equation}
0\leq D(d,c)\leq M\left\Vert c-d\right\Vert  \label{eq:L-bounded}
\end{equation}%
for every $c,d$ in $C$.

\item[(iii)] If $L$ is $M$-Lipschitz then%
\begin{equation}
0\leq D(d,c)\leq M\left\Vert c-d\right\Vert ^{2}  \label{eq:L}
\end{equation}%
for every $c,d$ in $C$, and%
\begin{equation}
\left\vert D(d,c)-D(d,c^{\prime })\right\vert =\left\vert
L(d,c)-L(d,c^{\prime })\right\vert \leq M\left\Vert c-c^{\prime }\right\Vert
\label{eq:D-D-L}
\end{equation}%
for every $c,c^{\prime },d$ in $C$.
\end{description}
\end{proposition}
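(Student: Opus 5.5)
The plan rests on linearity of $L$ in its first argument, i.e., $L(d,c)=d\cdot \mathbf{L}(c)$ as in (\ref{eq:d*Lambda(c)}), combined with properness.

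\textbf{Part (i).} Since $c\cdot(\mathbf{L}(c)-\mathbf{L}(c))=0$, we have
\begin{equation*}
L(d,c)=d\cdot \mathbf{L}(c)=c\cdot \mathbf{L}(c)-(c-d)\cdot \mathbf{L}(c)=H(c)-(c-d)\cdot \mathbf{L}(c).
\end{equation*}
Subtracting $H(d)=L(d,d)$ and using (\ref{eq:D-le}) gives the formula for $D$. For concavity, write $H(d)=L(d,d)=\min_{c\in C}L(d,c)$, which holds by properness. Each map $d\mapsto L(d,c)=d\cdot \mathbf{L}(c)$ is affine, and a pointwise minimum of affine functions is concave.

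\textbf{Parts (ii) and (iii), the upper bounds.} The key step is to express $D(d,c)$ as a single inner product. By properness, $L(d,d)=d\cdot \mathbf{L}(d)$, so
\begin{equation*}
D(d,c)=d\cdot (\mathbf{L}(c)-\mathbf{L}(d)).
\end{equation*}
Properness applied with the roles of $c$ and $d$ swapped gives $c\cdot(\mathbf{L}(d)-\mathbf{L}(c))\geq 0$. Adding this nonnegative quantity to $D(d,c)$ yields
\begin{equation*}
0\leq D(d,c)\leq (d-c)\cdot (\mathbf{L}(c)-\mathbf{L}(d))\leq \left\Vert c-d\right\Vert \,\left\Vert \mathbf{L}(c)-\mathbf{L}(d)\right\Vert ,
\end{equation*}
where the last inequality is Cauchy--Schwarz. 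Under $M$-boundedness the right-hand side is at most $M\left\Vert c-d\right\Vert$, which is (\ref{eq:L-bounded}). Under $M$-Lipschitz continuity it is at most $M\left\Vert c-d\right\Vert ^{2}$, which is (\ref{eq:L}). This symmetrization through the swapped properness inequality is the one nonroutine step, since the direct bound on $d\cdot(\mathbf{L}(c)-\mathbf{L}(d))$ loses the factor $\left\Vert c-d\right\Vert$.

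\textbf{Inequality (\ref{eq:D-D-L}).} The $L(d,d)$ terms cancel, so
\begin{equation*}
D(d,c)-D(d,c^{\prime })=L(d,c)-L(d,c^{\prime })=d\cdot (\mathbf{L}(c)-\mathbf{L}(c^{\prime })).
\end{equation*}
Since $d\in C$, we have $\left\Vert d\right\Vert \leq \left\Vert d\right\Vert _{1}=1$. Cauchy--Schwarz and the Lipschitz assumption then give the bound $M\left\Vert c-c^{\prime }\right\Vert$.
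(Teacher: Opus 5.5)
Your proof is correct and follows essentially the same route as the paper's appendix. That route has three pieces: concavity of $H$ as a pointwise minimum of the linear maps $d\mapsto d\cdot\mathbf{L}(c)$; the bounds via $D(d,c)\le D(d,c)+D(c,d)=(c-d)\cdot(\mathbf{L}(d)-\mathbf{L}(c))$ followed by Cauchy--Schwarz; and (\ref{eq:D-D-L}) via $\Vert d\Vert\le 1$. One trivial slip: $L(d,d)=d\cdot\mathbf{L}(d)$ follows from the linear extension (\ref{eq:d*Lambda(c)}), not from properness.
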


\subsection{Scores for Sequences of Forecasts\label{sus:sequences}}

We define the relevant scores for sequences of states and forecasts.
Specifically: the \emph{Brier score} is the average divergence of forecasts
from realized states; the \emph{calibration score} is the average divergence
of forecasts from the state average in all periods when that forecast is
issued; and the \emph{refinement score} is the average divergence of this
conditional state average from the realized states.

Let $t$ be the horizon; for $s=1,\ldots ,t$, let $a_{s}\in A$ be the state, $%
c_{s}\in C$ the forecast, and $i_{s}\in I$ the \textquotedblleft
bin\textquotedblright\ (for some set of bins $I$).\footnote{%
We abstract away from the specific way that the partition into bins
(\textquotedblleft binning\textquotedblright ) is determined. The \emph{%
standard} binning is by forecast: $i_{s}=c_{s}$ for all $s$.} We write $%
\mathbf{a}_{t}$ for $(a_{s})_{s=1}^{t}$ and $\mathbf{a}$ for $%
(a_{s})_{s=1}^{\infty },$ and similarly for the other sequences.\footnote{%
While for most results only $\mathbf{a}_{t},\mathbf{c}_{t},\ldots $ are
needed, we write $\mathbf{a},\mathbf{c},\ldots $ for convenience.} Assume
that the binning is a refinement of the standard binning generated by the
forecasts; i.e., all forecasts in the same bin $i$ have the same value $c$
(formally: $i_{s}=i_{r}$ implies $c_{s}=c_{r}$).\footnote{%
This allows having distinct bins with the same forecast. Formally, the
forecast is \emph{measurable} with respect to the binning.} A proper scoring
rule $L$, with corresponding divergence function $D^{L}$, generates the
following scores, which we refer to as $L$\emph{-Brier}, $L$\emph{%
-calibration}, and $L$\emph{-refinement}:\footnote{%
The $\mathcal{B}$-score depends on the forecasting sequence $\mathbf{c},$
the $\mathcal{R}$-score on the binning sequence $\mathbf{i}$, and the $%
\mathcal{K}$-score on both (of course, all of them depend on the state
sequence $\mathbf{a}$ as well). When $L$ is the standard quadratic score we
drop the superscript $L$.}%
\begin{eqnarray*}
\mathcal{B}_{t}^{L}\equiv \mathcal{B}_{t}^{L}(\mathbf{c}) &%
%TCIMACRO{\TeXButton{:=}{{\;:=\;}}}%
%BeginExpansion
{\;:=\;}%
%EndExpansion
&\frac{1}{t}\sum_{s=1}^{t}D^{L}(a_{s},c_{s}) \\
\mathcal{K}_{t}^{L}\equiv \mathcal{K}_{t}^{L}(\mathbf{c};\mathbf{i}) &%
%TCIMACRO{\TeXButton{:=}{{\;:=\;}}}%
%BeginExpansion
{\;:=\;}%
%EndExpansion
&\frac{1}{t}\sum_{s=1}^{t}D^{L}(\bar{a}_{t}(i_{s}),c_{s}) \\
\mathcal{R}_{t}^{L}\equiv \mathcal{R}_{t}^{L}(\mathbf{i}) &%
%TCIMACRO{\TeXButton{:=}{{\;:=\;}}}%
%BeginExpansion
{\;:=\;}%
%EndExpansion
&\frac{1}{t}\sum_{s=1}^{t}D^{L}(a_{s},\bar{a}_{t}(i_{s})),
\end{eqnarray*}%
where for each bin $i$ in $I$%
\begin{equation*}
n_{t}(i)\,%
%TCIMACRO{\TeXButton{:=}{{\;:=\;}}}%
%BeginExpansion
{\;:=\;}%
%EndExpansion
\,\left\vert \{s\leq t:i_{s}=i\}\right\vert
\end{equation*}%
is the number of entries in bin $i$, and, when $n_{t}(i)>0$,%
\begin{equation*}
\bar{a}_{t}(i)\,%
%TCIMACRO{\TeXButton{:=}{{\;:=\;}}}%
%BeginExpansion
{\;:=\;}%
%EndExpansion
\,\frac{1}{n_{t}(i)}\sum_{s\leq t:i_{s}=i}a_{s}
\end{equation*}%
is the state average in bin $i$. For the standard binning given by the
forecasts, i.e., when $\mathbf{i}=\mathbf{c}$, we shorten $\mathcal{K}^{L}(%
\mathbf{c};\mathbf{c})$ to $\mathcal{K}^{L}(\mathbf{c}).$

Let%
\begin{equation*}
\mathcal{H}_{t}^{L}%
%TCIMACRO{\TeXButton{:=}{{\;:=\;}}}%
%BeginExpansion
{\;:=\;}%
%EndExpansion
\frac{1}{t}\sum_{s=1}^{t}H^{L}(a_{s})=\frac{1}{t}\sum_{s=1}^{t}L(a_{s},a_{s})
\end{equation*}%
be the average $L$-entropy; by definition of $D$ we get%
\begin{equation}
\mathcal{B}_{t}^{L}(\mathbf{c})=\frac{1}{t}\sum_{s=1}^{t}L(a_{s},c_{s})-%
\mathcal{H}_{t}^{L}.  \label{eq:B=avg-L - H}
\end{equation}%
Next, summing by bins\footnote{%
Throughout, sums over bins $i$ are understood to include only nonempty bins,
i.e., bins $i$ with $n_{t}(i)>0$.} and using the linearity of $L$ in its
first argument yields%
\begin{eqnarray}
\mathcal{R}_{t}^{L}(\mathbf{i}) &=&\frac{1}{t}\sum_{s=1}^{t}L(a_{s},\bar{a}%
_{t}(i_{s}))-\mathcal{H}_{t}^{L}  \notag \\
&=&\sum_{i\in I}\left( \frac{n_{t}(i)}{t}\right) \left( \frac{1}{n_{t}(i)}%
\sum_{s\leq t:i_{s}=i}L(a_{s},\bar{a}_{t}(i))\right) -\mathcal{H}_{t}^{L} 
\notag \\
&=&\sum_{i\in I}\left( \frac{n_{t}(i)}{t}\right) L(\bar{a}_{t}(i),\bar{a}%
_{t}(i))-\mathcal{H}_{t}^{L}  \notag \\
&=&\sum_{i\in I}\left( \frac{n_{t}(i)}{t}\right) H^{L}(\bar{a}_{t}(i))-%
\mathcal{H}_{t}^{L}.  \label{eq:R-H-H}
\end{eqnarray}

The classic decomposition of the quadratic Brier score as the sum of
calibration and refinement (see the Introduction) easily generalizes to all
proper scoring rules $L$: 
\begin{equation*}
\mathcal{B}_{t}^{L}(\mathbf{c})=\mathcal{K}_{t}^{L}(\mathbf{c})+\mathcal{R}%
_{t}^{L}(\mathbf{c}).
\end{equation*}
We state this more generally, for binning sequences that may be finer than
the forecasting sequence.

\begin{theorem}
\label{th:S-decompose}Let the binning sequence $\mathbf{i}$ be a refinement
of the forecasting sequence $\mathbf{c}$; then 
\begin{equation*}
\mathcal{B}_{t}^{L}(\mathbf{c})=\mathcal{K}_{t}^{L}(\mathbf{c};\mathbf{i})+%
\mathcal{R}_{t}^{L}(\mathbf{i})
\end{equation*}%
for every proper scoring rule $L$.
\end{theorem}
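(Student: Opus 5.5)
The plan is to compute both sides through $L$ and $H^{L}$ and use linearity of $L$ in its first argument, exactly as in the derivation of (\ref{eq:R-H-H}). By (\ref{eq:B=avg-L - H}) we have $\mathcal{B}_{t}^{L}(\mathbf{c})=\frac{1}{t}\sum_{s}L(a_{s},c_{s})-\mathcal{H}_{t}^{L}$, and by (\ref{eq:R-H-H}) we have $\mathcal{R}_{t}^{L}(\mathbf{i})=\sum_{i}(n_{t}(i)/t)H^{L}(\bar{a}_{t}(i))-\mathcal{H}_{t}^{L}$. It will therefore suffice to show that
\begin{equation*}
\mathcal{K}_{t}^{L}(\mathbf{c};\mathbf{i})=\frac{1}{t}\sum_{s=1}^{t}L(a_{s},c_{s})-\sum_{i\in I}\left( \frac{n_{t}(i)}{t}\right) H^{L}(\bar{a}_{t}(i)).
\end{equation*}
Adding $\mathcal{R}_{t}^{L}(\mathbf{i})$ to this, the entropy terms cancel and $\mathcal{B}_{t}^{L}(\mathbf{c})$ remains.

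To prove the displayed identity, I write $\mathcal{K}_{t}^{L}$ using the definition (\ref{eq:D-le}) of $D$, so that $D(\bar{a}_{t}(i_{s}),c_{s})=L(\bar{a}_{t}(i_{s}),c_{s})-H^{L}(\bar{a}_{t}(i_{s}))$. I then group the periods by bin. The hypothesis that $\mathbf{i}$ refines $\mathbf{c}$ means that within bin $i$ the forecast is a constant, which I denote $c(i)$. Hence
\begin{equation*}
\frac{1}{n_{t}(i)}\sum_{s\leq t:i_{s}=i}L(\bar{a}_{t}(i),c_{s})=L(\bar{a}_{t}(i),c(i))=\frac{1}{n_{t}(i)}\sum_{s\leq t:i_{s}=i}L(a_{s},c(i)),
\end{equation*}
where the last equality uses linearity of $L(\cdot ,c)$, via (\ref{eq:d*Lambda(c)}), together with the definition of $\bar{a}_{t}(i)$. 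Summing over bins with weights $n_{t}(i)/t$ turns the first term into $\frac{1}{t}\sum_{s}L(a_{s},c_{s})$. The $H^{L}$ terms are constant within each bin, so they give $\sum_{i}(n_{t}(i)/t)H^{L}(\bar{a}_{t}(i))$, which establishes the identity.

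The only step with real content is the measurability step: the forecast has to be constant on each bin so that averaging over the bin commutes with evaluating $L(\cdot ,c_{s})$. This is exactly where the refinement hypothesis is used. If $\mathbf{i}$ did not refine $\mathbf{c}$, the average of $L(a_{s},c_{s})$ over a bin would not equal $L(\bar{a}_{t}(i),\cdot )$ at a single forecast, and the identity would fail. Properness of $L$ is not needed for the algebra itself; it is used only so that $D$, and hence the three scores, are the meaningful nonnegative quantities defined in the setup.
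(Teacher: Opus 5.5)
Your proposal is correct and is essentially the paper's proof. Both arguments group the periods by bin, use that the forecast is constant on each bin together with linearity of $L$ in its first argument, and compare with (\ref{eq:R-H-H}). The difference is only in direction: the paper rewrites $\mathcal{B}_{t}^{L}(\mathbf{c})$ bin by bin as $\sum_{i}(n_{t}(i)/t)L(\bar{a}_{t}(i),c^{i})-\mathcal{H}_{t}^{L}$ and subtracts $\mathcal{K}_{t}^{L}$. You instead convert that same bin term inside $\mathcal{K}_{t}^{L}$ back into $\frac{1}{t}\sum_{s}L(a_{s},c_{s})$.
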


\begin{proof}
Let $c^{i}$ denote the forecast in bin $i$. Summing by bins and using the
linearity of $L$ in its first argument yields 
\begin{eqnarray*}
\mathcal{B}_{t}^{L}(\mathbf{c}) &=&\sum_{i\in I}\left( \frac{n_{t}(i)}{t}%
\right) \left( \frac{1}{n_{t}(i)}\sum_{s\leq t:i_{s}=i}L(a_{s},c^{i})\right)
-\mathcal{H}_{t}^{L} \\
&=&\sum_{i\in I}\left( \frac{n_{t}(i)}{t}\right) L(\bar{a}_{t}(i),c^{i})-%
\mathcal{H}_{t}^{L},
\end{eqnarray*}%
and%
\begin{eqnarray*}
\mathcal{K}_{t}^{L}(\mathbf{c};\mathbf{i}) &=&\sum_{i}\left( \frac{n_{t}(i)}{%
t}\right) \left[ L(\bar{a}_{t}(i),c^{i})-L(\bar{a}_{t}(i),\bar{a}_{t}(i))%
\right] \\
&=&\sum_{i\in I}\left( \frac{n_{t}(i)}{t}\right) L(\bar{a}%
_{t}(i),c^{i})-\sum_{i\in I}\left( \frac{n_{t}(i)}{t}\right) H^{L}(\bar{a}%
_{t}(i)).
\end{eqnarray*}%
Subtracting gives $\mathcal{B}-\mathcal{K}=\mathcal{R}$ by (\ref{eq:R-H-H}).
\end{proof}

\medskip

As a consequence, $\mathcal{R}_{t}^{L}(\mathbf{i})$ may be viewed as the 
\emph{minimal }$L$\emph{-Brier score} subject to the binning $\mathbf{i}$,
i.e.,%
\begin{equation*}
\mathcal{R}_{t}^{L}(\mathbf{i})=\min_{\phi :I\rightarrow C}\mathcal{B}%
_{t}^{L}(\phi (\mathbf{i})),
\end{equation*}%
where $\phi (\mathbf{i})=(\phi (i_{s}))_{s\geq 1}$ (cf. (2) in Foster and
Hart 2023; see also Appendix A.10 in Foster and Hart 2023ff). This says that
among all forecasting sequences $\mathbf{c}=\phi (\mathbf{i})$ that
\textquotedblleft respect\textquotedblright\ the binning $\mathbf{i}$ (i.e.,
in all periods that are in the same bin $i$ the forecast is the same,
namely, $\phi (i)$; formally, $\mathbf{i}$ refines $\mathbf{c}$), the $L$%
-Brier score is minimal when the forecast is the state average of the bin
(i.e., $c=\phi (i)=\bar{a}_{t}(i)$). Indeed, in this case $\mathcal{K}%
_{t}^{L}=0$, and so $\mathcal{B}_{t}^{L}=\mathcal{R}_{t}^{L}$ (whereas in
general $\mathcal{B}_{t}^{L}\geq \mathcal{R}_{t}^{L}$, because $\mathcal{K}%
_{t}^{L}\geq 0$). Moreover, this minimum is attained simultaneously for 
\emph{all} proper scoring rules $L$.

\subsubsection{General Binning Sequences\label{susus:fractional}}

Following Foster and Hart (2021), we now consider general binnings for which
the allocation into bins may be fractional. Let $I$ be a finite or countably
infinite set of bins;\footnote{%
For pure binnings the number of bins is always finite (up to time $t$, it is
at most $t$).} a \emph{general binning sequence} $\mathbf{f}%
=(f_{s})_{s=1,2,\ldots }$ specifies in each period $s$ the fraction $%
f_{s}(i)\geq 0$ that is assigned to each bin $i\in I$, where $\sum_{i\in
I}f_{s}(i)=1$; thus, $f_{s}$ may be viewed as a probability distribution on $%
I$, i.e., $f_{s}\in \Delta (I)$. For example, given a fractional binning $%
\Pi =(w_{i})_{i\in I}$, where $w_{i}:C\rightarrow \lbrack 0,1]$ and $%
\sum_{i\in I}w_{i}(c)=1$ for every $c\in C$ (see Foster and Hart 2021 and
Section \ref{susus:continuous calibration} below), we put\footnote{%
The binning rule is thus time-independent: the fraction $f_{s}(i)$ depends
only on the forecast $c_{s}$ and not on the \textquotedblleft
calendar\textquotedblright\ period $s$.}$^{,}$\footnote{%
To avoid confusion, we refer to $\Pi =(w_{i})_{i\in I}$ as a
\textquotedblleft fractional binning\textquotedblright\ (as in our previous
papers), and to a sequence $\mathbf{f}=(f_{s})_{s}$ with $f_{s}\in \Delta
(I) $ as a \textquotedblleft general binning.\textquotedblright\ Thus, a
fractional binning $\Pi $ applied to a forecasting sequence $\mathbf{c}$
generates a general binning $\Pi (\mathbf{c})$.} $f_{s}(i)=w_{i}(c_{s})$.
When each $f_{s}$ is a unit vector, i.e., in each period there is a single
bin, we call the binning sequence \emph{pure}.

The definitions of calibration and refinement naturally extend to general
binnings:%
\begin{eqnarray*}
\mathcal{K}_{t}^{L}(\mathbf{c};\mathbf{f}) &%
%TCIMACRO{\TeXButton{:=}{{\;:=\;}}}%
%BeginExpansion
{\;:=\;}%
%EndExpansion
&\frac{1}{t}\sum_{i\in I}n_{t}(i)D^{L}(\bar{a}_{t}(i),\bar{c}_{t}(i)) \\
\mathcal{R}_{t}^{L}(\mathbf{f}) &%
%TCIMACRO{\TeXButton{:=}{{\;:=\;}}}%
%BeginExpansion
{\;:=\;}%
%EndExpansion
&\frac{1}{t}\sum_{i\in I}\sum_{s=1}^{t}f_{s}(i)D^{L}(a_{s},\bar{a}_{t}(i)),
\end{eqnarray*}%
where for each bin $i$ in $I$%
\begin{equation*}
n_{t}(i)\,%
%TCIMACRO{\TeXButton{:=}{{\;:=\;}}}%
%BeginExpansion
{\;:=\;}%
%EndExpansion
\,\sum_{s=1}^{t}f_{s}(i)
\end{equation*}%
is the total weight in bin $i$, and, when $n_{t}(i)>0$,%
\begin{eqnarray*}
\bar{a}_{t}(i) &%
%TCIMACRO{\TeXButton{:=}{{\;:=\;}}}%
%BeginExpansion
{\;:=\;}%
%EndExpansion
&\sum_{s=1}^{t}\left( \frac{f_{s}(i)}{n_{t}(i)}\right) a_{s} \\
\bar{c}_{t}(i) &%
%TCIMACRO{\TeXButton{:=}{{\;:=\;}}}%
%BeginExpansion
{\;:=\;}%
%EndExpansion
&\sum_{s=1}^{t}\left( \frac{f_{s}(i)}{n_{t}(i)}\right) c_{s}
\end{eqnarray*}%
are the state average and the average forecast in bin $i$.

These definitions clearly reduce to the previous definitions when the
binning sequence is pure and it refines the standard by-forecast-binning,
because then each bin contains a single forecast value. For general binning
sequences, where a bin may contain multiple forecast values, the
decomposition of Theorem \ref{th:S-decompose} no longer holds.\footnote{%
For the quadratic scoring rule there is another such decomposition, but with
a different definition of refinement, namely, as the average bin-variance of
the \emph{differences} $a_{t}-c_{t}$. Carrying this out for a general proper
scoring rule yields an average of \emph{differences between divergences}
(which is in general not a divergence). We sidestep this by using Theorem %
\ref{th:decompose-delta} below.} However, we will now show that it continues
to approximately hold for \textquotedblleft local\textquotedblright\ binning
sequences where the forecasts in each bin are close to one another and the
scoring rule is Lipschitz.

Let $\delta \geq 0$; a general binning sequence $\mathbf{f}$ is $\delta $%
\emph{-local with respect to the sequence }$\mathbf{c}$ if for each $i$
there is a closed ball $\bar{B}(y^{i};\delta )$ with center $y^{i}\in C$ and
radius $\delta $ such that $f_{s}(i)>0$ implies $\left\Vert
c_{s}-y^{i}\right\Vert \leq \delta $; i.e., all forecasts in bin $i$ lie in $%
\bar{B}(y^{i};\delta ).$ Pure binning sequences that are a refinement of the
standard by-forecast binning are thus $0$-local. The following is a
generalization of the Decomposition Theorem \ref{th:S-decompose}; for the
quadratic scoring, this is Lemma 15 in Foster and Hart (2023ff).

\begin{theorem}
\label{th:decompose-delta}Let $\delta \geq 0$; if the general binning
sequence $\mathbf{f}$ is $\delta $-local with respect to the forecasting
sequence $\mathbf{c}$, then%
\begin{equation*}
\left\vert \mathcal{B}_{t}^{L}(\mathbf{c})-\left( \mathcal{K}_{t}^{L}(%
\mathbf{c};\mathbf{f})+\mathcal{R}_{t}^{L}(\mathbf{f})\right) \right\vert
\leq 2M\delta
\end{equation*}%
for every $M$-Lipschitz proper scoring rule $L$.
\end{theorem}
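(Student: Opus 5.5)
The plan is to write all three scores as sums over bins and over periods weighted by $f_{s}(i)$, so that the entropy terms cancel exactly and only a Lipschitz correction term is left. First, since $\sum_{i\in I}f_{s}(i)=1$ for every $s$, I will rewrite (\ref{eq:B=avg-L - H}) as
\begin{equation*}
\mathcal{B}_{t}^{L}(\mathbf{c})=\frac{1}{t}\sum_{i\in I}\sum_{s=1}^{t}f_{s}(i)L(a_{s},c_{s})-\mathcal{H}_{t}^{L}.
\end{equation*}
Next I will compute $\mathcal{R}_{t}^{L}(\mathbf{f})$ exactly as in (\ref{eq:R-H-H}). Using $D^{L}(a_{s},\bar{a}_{t}(i))=L(a_{s},\bar{a}_{t}(i))-H^{L}(a_{s})$ and linearity of $L$ in its first argument, we have $\sum_{s}f_{s}(i)L(a_{s},\bar{a}_{t}(i))=n_{t}(i)H^{L}(\bar{a}_{t}(i))$. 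This gives $\mathcal{R}_{t}^{L}(\mathbf{f})=\sum_{i}(n_{t}(i)/t)H^{L}(\bar{a}_{t}(i))-\mathcal{H}_{t}^{L}$.

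Second, by definition $\mathcal{K}_{t}^{L}(\mathbf{c};\mathbf{f})=\sum_{i}(n_{t}(i)/t)\bigl[L(\bar{a}_{t}(i),\bar{c}_{t}(i))-H^{L}(\bar{a}_{t}(i))\bigr]$. Adding it to $\mathcal{R}_{t}^{L}(\mathbf{f})$, the $H^{L}(\bar{a}_{t}(i))$ terms cancel. Using linearity in the first argument once more, we get $n_{t}(i)L(\bar{a}_{t}(i),\bar{c}_{t}(i))=\sum_{s}f_{s}(i)L(a_{s},\bar{c}_{t}(i))$. Hence
\begin{equation*}
\mathcal{B}_{t}^{L}(\mathbf{c})-\left( \mathcal{K}_{t}^{L}(\mathbf{c};\mathbf{f})+\mathcal{R}_{t}^{L}(\mathbf{f})\right) =\frac{1}{t}\sum_{i\in I}\sum_{s=1}^{t}f_{s}(i)\left[ L(a_{s},c_{s})-L(a_{s},\bar{c}_{t}(i))\right] .
\end{equation*}

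Third, I will bound each bracket using (\ref{eq:D-D-L}) of Proposition \ref{p:L}(iii): it is at most $M\Vert c_{s}-\bar{c}_{t}(i)\Vert$ in absolute value. Whenever $f_{s}(i)>0$, $\delta$-locality puts $c_{s}$ in $\bar{B}(y^{i};\delta)$. The average $\bar{c}_{t}(i)$ is a convex combination of such $c_{s}$, and the ball is convex, so $\bar{c}_{t}(i)$ lies in the same ball. Therefore $\Vert c_{s}-\bar{c}_{t}(i)\Vert\leq 2\delta$. Since the total weight is $\sum_{i}\sum_{s}f_{s}(i)=t$, the right-hand side is at most $2M\delta$ in absolute value, which is the claim.

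The only delicate point is the bookkeeping for countably many bins and empty bins. Bins with $n_{t}(i)=0$ contribute nothing to any of the sums. All sums converge absolutely because $L$ is bounded on the compact set $C\times C$, so the interchanges of summation are justified.
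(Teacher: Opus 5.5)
Your proof is correct. It differs from the paper's in a small but genuine way: the anchor point you use in each bin.

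The paper replaces every $c_s$ and $\bar{c}_t(i)$ in bin $i$ by the ball center $y^i$. It then proves the exact identity $\widehat{\mathcal{B}}_t^L=\widehat{\mathcal{K}}_t^L+\mathcal{R}_t^L(\mathbf{f})$ for these modified scores. Finally it pays $M\delta$ twice: once to pass from $\widehat{\mathcal{B}}$ back to $\mathcal{B}$, and once to pass from $\widehat{\mathcal{K}}$ back to $\mathcal{K}$.

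You instead derive an exact formula for the error:
\begin{equation*}
\mathcal{B}_t^L(\mathbf{c})-\left(\mathcal{K}_t^L(\mathbf{c};\mathbf{f})+\mathcal{R}_t^L(\mathbf{f})\right)=\frac{1}{t}\sum_{i\in I}\sum_{s=1}^{t}f_s(i)\left[L(a_s,c_s)-L(a_s,\bar{c}_t(i))\right].
\end{equation*}
This is what the paper's Step 2 gives if one takes $y^i:=\bar{c}_t(i)$. With that choice $\mathcal{K}$ needs no correction. A single Lipschitz estimate, using $\Vert c_s-\bar{c}_t(i)\Vert\le 2\delta$, then gives the same constant $2M\delta$.

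What your route buys:
\begin{itemize}
\item It gives an exact expression for the discrepancy.
\item Its bound is $M$ times the weighted average of $\Vert c_s-\bar{c}_t(i)\Vert$. By the triangle inequality through $y^i$, this is never larger than the quantity the paper's two estimates bound before invoking $\le\delta$.
\item It only needs each bin's forecasts to have diameter at most $2\delta$. This is weaker than requiring them to fit in a ball of radius $\delta$, since a set of diameter $2\delta$ need not fit in such a ball.
\end{itemize}

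The paper's route, in turn, keeps the argument parallel to the exact Decomposition Theorem~\ref{th:S-decompose}, with a fixed representative forecast $y^i$ standing in for all forecasts of bin $i$.
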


The proof is given in Appendix \ref{sus-a:decompose-delta}.

\subsection{Calibration and Calibeating\label{sus:calib defs}}

We briefly recall the definitions of calibration, continuous calibration,
and calibeating; see Foster and Hart (2021, 2023) for details and
discussions.

A (stochastic) \emph{forecasting procedure} $\sigma $ is a mapping $\sigma
:\cup _{t\geq 1}(A^{t-1}\times C^{t-1})\rightarrow \Delta (C)$; i.e., to
each history $(\mathbf{a}_{t-1},\mathbf{c}_{t-1})$ of states and forecasts
before time $t$ the procedure $\sigma $ assigns a probability distribution $%
\sigma (\mathbf{a}_{t-1},\mathbf{c}_{t-1})$ on $C$, whose realization is the
forecast $c_{t}\in C$. The procedure $\sigma $ is \emph{deterministic} if
all these probability distributions are pure (i.e., the support of each $%
\sigma (\mathbf{a}_{t-1},\mathbf{c}_{t-1})$ consists of a single point $%
c_{t} $ in $C$);\footnote{%
To avoid confusion, we note that a deterministic procedure yields a single
forecast each period, but says nothing on whether that forecast is pure
(i.e., puts probability $1$ on an action $a\in A$) or mixed.} thus, $\sigma
:\cup _{t\geq 1}(A^{t-1}\times C^{t-1})\rightarrow C$. The procedure $\sigma 
$ is $\delta $\emph{-deterministic} for some $\delta >0$ if the support of
each $\sigma (\mathbf{a}_{t-1},\mathbf{c}_{t-1})$ is included in some ball
of radius $\delta $.

We will always denote the forecasting sequence of our procedure by $\mathbf{c%
}=(c_{t})_{t\geq 1}$.

\subsubsection{Calibration\label{susus:calibration}}

Let $\varepsilon \geq 0;$ a forecasting procedure $\sigma $ is $\varepsilon $%
\emph{-calibrated} (Foster and Vohra 1998) if\footnote{%
The reason for $\varepsilon ^{2}$ on the right-hand side is that the
calibration score is based on squared distances $\left\Vert \bar{a}%
-c\right\Vert ^{2}$; see footnote 7 in Foster and Hart (2023).}%
\begin{equation*}
\varlimsup_{t\rightarrow \infty }\left( \sup_{\mathbf{a}_{t}}\mathbb{E}\left[
\mathcal{K}_{t}(\mathbf{c})\right] \right) \leq \varepsilon ^{2}
\end{equation*}%
(here, and in the sequel, the expectation $\mathbb{E}$ is taken over the
random forecasts of $\sigma $).

\subsubsection{Continuous Calibration\label{susus:continuous calibration}}

A \emph{fractional binning} $\Pi =(w_{i})_{i\in I}$ is a finite or countably
infinite collection of weight functions $w_{i}:C\rightarrow \lbrack 0,1]$
such that $\sum_{i\in I}w_{i}(c)=1$ for all $c\in C$. Thus, when the
forecast is $c$, the fraction $w_{i}(c)$ goes into bin $i$, and $\Pi
(c):=(w_{i}(c))_{i\in I}\in \Delta (I)$ may be viewed as a probability
distribution over $I$. A \emph{continuous} binning $\Pi =(w_{i})_{i\in I}$
is a fractional binning where all the functions $w_{i}$ are continuous
functions on $C$. A forecasting sequence $\mathbf{c}$ generates a general
binning $\Pi (\mathbf{c})=(\Pi (c_{t}))_{t\geq 1}$ (i.e., at time $t,$ the
fraction that goes into $i$ is $\left( \Pi (c_{t})\right) _{i}=w_{i}(c_{t})$%
).

A deterministic forecasting procedure $\sigma $ is \emph{continuously
calibrated} (Foster and Hart 2021) if\footnote{%
Since continuous calibration can always be obtained by a deterministic
procedure, for which the corresponding calibration score converges to $0$,
we dispense with the expectation $\mathbb{E}$ and consider only $\varepsilon
=0$.}%
\begin{equation}
\lim_{t\rightarrow \infty }\left( \sup_{\mathbf{a}_{t}}\mathcal{K}_{t}(%
\mathbf{c};\Pi (\mathbf{c}))\right) =0  \label{eq:cont-calib}
\end{equation}%
for every continuous binning $\Pi $. Proposition 3 in Foster and Hart (2021)
and Proposition 12 in Foster and Hart (2023ff) show that it suffices to
require (\ref{eq:cont-calib}) for one specific continuous binning
(respectively, $\Pi _{0}$ and $\Pi ^{\ast }$); thus, $\sigma $ is
continuously calibrated if and only if (\ref{eq:cont-calib}) holds for $\Pi
=\Pi _{0}$ or for $\Pi =\Pi ^{\ast }$.

\subsubsection{Calibeating\label{susus:calibeating}}

Let $B$ be an arbitrary set, and $\mathbf{b}=(b_{t})_{t\geq 1}$ a sequence
of reference \textquotedblleft forecasts\textquotedblright\ $b_{t}$ in $B$.
We assume that in each period $t$ the forecast $b_{t}$ is announced before
the forecast $c_{t}$ is provided; thus, the distribution of $c_{t}$ may
depend on the past history $h_{t-1}=(\mathbf{a}_{t-1},\mathbf{c}_{t-1},%
\mathbf{b}_{t-1})$ as well as the current period's $b_{t}$. A $\mathbf{b}$%
\emph{-based forecasting procedure} $\zeta $ is a mapping\emph{\ }\linebreak 
$\zeta :\cup _{t\geq 1}(A^{t-1}\times C^{t-1}\times B^{t})\rightarrow \Delta
(C)$.

Let $\varepsilon \geq 0;$ a $\mathbf{b}$-based forecasting procedure $\zeta $
is $(\varepsilon ,B)$\emph{-calibeating} (Foster and Hart 2023) if%
\begin{equation}
\varlimsup_{t\rightarrow \infty }\left( \sup_{\mathbf{a}_{t}\in A^{t},%
\mathbf{b}_{t}\in B^{t}}\mathbb{E}\left[ \mathcal{B}_{t}(\mathbf{c})\mathbf{-%
}\mathcal{R}_{t}(\mathbf{b})\right] \right) \leq \varepsilon ^{2}.
\label{eq:calibeat-def-R1}
\end{equation}%
When $B\subseteq C$, and so the sequence $\mathbf{b}$ consists of forecasts
on $A$, calibeating yields (ignoring the error terms and the expectation) 
\begin{equation*}
\mathcal{B}_{t}(\mathbf{c})\leq \mathcal{R}_{t}(\mathbf{b})=\mathcal{B}_{t}(%
\mathbf{b})\mathbf{-}\mathcal{K}_{t}(\mathbf{b}).
\end{equation*}%
This means that the forecasting sequence $\mathbf{c}$ does not merely
achieve a lower Brier score than the reference sequence $\mathbf{b}$; it
\textquotedblleft beats\textquotedblright\ it by an amount that is at least $%
\mathbf{b}$'s own calibration score $\mathcal{K}_{t}(\mathbf{b})$ (hence our
coining of the term \textquotedblleft calibeating\textquotedblright ). As we
show in Foster and Hart (2023), the refinement score $\mathcal{R}_{t}(%
\mathbf{b})$ captures the \textquotedblleft expertise\textquotedblright\ of $%
\mathbf{b}$, measured by how effectively it partitions different time
periods into bins. Informally, $\mathbf{c}$ gains the calibration of $%
\mathbf{b}$ without sacrificing its expertise.

\subsubsection{Complete Calibeating\label{susus:complete calibeating}}

In Foster and Hart (2023, Section 6) we showed that a stronger form of
calibeating, namely, calibeating the joint binning, guarantees also that the
calibeating sequence is itself calibrated. We will see below that this
notion is important in its own right, and we will refer to it as
\textquotedblleft complete\textquotedblright\ calibeating (see Section \ref%
{sus:joint} and particularly Section \ref{susus:u-subsuming}, which
justifies the terminology: complete calibeating subsumes all the content
about states embodied in the reference sequence, and is thus the
\textquotedblleft ultimate calib-beating\textquotedblright ).

Let $\varepsilon \geq 0;$ a $\mathbf{b}$-based forecasting procedure $\zeta $
is \emph{complete }$(\varepsilon ,B)$-\emph{calibeating} if it $(\varepsilon
,B)$-calibeats the joint $\mathbf{b}\times \mathbf{c}$, i.e.,%
\begin{equation}
\varlimsup_{t\rightarrow \infty }\left( \sup_{\mathbf{a}_{t}\in A^{t},%
\mathbf{b}_{t}\in B^{t}}\mathbb{E}\left[ \mathcal{B}_{t}(\mathbf{c})\mathbf{-%
}\mathcal{R}_{t}(\mathbf{b}\times \mathbf{c})\right] \right) \leq
\varepsilon ^{2}.  \label{eq:complete-calib-def}
\end{equation}%
Since $\mathcal{R}_{t}(\mathbf{b}\times \mathbf{c})\leq \mathcal{R}_{t}(%
\mathbf{b})$ (the refinement score can only decrease when refining the
binning; see Appendix A.4 in Foster and Hart 2023, and also Proposition \ref%
{p:refine-S} below), complete $(\varepsilon ,B)$\emph{-}calibeating implies $%
(\varepsilon ,B)$\emph{-}calibeating; since $\mathcal{K}_{t}(\mathbf{c})=%
\mathcal{B}_{t}(\mathbf{c})-\mathcal{R}_{t}(\mathbf{c})\leq \mathcal{B}_{t}(%
\mathbf{c})-\mathcal{R}_{t}(\mathbf{b}\times \mathbf{c})$, it implies $%
\varepsilon $-calibration. Moreover, by the decomposition theorem we have $%
\mathcal{B}_{t}(\mathbf{c})\mathbf{-}\mathcal{R}_{t}(\mathbf{b}\times 
\mathbf{c})=\mathcal{K}_{t}(\mathbf{c};\mathbf{b}\times \mathbf{c})$, and so:

\begin{proposition}
\label{p:complete=calibrated-joint}A $\mathbf{b}$-based forecasting
procedure $\zeta $ is complete $(\varepsilon ,B)$-calibeating if and only if
it is $\varepsilon $-calibrated with respect to the joint binning $\mathbf{b}%
\times \mathbf{c}$, i.e., 
\begin{equation*}
\varlimsup_{t\rightarrow \infty }\left( \sup_{\mathbf{a}_{t}\in A^{t},%
\mathbf{b}_{t}\in B^{t}}\mathbb{E}\left[ \mathcal{K}_{t}(\mathbf{c};\mathbf{b%
}\times \mathbf{c})\right] \right) \leq \varepsilon ^{2}.
\end{equation*}
\end{proposition}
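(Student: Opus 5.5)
The plan is to reduce the statement to a pathwise identity between the two random quantities appearing in the definitions. Fix a horizon $t$ and a realization of the sequences $\mathbf{a}_{t}$, $\mathbf{b}_{t}$, $\mathbf{c}_{t}$, and consider the joint binning $\mathbf{b}\times \mathbf{c}$, whose bins are the pairs $(b_{s},c_{s})$. This binning is pure, and it refines the standard by-forecast binning of $\mathbf{c}$: if two periods are in the same joint bin then in particular $c_{s}=c_{r}$. Hence Theorem \ref{th:S-decompose}, applied with $L$ the quadratic scoring rule and $\mathbf{i}=\mathbf{b}\times \mathbf{c}$, gives
\begin{equation*}
\mathcal{B}_{t}(\mathbf{c})=\mathcal{K}_{t}(\mathbf{c};\mathbf{b}\times \mathbf{c})+\mathcal{R}_{t}(\mathbf{b}\times \mathbf{c}),
\end{equation*}
that is, $\mathcal{B}_{t}(\mathbf{c})-\mathcal{R}_{t}(\mathbf{b}\times \mathbf{c})=\mathcal{K}_{t}(\mathbf{c};\mathbf{b}\times \mathbf{c})$ for every realization.

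Since this is an identity between random variables, holding pointwise on every history, I will then take expectations over the randomness of $\zeta$. The two expectations $\mathbb{E}[\mathcal{B}_{t}(\mathbf{c})-\mathcal{R}_{t}(\mathbf{b}\times \mathbf{c})]$ and $\mathbb{E}[\mathcal{K}_{t}(\mathbf{c};\mathbf{b}\times \mathbf{c})]$ coincide for every $\mathbf{a}_{t}$ and $\mathbf{b}_{t}$. Taking the supremum over $\mathbf{a}_{t}\in A^{t}$ and $\mathbf{b}_{t}\in B^{t}$, and then the $\varlimsup$ as $t\rightarrow \infty$, shows that the left-hand side of (\ref{eq:complete-calib-def}) equals the left-hand side of the displayed condition in the proposition. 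Therefore one is $\leq \varepsilon ^{2}$ if and only if the other is.

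The only point requiring care is a matter of interpretation rather than a genuine obstacle. The reference sequence $\mathbf{b}$ takes values in an arbitrary set $B$, not necessarily in $C$, and the forecasts $c_{s}$ are random. I will note that the joint bin label $(b_{s},c_{s})$ is a well-defined element of $B\times C$ regardless of what $B$ is. Theorem \ref{th:S-decompose} needs only that the binning refines the forecasts, not that the bins be forecasts themselves. Also, the bins are formed after the randomness is realized, so the decomposition is applied path by path before any expectation is taken.
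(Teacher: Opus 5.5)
Your proposal is correct and follows essentially the paper's argument. The joint binning $\mathbf{b}\times\mathbf{c}$ refines the $\mathbf{c}$-binning, so the Decomposition Theorem~\ref{th:S-decompose} with the quadratic rule gives the pathwise identity $\mathcal{B}_{t}(\mathbf{c})-\mathcal{R}_{t}(\mathbf{b}\times\mathbf{c})=\mathcal{K}_{t}(\mathbf{c};\mathbf{b}\times\mathbf{c})$, after which the two defining conditions coincide under $\mathbb{E}$, $\sup$, and $\varlimsup$.
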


While $\mathbf{c}$ being calibrated means (ignoring error terms) that the
state average $\bar{a}(c)$ of each $c$-bin is equal to $c$, being calibrated
on the joint binning $\mathbf{b}\times \mathbf{c}$ means that the state
average $\bar{a}(b,c)$ in each $(b,c)$-subbin of the $c$-bin is also equal
to $c$. This is exactly what complete calibeating entails.

\section{\textquotedblleft Proper\textquotedblright\ Concepts\label{s:proper}%
}

We define a procedure as \textquotedblleft proper\textquotedblright
-calibrated if it is calibrated with respect to every bounded proper scoring
rule $L$; that is, its $L$-calibration score converges to zero as the
horizon increases. Moreover, we require uniformity in the scoring rule $L$.
Since multiplying a scoring rule by $\lambda >0$ multiplies all scores by $%
\lambda $, achieving uniform convergence requires normalizing the scoring
rules. A convenient normalization is to divide by the bounding constant and
so obtain $1$-bounded scoring rules; for Lipschitz proper scoring rules, we
divide by the Lipschitz constant to obtain $1$-Lipschitz scoring rules.
Proper calibeating will be defined similarly.

Let $\mathcal{L}$ denote the class of all bounded proper scoring rules, and $%
\mathcal{L}_{1}$ the subclass of $1$-bounded proper scoring rules; let $%
\mathcal{L}^{\mathrm{Li}}$ denote the class of all Lipschitz proper scoring
rules, and $\mathcal{L}_{1}^{\mathrm{Li}}$ the subclass of $1$-Lipschitz
proper scoring rules. We will say that a procedure is \emph{(uniformly)
proper calibrated/proper calibeating} if the corresponding guarantee holds
simultaneously for all scoring rules in $\mathcal{L}_{1}$, and \emph{%
(uniformly) proper-}Li\emph{-calibrated/proper-}Li\emph{-calibeating} if it
holds simultaneously for all scoring rules in $\mathcal{L}_{1}^{\mathrm{Li}}$
(for brevity we will usually drop the term \textquotedblleft
uniform\textquotedblright ). The formal definitions are as follows, for $%
\varepsilon \geq 0$ (when\footnote{%
An asymptotic $\varepsilon =0$ error can always be achieved by the so-called
\textquotedblleft doubling trick,\textquotedblright\ whereby one lowers $%
\varepsilon $ with time; see Cesa-Bianchi and Lugosi (2006).} $\varepsilon
=0 $ we say \textquotedblleft proper calibrated\textquotedblright\ instead
of \textquotedblleft proper $0$-calibrated,\textquotedblright\ and so on):

\begin{itemize}
\item A forecasting procedure is \emph{(uniformly) proper }$\varepsilon $-%
\emph{calibrated} (following Foster and Vohra 1998) if%
\begin{equation*}
\varlimsup_{t\rightarrow \infty }\left( \sup_{L\in \mathcal{L}_{1}}\sup_{%
\mathbf{a}_{t}\in A^{t}}\mathbb{E}\left[ \mathcal{K}_{t}^{L}(\mathbf{c})%
\right] \right) \leq \varepsilon ^{2},
\end{equation*}%
and is \emph{(uniformly) proper-Lipschitz }$\varepsilon $-\emph{calibrated},
or \emph{proper-Li }$\varepsilon $-\emph{calibrated}, if%
\begin{equation*}
\varlimsup_{t\rightarrow \infty }\left( \sup_{L\in \mathcal{L}_{1}^{\mathrm{%
Li}}}\sup_{\mathbf{a}_{t}\in A^{t}}\mathbb{E}\left[ \mathcal{K}_{t}^{L}(%
\mathbf{c})\right] \right) \leq \varepsilon ^{2}.
\end{equation*}

\item A deterministic forecasting procedure $\sigma $ is \emph{(uniformly)
proper continuously-calibrated} (following Foster and Hart 2021) if%
\begin{equation*}
\lim_{t\rightarrow \infty }\left( \sup_{L\in \mathcal{L}_{1}}\sup_{\mathbf{a}%
_{t}\in A^{t}}\mathcal{K}_{t}^{L}(\mathbf{c};\Pi (\mathbf{c}))\right) =0
\end{equation*}%
for every continuous binning $\Pi $, and is \emph{(uniformly)
proper-Lipschitz continuously-calibrated}, or \emph{proper-Li
continuously-calibrated}, if%
\begin{equation*}
\lim_{t\rightarrow \infty }\left( \sup_{L\in \mathcal{L}_{1}^{\mathrm{Li}%
}}\sup_{\mathbf{a}_{t}\in A^{t}}\mathcal{K}_{t}^{L}(\mathbf{c};\Pi (\mathbf{c%
}))\right) =0
\end{equation*}%
for every continuous binning $\Pi $.

\item Let $B$ be a finite set; a $\mathbf{b}$-based forecasting procedure $%
\sigma $ is \emph{(uniformly) proper }$(\varepsilon ,B)$-\emph{calibeating }%
(following Foster and Hart 2023) if\footnote{%
Here the term $\mathcal{R}_{t}^{L}(\mathbf{b})$ may be taken out of the
expectation (which is over the randomizations of $\sigma $).}%
\begin{equation*}
\varlimsup_{t\rightarrow \infty }\left( \sup_{L\in \mathcal{L}_{1}}\sup_{%
\mathbf{a}_{t}\in A^{t},\mathbf{b}_{t}\in B^{t}}\mathbb{E}\left[ \mathcal{B}%
_{t}^{L}(\mathbf{c})-\mathcal{R}_{t}^{L}(\mathbf{b})\right] \right) \leq
\varepsilon ^{2};
\end{equation*}%
is \emph{(uniformly) proper complete }$(\varepsilon ,B)$-\emph{calibeating }%
if%
\begin{equation*}
\varlimsup_{t\rightarrow \infty }\left( \sup_{L\in \mathcal{L}_{1}}\sup_{%
\mathbf{a}_{t}\in A^{t},\mathbf{b}_{t}\in B^{t}}\mathbb{E}\left[ \mathcal{B}%
_{t}^{L}(\mathbf{c})-\mathcal{R}_{t}^{L}(\mathbf{b}\times \mathbf{c})\right]
\right) \leq \varepsilon ^{2};
\end{equation*}%
and is \emph{(uniformly) proper-Lipschitz }$(\varepsilon ,B)$\emph{%
-calibeating}, or \emph{proper-Li }$(\varepsilon ,B)$-\emph{calibeating}, if%
\begin{equation*}
\varlimsup_{t\rightarrow \infty }\left( \sup_{L\in \mathcal{L}_{1}^{\mathrm{%
Li}}}\sup_{\mathbf{a}_{t}\in A^{t},\mathbf{b}_{t}\in B^{t}}\mathbb{E}\left[ 
\mathcal{B}_{t}^{L}(\mathbf{c})-\mathcal{R}_{t}^{L}(\mathbf{b})\right]
\right) \leq \varepsilon ^{2}.
\end{equation*}
\end{itemize}

For proper complete calibeating, one may replace $\mathbb{E}\left[ \mathcal{B%
}_{t}^{L}(\mathbf{c})-\mathcal{R}_{t}^{L}(\mathbf{b}\times \mathbf{c})\right]
$ with\linebreak\ $\mathbb{E}\left[ \mathcal{K}_{t}^{L}(\mathbf{c};\mathbf{b}%
\times \mathbf{c})\right] $ (by the Decomposition Theorem \ref%
{th:S-decompose}; cf. Proposition \ref{p:complete=calibrated-joint} for the
quadratic scoring). The simple terms \emph{calibration} and \emph{calibeating%
} will from now on refer to these notions with respect to the quadratic
scoring rule\ only.

Proper $\varepsilon $-calibration thus implies that $\varlimsup_{t%
\rightarrow \infty }\sup_{\mathbf{a}_{t}}\mathbb{E}\left[ \mathcal{K}%
_{t}^{L}(\mathbf{c})\right] \leq M\varepsilon ^{2}$ for every $M$-bounded
proper scoring rule $L$ (and similarly for the other concepts).

\section{Proper Calibration\label{s:proper calibration}}

We show that standard calibration always implies proper calibration.

\begin{theorem}
\label{th:proper-calibration}If a procedure is $\varepsilon $-calibrated
then it is proper $\sqrt{\varepsilon }$-calibrated and \emph{proper}-\emph{%
Li }$\varepsilon $-calibrated, and if it is continuously calibrated then it
is proper continuously-calibrated.
\end{theorem}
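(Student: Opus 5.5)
The plan is to bound the $L$-calibration score pointwise, bin by bin, by the quadratic one, using Proposition \ref{p:L}, and then pass to expectations with Jensen's inequality.

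\emph{Bounded case.} Fix $L\in\mathcal{L}_1$. By (\ref{eq:L-bounded}), $D^L(\bar a_t(i),c^i)\le \|c^i-\bar a_t(i)\|$ for each nonempty bin $i$ of the standard binning. Averaging with weights $n_t(i)/t$ and applying Cauchy--Schwarz (or concavity of $\sqrt{\cdot}$) to the probability weights gives
\begin{equation*}
\mathcal{K}^L_t(\mathbf{c})\le \sum_i \frac{n_t(i)}{t}\,\|c^i-\bar a_t(i)\|\le \Bigl(\sum_i \frac{n_t(i)}{t}\,\|c^i-\bar a_t(i)\|^2\Bigr)^{1/2}=\sqrt{\mathcal{K}_t(\mathbf{c})}.
\end{equation*}
This bound is uniform in $L\in\mathcal{L}_1$. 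Taking expectations and using Jensen again, $\mathbb{E}[\mathcal{K}^L_t]\le\sqrt{\mathbb{E}[\mathcal{K}_t]}$. Hence
\begin{equation*}
\varlimsup_t \sup_L\sup_{\mathbf{a}_t}\mathbb{E}[\mathcal{K}^L_t]\le\sqrt{\varepsilon^2}=\varepsilon=(\sqrt\varepsilon)^2,
\end{equation*}
which is exactly proper $\sqrt\varepsilon$-calibration. The same computation must be carried out with the general binning $\Pi(\mathbf{c})$, where bins are weighted by $n_t(i)/t$ and $c^i$ is replaced by $\bar c_t(i)$. Since the definition of $\mathcal{K}^L_t(\mathbf{c};\mathbf{f})$ is again a weighted average of $D^L(\bar a_t(i),\bar c_t(i))$ with weights summing to $1$, the inequality goes through verbatim: $\mathcal{K}^L_t(\mathbf{c};\Pi(\mathbf{c}))\le\sqrt{\mathcal{K}_t(\mathbf{c};\Pi(\mathbf{c}))}$. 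This yields proper continuous calibration directly from (\ref{eq:cont-calib}) (no expectation is needed, since the procedure is deterministic).

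\emph{Lipschitz case.} For $L\in\mathcal{L}^{\mathrm{Li}}_1$, (\ref{eq:L}) gives $D^L(\bar a,c)\le\|c-\bar a\|^2$ termwise. Averaging yields $\mathcal{K}^L_t(\mathbf{c})\le\mathcal{K}_t(\mathbf{c})$ pointwise and uniformly in $L$, so proper-Li $\varepsilon$-calibration follows immediately.

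The only point needing care is the square-root step. The weights $n_t(i)/t$ must sum to $1$, which holds for general binnings because $\sum_i f_s(i)=1$. We also need the supremum over $L$ to be taken before the limit, and this is fine because the bound does not depend on $L$. Everything else is routine, so I expect no real obstacle beyond checking that Proposition \ref{p:L}(ii) applies with the averaged forecast $\bar c_t(i)\in C$, which it does since $C$ is convex.
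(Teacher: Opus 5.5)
Your proposal is correct and follows essentially the same route as the paper. The paper first proves the pointwise bounds $\mathcal{K}^L_t(\mathbf{c};\mathbf{f})\le M\sqrt{\mathcal{K}_t(\mathbf{c};\mathbf{f})}$ and $\mathcal{K}^L_t(\mathbf{c};\mathbf{f})\le M\,\mathcal{K}_t(\mathbf{c};\mathbf{f})$ for general binnings (Proposition~\ref{p:K-KS}, via Proposition~\ref{p:L} and Cauchy--Schwarz), and then applies Jensen's inequality, $\mathbb{E}[\sqrt{\mathcal{K}_t}]\le\sqrt{\mathbb{E}[\mathcal{K}_t]}$, exactly as you do.
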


\medskip

Thus, an $\varepsilon $-calibrated procedure guarantees $\varlimsup_{t%
\rightarrow \infty }\sup_{\mathbf{a}_{t}}\mathbb{E}\left[ \mathcal{K}%
_{t}^{L}(\mathbf{c})\right] \leq M\varepsilon $ for every $M$-bounded proper
scoring rule $L$, and $\varlimsup_{t\rightarrow \infty }\sup_{\mathbf{a}_{t}}%
\mathbb{E}\left[ \mathcal{K}_{t}^{L}(\mathbf{c})\right] \leq M\varepsilon
^{2}$ for every $M$-Lipschitz proper scoring rule $L$. The theorem is an
immediate consequence of the following:

\begin{proposition}
\label{p:K-KS}Let $\mathbf{c}$ be a forecasting sequence, $\mathbf{f}$ a
general binning sequence, and $L$ a proper scoring rule. If $L$ is $M$%
-bounded then%
\begin{equation*}
\mathcal{K}_{t}^{L}(\mathbf{c};\mathbf{f})\leq M\sqrt{\mathcal{K}_{t}(%
\mathbf{c};\mathbf{f})},
\end{equation*}%
and if $L$ is $M$-Lipschitz then%
\begin{equation*}
\mathcal{K}_{t}^{L}(\mathbf{c};\mathbf{f})\leq M\,\mathcal{K}_{t}(\mathbf{c};%
\mathbf{f}).
\end{equation*}
\end{proposition}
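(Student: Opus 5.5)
The plan is to work bin by bin and then aggregate. By definition,
\begin{equation*}
\mathcal{K}_{t}^{L}(\mathbf{c};\mathbf{f})=\sum_{i\in I}\frac{n_{t}(i)}{t}\,D^{L}(\bar{a}_{t}(i),\bar{c}_{t}(i)),\qquad \mathcal{K}_{t}(\mathbf{c};\mathbf{f})=\sum_{i\in I}\frac{n_{t}(i)}{t}\,\Vert \bar{c}_{t}(i)-\bar{a}_{t}(i)\Vert ^{2}.
\end{equation*}
The weights $p_{i}:=n_{t}(i)/t$ are nonnegative and sum to $1$, because $\sum_{i}n_{t}(i)=\sum_{s\leq t}\sum_{i}f_{s}(i)=t$. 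So both scores are averages over bins with respect to the probability distribution $(p_{i})_{i\in I}$ on $I$. Only nonempty bins enter, so $\bar{a}_{t}(i)$ and $\bar{c}_{t}(i)$ are well-defined points of $C$, being convex combinations of points of $C$.

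\emph{Lipschitz case.} By (\ref{eq:L}) of Proposition \ref{p:L}(iii), each term satisfies $D^{L}(\bar{a}_{t}(i),\bar{c}_{t}(i))\leq M\Vert \bar{c}_{t}(i)-\bar{a}_{t}(i)\Vert ^{2}$. Multiplying by $p_{i}$ and summing over $i$ gives $\mathcal{K}_{t}^{L}(\mathbf{c};\mathbf{f})\leq M\,\mathcal{K}_{t}(\mathbf{c};\mathbf{f})$ directly.

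\emph{Bounded case.} By (\ref{eq:L-bounded}) of Proposition \ref{p:L}(ii), each term satisfies $D^{L}(\bar{a}_{t}(i),\bar{c}_{t}(i))\leq M\Vert \bar{c}_{t}(i)-\bar{a}_{t}(i)\Vert $. Hence $\mathcal{K}_{t}^{L}\leq M\sum_{i}p_{i}x_{i}$ with $x_{i}:=\Vert \bar{c}_{t}(i)-\bar{a}_{t}(i)\Vert $. The Cauchy--Schwarz (or Jensen) inequality for the probability weights $p_{i}$ gives $\sum_{i}p_{i}x_{i}\leq (\sum_{i}p_{i}x_{i}^{2})^{1/2}=\sqrt{\mathcal{K}_{t}(\mathbf{c};\mathbf{f})}$. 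This is exactly the claimed bound. If $I$ is countably infinite, all series have nonnegative terms and converge, since the $x_{i}$ are bounded by the diameter of $C$; so Cauchy--Schwarz applies without change.

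The argument has essentially no obstacle once Proposition \ref{p:L} is available. The only points needing care are confirming that the general-binning definitions put the averaged forecast $\bar{c}_{t}(i)$, not the individual $c_{s}$, into the divergence, so that the per-bin bounds apply verbatim, and checking that the weights $p_{i}$ form a probability vector, which is what makes Jensen's inequality go the right way.

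For Theorem \ref{th:proper-calibration}, one then takes expectations and uses concavity of the square root: $\mathbb{E}[\sqrt{\mathcal{K}_{t}}]\leq \sqrt{\mathbb{E}[\mathcal{K}_{t}]}$. The bounds above are uniform over $L\in \mathcal{L}_{1}$ (respectively $\mathcal{L}_{1}^{\mathrm{Li}}$) and over $\mathbf{a}_{t}$, so the suprema pass through.
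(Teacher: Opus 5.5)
Your proposal is correct and follows essentially the same route as the paper: bound each bin's divergence via Proposition~\ref{p:L}(ii) or (iii), then aggregate with the weights $n_{t}(i)/t$, using Cauchy--Schwarz (equivalently Jensen) in the bounded case. Your extra remarks on $\bar{c}_{t}(i)\in C$ and on countably many bins are fine refinements of the same argument.
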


\begin{proof}
For an $M$-bounded scoring rule $L$, Proposition \ref{p:L}(ii) yields 
\begin{eqnarray*}
\mathcal{K}_{t}^{L}(\mathbf{c};\mathbf{f}) &=&\sum_{i\in I}\left( \frac{%
n_{t}(i)}{t}\right) D^{L}(\bar{a}_{t}(i),\bar{c}_{t}(i))\leq \sum_{i\in
I}\left( \frac{n_{t}(i)}{t}\right) M\left\Vert \bar{a}_{t}(i)-\bar{c}%
_{t}(i)\right\Vert \\
&=&M\sum_{i\in I}\sqrt{\frac{n_{t}(i)}{t}}\left( \sqrt{\frac{n_{t}(i)}{t}}%
\left\Vert \bar{a}_{t}(i)-\bar{c}_{t}(i)\right\Vert \right) \\
&\leq &M\left( \sum_{i\in I}\frac{n_{t}(i)}{t}\right) ^{1/2}\left(
\sum_{i\in I}\frac{n_{t}(i)}{t}\left\Vert \bar{a}_{t}(i)-\bar{c}%
_{t}(i)\right\Vert ^{2}\right) ^{1/2}=M\sqrt{\mathcal{K}_{t}(\mathbf{c};%
\mathbf{f})}
\end{eqnarray*}%
(we have used the Cauchy--Schwarz inequality and $\sum_{i}n_{t}(i)/t=1$).

For an $M$-Lipschitz scoring rule $L$, Proposition \ref{p:L}(iii) yields%
\begin{eqnarray*}
\mathcal{K}_{t}^{L}(\mathbf{c};\mathbf{f}) &=&\sum_{i\in I}\left( \frac{%
n_{t}(i)}{t}\right) D^{L}(\bar{a}_{t}(i),\bar{c}_{t}(i))\leq \sum_{i\in
I}\left( \frac{n_{t}(i)}{t}\right) M\left\Vert \bar{a}_{t}(i)-\bar{c}%
_{t}(i)\right\Vert ^{2} \\
&=&M\,\mathcal{K}_{t}(\mathbf{c};\mathbf{f}).
\end{eqnarray*}
\end{proof}

\medskip

\begin{proof}[Proof of Theorem \protect\ref{th:proper-calibration}]
By Proposition \ref{p:K-KS}: for every $1$-bounded proper scoring rule $L$
we have $\mathbb{E}\left[ \mathcal{K}_{t}^{L}(\mathbf{c})\right] \leq 
\mathbb{E}\left[ \sqrt{\mathcal{K}_{t}(\mathbf{c})}\right] \leq \sqrt{%
\mathbb{E}\left[ \mathcal{K}_{t}(\mathbf{c})\right] }$, and $\mathcal{K}%
_{t}^{L}(\mathbf{c};\Pi (\mathbf{c}))\leq \sqrt{\mathcal{K}_{t}(\mathbf{c}%
;\Pi (\mathbf{c}))}$ for every continuous binning $\Pi $; and for every $1$%
-Lipschitz proper scoring rule $L$ we have $\mathcal{K}_{t}^{L}(\mathbf{c}%
)\leq \mathcal{K}_{t}(\mathbf{c})$.
\end{proof}

\medskip

For $\varepsilon =0$ we thus have%
\begin{equation*}
\begin{tabular}{|c|}
\hline
$\text{calibration\ }\Longleftrightarrow \;\text{proper calibration}$ \\ 
\hline
\end{tabular}%
\end{equation*}

The existing results in the literature thus yield stochastic procedures that
are proper $\varepsilon $-calibrated, and deterministic procedures that are
proper continuously-calibrated. For instance, from Theorem 4 of Foster and
Hart (2023) (with $C=\Delta (A)$, and thus $\gamma ^{2}=\max_{c,c^{\prime
}\in C}\left\Vert c-c^{\prime }\right\Vert ^{2}=2$; see also Theorem 11 (S)
of Foster and Hart 2021) we get:

\begin{theorem}
\label{th:calibration S}Let $\delta >0$ and let $C_{\delta }\subset C$ be a
finite $\delta $-grid of $C$. Then there exists a stochastic $C_{\delta }$%
-forecasting procedure $\sigma $ that is proper $\sqrt{\delta }$-calibrated;
specifically, 
\begin{equation*}
\mathbb{E}\left[ \mathcal{K}_{t}^{L}(\mathbf{c})\right] \leq \left( \delta
^{2}+2|C_{\delta }|\frac{\ln t+1}{t}\right) ^{1/2}
\end{equation*}%
for all $t\geq 1$, all sequences $\mathbf{a}_{t}\in A^{t}$, and all $1$%
-bounded proper scoring rules $L$ (i.e., $L\in \mathcal{L}_{1}$). Moreover, $%
\sigma $ may be taken to be $\delta $-almost deterministic (i.e., all
randomizations are $\delta $-local).
\end{theorem}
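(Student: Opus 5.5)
The plan is to combine the known quadratic guarantee with Proposition \ref{p:K-KS}. Theorem 4 of Foster and Hart (2023) (equivalently Theorem 11 (S) of Foster and Hart 2021), applied with $C=\Delta(A)$, $\gamma^{2}=\max_{c,c^{\prime}\in C}\|c-c^{\prime}\|^{2}=2$ and the finite $\delta$-grid $C_{\delta}$, yields a stochastic $C_{\delta}$-forecasting procedure $\sigma$ that is $\delta$-almost deterministic. It satisfies the quadratic bound
\begin{equation*}
\mathbb{E}\left[\mathcal{K}_{t}(\mathbf{c})\right]\leq \delta^{2}+2|C_{\delta}|\frac{\ln t+1}{t}
\end{equation*}
for every $t\geq 1$ and every $\mathbf{a}_{t}\in A^{t}$. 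I will take this procedure as given and show that it is also proper $\sqrt{\delta}$-calibrated.

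Fix $t$, $\mathbf{a}_{t}$ and $L\in\mathcal{L}_{1}$. Apply Proposition \ref{p:K-KS} with $M=1$ and the standard binning $\mathbf{f}=\mathbf{c}$. This binning is pure and refines the forecasts, so the general-binning definitions reduce to the original ones. The proposition gives, pathwise, $\mathcal{K}_{t}^{L}(\mathbf{c})\leq\sqrt{\mathcal{K}_{t}(\mathbf{c})}$ for every realization of the forecasts. Taking expectations over the randomization of $\sigma$ and using Jensen's inequality for the concave square root gives
\begin{equation*}
\mathbb{E}\left[\mathcal{K}_{t}^{L}(\mathbf{c})\right]\leq\mathbb{E}\left[\sqrt{\mathcal{K}_{t}(\mathbf{c})}\right]\leq\sqrt{\mathbb{E}\left[\mathcal{K}_{t}(\mathbf{c})\right]}.
\end{equation*}
Inserting the quadratic bound yields the displayed inequality. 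The bound holds uniformly in $L\in\mathcal{L}_{1}$ and in $\mathbf{a}_{t}$, because neither the quadratic bound nor the constant in Proposition \ref{p:K-KS} depends on them.

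Letting $t\to\infty$, the term $2|C_{\delta}|(\ln t+1)/t$ vanishes. Hence $\varlimsup_{t}\sup_{L}\sup_{\mathbf{a}_{t}}\mathbb{E}[\mathcal{K}^{L}_{t}(\mathbf{c})]\leq\delta=(\sqrt{\delta})^{2}$, which is exactly proper $\sqrt{\delta}$-calibration. The $\delta$-almost-deterministic clause is inherited directly from the cited construction.

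The only step needing care is matching the cited theorem's constants to the present setting. One must check that its calibration score uses the standard by-forecast binning and the squared Euclidean norm, and that the diameter constant is $\gamma^{2}=2$ for the simplex. Once the quadratic bound is in hand, the passage to all proper scoring rules is immediate.
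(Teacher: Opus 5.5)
Your proposal is correct and follows the paper's route exactly. You take the quadratic bound $\mathbb{E}[\mathcal{K}_t(\mathbf{c})]\le\delta^2+2|C_\delta|(\ln t+1)/t$ from Theorem 4 of Foster and Hart (2023) with $\gamma^2=2$ (which also gives the $\delta$-almost-deterministic property). You then pass to every $L\in\mathcal{L}_1$ via Proposition \ref{p:K-KS} and Jensen, $\mathbb{E}[\mathcal{K}^L_t(\mathbf{c})]\le\mathbb{E}[\sqrt{\mathcal{K}_t(\mathbf{c})}]\le\sqrt{\mathbb{E}[\mathcal{K}_t(\mathbf{c})]}$, just as in the proof of Theorem \ref{th:proper-calibration}.
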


For Lipschitz proper scoring rules we get proper-Li $\delta $-calibration;
i.e.,%
\begin{equation*}
\mathbb{E}\left[ \mathcal{K}_{t}^{L}(\mathbf{c})\right] \leq \delta
^{2}+2|C_{\delta }|\frac{\ln t+1}{t}
\end{equation*}%
for every $L$ in $\mathcal{L}_{1}^{\mathrm{Li}}$.

Next, from Theorem 11 (D) of Foster and Hart (2021) (see also Theorems 6 and
12 of Foster and Hart 2023, 2023ff) we get:

\begin{theorem}
\label{th:cont}There exists a \emph{deterministic} forecasting procedure $%
\sigma $ that is proper continuously-calibrated.
\end{theorem}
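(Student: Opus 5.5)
Theorem \ref{th:cont} follows by combining an existing existence result with the transfer inequality of Proposition \ref{p:K-KS}. No new forecasting procedure needs to be built. The plan has three steps.

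\emph{Step 1: a continuously calibrated procedure.} Take $\sigma$ to be a deterministic procedure that is continuously calibrated in the quadratic sense. Such a procedure exists by Theorem 11 (D) of Foster and Hart (2021), or Theorems 6 and 12 of Foster and Hart (2023, 2023ff). So for every continuous binning $\Pi$,
\begin{equation*}
\lim_{t\rightarrow \infty }\sup_{\mathbf{a}_{t}}\mathcal{K}_{t}(\mathbf{c};\Pi (\mathbf{c}))=0 .
\end{equation*}
By Proposition 3 of Foster and Hart (2021), it is enough to know this for the single binning $\Pi _{0}$. The definition of proper continuous calibration, however, quantifies over every continuous $\Pi$, so we use the full statement.

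\emph{Step 2: transfer to every bounded proper rule.} Fix a continuous binning $\Pi$ and let $\mathbf{f}=\Pi (\mathbf{c})$. This is a general binning sequence, so Proposition \ref{p:K-KS} applies to it. For every $L\in \mathcal{L}_{1}$ and every $\mathbf{a}_{t}$ it gives
\begin{equation*}
\mathcal{K}_{t}^{L}(\mathbf{c};\Pi (\mathbf{c}))\leq \sqrt{\mathcal{K}_{t}(\mathbf{c};\Pi (\mathbf{c}))}.
\end{equation*}
The right-hand side does not depend on $L$. Taking the supremum over $L\in \mathcal{L}_{1}$ and then over $\mathbf{a}_{t}$ therefore gives
\begin{equation*}
\sup_{L\in \mathcal{L}_{1}}\sup_{\mathbf{a}_{t}}\mathcal{K}_{t}^{L}(\mathbf{c};\Pi (\mathbf{c}))\leq \sqrt{\sup_{\mathbf{a}_{t}}\mathcal{K}_{t}(\mathbf{c};\Pi (\mathbf{c}))}\;\longrightarrow\; 0 .
\end{equation*}
The limit is zero because the square root is continuous and monotone. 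This is exactly the definition of proper continuous calibration, and the third part of Theorem \ref{th:proper-calibration} is precisely this argument.

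\emph{Step 3: check the hypotheses of Proposition \ref{p:K-KS}.} The proof of that proposition uses only two facts: the weights satisfy $\sum_{i}n_{t}(i)/t=1$, which holds since $\sum_{i}w_{i}(c)=1$, and $D^{L}\leq M\Vert \cdot \Vert$ from Proposition \ref{p:L}(ii). When $I$ is countably infinite, the Cauchy--Schwarz step is still valid because all the series converge absolutely. The proposition makes no use of locality, so none is needed.

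I expect no real obstacle here. The only point needing care is uniformity: it holds because the bound in Step 2 is uniform in $L$ and the convergence in Step 1 is uniform in $\mathbf{a}_{t}$.
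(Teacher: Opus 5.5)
Your proposal is correct and matches the paper's route: the paper obtains Theorem \ref{th:cont} by taking the deterministic continuously calibrated procedure of Theorem 11 (D) of Foster and Hart (2021) and applying the third part of Theorem \ref{th:proper-calibration}. That part is exactly your bound $\mathcal{K}_{t}^{L}(\mathbf{c};\Pi(\mathbf{c}))\leq \sqrt{\mathcal{K}_{t}(\mathbf{c};\Pi(\mathbf{c}))}$ from Proposition \ref{p:K-KS}, uniform over $L\in\mathcal{L}_{1}$ and $\mathbf{a}_{t}$.
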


\medskip

\noindent \textbf{Remark. }In Foster and Hart (2021, 2023) we have
emphasized the important distinction between procedures of \emph{type MM}
(minmax) and procedures of \emph{type FP} (fixed point). To determine the
forecast in each period, the former requires solving a finite minmax problem
(equivalently, a finite linear programming problem), whereas the latter
requires solving a continuous fixed-point problem. The stochastic procedures
in this paper are all of type MM, whereas the deterministic and $\delta $%
-deterministic procedures (except for the \textquotedblleft simple way to
calibeat\textquotedblright\ procedure of Theorem \ref%
{th:simple-calibeat-univ} below) are of type FP.

\section{Proper Calibeating\label{s:proper calibeating}}

Unlike calibration, proper calibeating is \emph{not} a consequence of
calibeating. We show this in Section \ref{sus:example} below, and then we
exhibit three methods of achieving proper calibeating and proper-Li
calibeating. First, we prove in Section \ref{sus:joint} that complete
calibeating (i.e., calibeating the joint binning) yields proper complete
calibeating. Second, we prove in Section \ref{sus:simple} that the simple
calibeating procedure of Theorem 3 of Foster and Hart (2023) is proper-Li
calibeating (but not proper calibeating; i.e., there are bounded but
non-Lipschitz proper scoring rules for which calibeating fails). Third, we
provide in Section \ref{sus:continuous} a deterministic proper-Li
calibeating procedure that is proper continuously-calibrated.

\subsection{Calibeating Does Not Imply Proper Calibeating\label{sus:example}}

The following example shows that in general calibeating with respect to the
standard quadratic scoring rule does \emph{not} yield calibeating with
respect to other (bounded) proper scoring rules (this stands in contrast to
calibration, which, as shown above, always entails proper calibration).

\begin{example}
\label{ex:not-S-beat}In the one-dimensional case, where $A=\{0,1\}$,
consider $t=10$ periods where the states $a_{t}$ and the forecasts $b_{t}$
and $c_{t}$ (given in the table below as the forecasted probability of $a=1$%
) are as follows:%
\begin{equation*}
\begin{tabular}{c|cccccccccc}
$t$ & $1$ & $2$ & $3$ & $4$ & $5$ & $6$ & $7$ & $8$ & $9$ & $10$ \\ 
\hline\hline
$a_{t}$ & $1$ & $0$ & $0$ & $0$ & $0$ & $1$ & $1$ & $1$ & $1$ & $0$ \\ 
$b_{t}$ & $\frac{1}{5}$ & $\frac{1}{5}$ & $\frac{1}{5}$ & $\frac{1}{5}$ & $%
\frac{1}{5}$ & $\frac{4}{5}$ & $\frac{4}{5}$ & $\frac{4}{5}$ & $\frac{4}{5}$
& $\frac{4}{5}$ \\ 
$c_{t}$ & $1$ & $0$ & $\frac{1}{2}$ & $\frac{1}{2}$ & $\frac{1}{2}$ & $\frac{%
1}{2}$ & $\frac{1}{2}$ & $\frac{1}{2}$ & $1$ & $0$%
\end{tabular}%
.
\end{equation*}%
The sequences $\mathbf{b}_{t}$ and $\mathbf{c}_{t}$ are both perfectly
calibrated (i.e., $\bar{a}_{t}(b)=b$ and $\bar{a}_{t}(c)=c$ for each
forecast used), and so, for every scoring rule $L$ we have $\mathcal{K}%
_{t}^{L}(\mathbf{b})=\mathcal{K}_{t}^{L}(\mathbf{c})=0$ and\footnote{%
We use formula (\ref{eq:R-H-H}), slightly abusing notation and writing $%
H^{L}(p)$ instead of $H^{L}((p,1-p))$.} 
\begin{eqnarray*}
\mathcal{B}_{t}^{L}(\mathbf{b}) &=&\mathcal{R}_{t}^{L}(\mathbf{b})=\frac{5}{%
10}H^{L}\left( \frac{1}{5}\right) +\frac{5}{10}H^{L}\left( \frac{4}{5}%
\right) -\mathcal{H}_{t}^{L}\text{\ \ \ and} \\
\mathcal{B}_{t}^{L}(\mathbf{c}) &=&\mathcal{R}_{t}^{L}(\mathbf{c})=\frac{6}{%
10}H^{L}\left( \frac{1}{2}\right) +\frac{2}{10}H^{L}\left( 1\right) +\frac{2%
}{10}H^{L}\left( 0\right) -\mathcal{H}_{t}^{L},
\end{eqnarray*}%
where $\mathcal{H}_{t}^{L}=(5/10)H^{L}(0)+(5/10)H^{L}(1).$ For the standard
quadratic scoring rule, for which $H(p)=-(p^{2}+(1-p)^{2})$ (see Appendix %
\ref{susus-a:scoring examples}) this yields%
\begin{equation}
\mathcal{B}_{t}(\mathbf{c})=\frac{3}{10}<\frac{8}{25}=\mathcal{R}_{t}(%
\mathbf{b}),  \label{eq:ex-quad}
\end{equation}%
and so $\mathbf{c}_{t}$ calibeats $\mathbf{b}_{t}$. For the $\alpha $%
-spherical scoring rule $L$ with $\alpha =2$ (which is a bounded and
Lipschitz proper scoring rule), for which $H^{L}(p)=-(p^{2}+(1-p)^{2})^{1/2}$%
, this yields\footnote{%
The inequality $\mathcal{B}_{t}^{L}(\mathbf{c})>\mathcal{R}_{t}^{L}(\mathbf{b%
})$ holds for every $\alpha $-spherical $L$ with $\alpha \geq 2$; as $\alpha
\rightarrow \infty ,$ we get $\mathcal{B}_{t}^{L}(\mathbf{c})\rightarrow
3/10 $ and $\mathcal{B}_{t}^{L}(\mathbf{b})\rightarrow 1/5$.}%
\begin{equation}
\mathcal{B}_{t}^{L}(\mathbf{c})\approx 0.1757>0.1754\approx \mathcal{R}%
_{t}^{L}(\mathbf{b}),  \label{eq:ex-L}
\end{equation}%
and so $\mathbf{c}_{t}$ does \emph{not} $L$-calibeat $\mathbf{b}_{t}$\textbf{%
.} Repeating this sequence of length $10$ periodically yields the
inequalities (\ref{eq:ex-quad}) and (\ref{eq:ex-L}) for every $t$ that is a
multiple of $10$, and thus also in the limit as\footnote{%
Because all the scores at $t=10m+r,$ where $1\leq r\leq 9$, differ from
those at $t^{\prime }=10m$ by $O(r/t)\rightarrow 0$ as $t\rightarrow \infty $%
.} $t\rightarrow \infty $, which shows that $\mathbf{c}$ calibeats $\mathbf{b%
}$ with respect to the quadratic scoring rule but \emph{not} with respect to
the $2$-spherical scoring rule.\footnote{%
While this is demonstrated for specific sequences $\mathbf{a},\mathbf{b},%
\mathbf{c}$, it implies that any $B$-calibeating $\mathbf{b}$-based
procedure that produces the forecasting sequence $\mathbf{c}$ when the
history follows $\mathbf{a}$ and $\mathbf{b}$ is \emph{not} proper
calibeating.}
\end{example}

\subsection{Proper Complete Calibeating\label{sus:joint}}

We now consider complete calibeating, which means calibeating the joint
binning, and implies calibeating by a calibrated procedure (see Theorem 5 of
Foster and Hart 2023). We show that complete calibeating is proper, and thus
implies proper calibeating by a proper calibrated procedure.

\begin{theorem}
\label{th:proper complete calibeating}If a procedure is complete $%
(\varepsilon ,B)$-calibeating then it is proper complete $(\sqrt{\varepsilon 
},B)$-calibeating, and thus proper $\sqrt{\varepsilon }$-calibrated and
proper $(\sqrt{\varepsilon },B)$-calibeating.
\end{theorem}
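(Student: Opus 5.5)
The plan is to reduce everything to the joint-binning calibration score and then invoke Proposition \ref{p:K-KS}. The joint binning $\mathbf{b}\times\mathbf{c}$ is pure and refines the standard binning by $\mathbf{c}$. So the Decomposition Theorem \ref{th:S-decompose} applies to every proper scoring rule $L$ and gives
\[
\mathcal{B}_{t}^{L}(\mathbf{c})-\mathcal{R}_{t}^{L}(\mathbf{b}\times\mathbf{c})=\mathcal{K}_{t}^{L}(\mathbf{c};\mathbf{b}\times\mathbf{c}).
\]
By Proposition \ref{p:complete=calibrated-joint}, the hypothesis is exactly
\[
\varlimsup_{t\to\infty}\sup_{\mathbf{a}_t,\mathbf{b}_t}\mathbb{E}[\mathcal{K}_{t}(\mathbf{c};\mathbf{b}\times\mathbf{c})]\le\varepsilon^2 .
\]

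Now fix $L\in\mathcal{L}_1$. Proposition \ref{p:K-KS}, applied with $\mathbf{f}=\mathbf{b}\times\mathbf{c}$ and $M=1$, gives $\mathcal{K}_{t}^{L}(\mathbf{c};\mathbf{b}\times\mathbf{c})\le\sqrt{\mathcal{K}_{t}(\mathbf{c};\mathbf{b}\times\mathbf{c})}$ pointwise. Taking expectations and using Jensen's inequality (concavity of the square root), we get
\[
\mathbb{E}[\mathcal{K}_{t}^{L}(\mathbf{c};\mathbf{b}\times\mathbf{c})]\le\sqrt{\mathbb{E}[\mathcal{K}_{t}(\mathbf{c};\mathbf{b}\times\mathbf{c})]}.
\]
The right-hand side does not depend on $L$. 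Hence taking $\sup_L\sup_{\mathbf{a}_t,\mathbf{b}_t}$ and then $\varlimsup_t$ yields a bound of $\sqrt{\varepsilon^2}=\varepsilon=(\sqrt{\varepsilon})^2$, using continuity and monotonicity of the square root to pass the limsup and sup inside. This is proper complete $(\sqrt{\varepsilon},B)$-calibeating.

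For the consequences, I will show the general-$L$ analogues of the two quadratic monotonicity facts. First, $\mathcal{R}^{L}_t(\mathbf{b}\times\mathbf{c})\le\mathcal{R}^{L}_t(\mathbf{b})$. By \eqref{eq:R-H-H} this follows from concavity of $H^L$ (Proposition \ref{p:L}(i)): each $\mathbf{b}$-bin average $\bar a_t(b)$ is the $n_t$-weighted average of its subbin averages $\bar a_t(b,c)$, so $\sum n H(\bar a(b))\ge\sum n H(\bar a(b,c))$. The same argument gives $\mathcal{R}^L_t(\mathbf{b}\times\mathbf{c})\le\mathcal{R}^L_t(\mathbf{c})$. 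Second, these give
\[
\mathcal{B}^L_t(\mathbf{c})-\mathcal{R}^L_t(\mathbf{b})\le\mathcal{B}^L_t(\mathbf{c})-\mathcal{R}^L_t(\mathbf{b}\times\mathbf{c})
\]
and
\[
\mathcal{K}^L_t(\mathbf{c})=\mathcal{B}^L_t(\mathbf{c})-\mathcal{R}^L_t(\mathbf{c})\le\mathcal{B}^L_t(\mathbf{c})-\mathcal{R}^L_t(\mathbf{b}\times\mathbf{c}),
\]
again by Theorem \ref{th:S-decompose}. Both inequalities hold pointwise for every $L\in\mathcal{L}_1$, so the uniform bound transfers directly to proper $(\sqrt{\varepsilon},B)$-calibeating and proper $\sqrt{\varepsilon}$-calibration.

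The only step that is not immediate is the refinement monotonicity for general $L$. The paper cites it for the quadratic case and mentions a later Proposition \ref{p:refine-S}. The concavity argument above should settle it, with the one detail being that empty subbins are excluded so that the weights are well defined. Everything else is bookkeeping with Jensen and the uniformity in $L$.
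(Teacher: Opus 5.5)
Your proposal is correct and follows the paper's proof essentially step for step: you apply the Decomposition Theorem~\ref{th:S-decompose} to the joint binning, then Proposition~\ref{p:K-KS} with $M=1$ plus Jensen to get $\mathbb{E}[\mathcal{K}_{t}^{L}(\mathbf{c};\mathbf{b}\times\mathbf{c})]\le\sqrt{\mathbb{E}[\mathcal{K}_{t}(\mathbf{c};\mathbf{b}\times\mathbf{c})]}$ uniformly over $L\in\mathcal{L}_{1}$, and then refinement monotonicity of $\mathcal{R}^{L}_{t}$ for the two consequences. Your concavity-of-$H^{L}$ argument via \eqref{eq:R-H-H} is exactly how the paper proves Proposition~\ref{p:refine-S} (and hence Corollary~\ref{c:refine-S}), and you correctly use $\mathcal{R}^{L}_{t}(\mathbf{b}\times\mathbf{c})\le\mathcal{R}^{L}_{t}(\mathbf{b})$ for the calibeating step (the paper's proof writes $\mathcal{R}^{L}_{t}(\mathbf{c})$ there, evidently a slip).
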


\begin{proof}
Recall that $\mathcal{B}_{t}^{L}(\mathbf{c})-\mathcal{R}_{t}^{L}(\mathbf{b}%
\times \mathbf{c})=\mathcal{K}_{t}^{L}(\mathbf{c};\mathbf{b}\times \mathbf{c}%
)$ by the Decomposition Theorem \ref{th:S-decompose}, and use Proposition %
\ref{p:K-KS}: 
\begin{equation*}
\mathbb{E}\left[ \mathcal{K}_{t}^{L}(\mathbf{c};\mathbf{b}\times \mathbf{c})%
\right] \leq \mathbb{E}\left[ \sqrt{\mathcal{K}_{t}(\mathbf{c};\mathbf{b}%
\times \mathbf{c})}\right] \leq \sqrt{\mathbb{E}\left[ \mathcal{K}_{t}(%
\mathbf{c};\mathbf{b}\times \mathbf{c})\right] }
\end{equation*}%
for every $L\in \mathcal{L}_{1}$. From proper complete calibeating to proper
calibration use $\mathcal{K}_{t}^{L}(\mathbf{c})\leq \mathcal{K}_{t}^{L}(%
\mathbf{c};\mathbf{b}\times \mathbf{c})$ (by Corollary \ref{c:refine-S}
below), and to proper calibeating use $\mathcal{R}_{t}^{L}(\mathbf{c})\geq 
\mathcal{R}_{t}^{L}(\mathbf{b}\times \mathbf{c})$ (by Proposition \ref%
{p:refine-S} below).
\end{proof}

\medskip

For $\varepsilon =0$ we thus have%
\begin{equation*}
\begin{tabular}{|c|}
\hline
$\text{complete calibeating\ }\Longleftrightarrow \;\text{proper complete
calibeating}$ \\ \hline
\end{tabular}%
\end{equation*}

The existence of proper complete $\varepsilon $-calibeating procedures is
now immediate from Theorem 5 of Foster and Hart (2023).

\begin{theorem}
\label{th:beat-by-calib S}Let $B$ be a finite set, and let $C_{\delta
}\subset C$ be a finite $\delta $-grid of $C$ for some $\delta >0$. Then
there exists a stochastic $\mathbf{b}$-based $C_{\delta }$-forecasting
procedure $\zeta $ that is proper complete $(\sqrt{\delta },B)$-calibeating,
i.e.,\footnote{%
The proof below shows that the right-hand side in each formula is%
\begin{equation*}
\left( \delta ^{2}+2|B|\,|C_{\delta }|\frac{\ln t+1}{t}\right) ^{1/2}=\delta
+O\left( \sqrt{\frac{\ln t}{t}}\right) .
\end{equation*}%
}%
\begin{equation*}
\mathbb{E}\left[ \mathcal{B}_{t}^{L}(\mathbf{c})-\mathcal{R}_{t}^{L}(\mathbf{%
b}\times \mathbf{c})\right] \leq \delta +o(1),
\end{equation*}%
and thus proper $(\sqrt{\delta },B)$-calibeating and proper $\sqrt{\delta }$%
-calibrated:%
\begin{eqnarray*}
\mathbb{E}\left[ \mathcal{B}_{t}^{L}(\mathbf{c})-\mathcal{R}_{t}^{L}(\mathbf{%
b})\right] &\leq &\delta +o(1)\text{\ \ and} \\
\mathbb{E}\left[ \mathcal{K}_{t}^{L}(\mathbf{c})\right] =\mathbb{E}\left[ 
\mathcal{B}_{t}^{L}(\mathbf{c})-\mathcal{R}_{t}^{L}(\mathbf{c})\right] &\leq
&\delta +o(1).
\end{eqnarray*}%
All these inequalities hold for all $t\geq 1$ uniformly over all sequences $%
\mathbf{a}_{t}\in A^{t}$ and $\mathbf{b}_{t}\in B^{t}$, and all $1$-bounded
proper scoring rules $L$ (i.e., $L\in \mathcal{L}_{1}$). Moreover, $\zeta $
may be taken to be $\delta $-almost deterministic.
\end{theorem}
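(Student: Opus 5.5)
The plan is to obtain Theorem \ref{th:beat-by-calib S} by combining the quadratic result of Foster and Hart (2023, Theorem 5) with Theorem \ref{th:proper complete calibeating}. The only new work is to keep track of the explicit error terms, since the statement asks for bounds that hold for every $t$, not just $\varlimsup$ statements.

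\emph{Step 1: the quadratic bound on the joint binning.} Theorem 5 of Foster and Hart (2023) applies with $C=\Delta(A)$, so $\gamma^2=2$. It provides a stochastic $\mathbf{b}$-based $C_\delta$-forecasting procedure $\zeta$, which may be taken $\delta$-almost deterministic. For all $t$, $\mathbf{a}_t$ and $\mathbf{b}_t$ it satisfies
\begin{equation*}
\mathbb{E}\left[\mathcal{K}_t(\mathbf{c};\mathbf{b}\times\mathbf{c})\right]\leq \delta^2+2|B|\,|C_\delta|\frac{\ln t+1}{t}.
\end{equation*}
The construction runs a separate copy of the stochastic calibration procedure of Theorem \ref{th:calibration S} for each $b\in B$, on the subsequence of periods in which $b_t=b$. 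Summing the per-$b$ bounds, weighted by the frequency of each $b$, and using concavity of $\ln$ gives the factor $|B|$.

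Equivalently, by Proposition \ref{p:complete=calibrated-joint} and the Decomposition Theorem \ref{th:S-decompose}, this is a bound on $\mathbb{E}[\mathcal{B}_t(\mathbf{c})-\mathcal{R}_t(\mathbf{b}\times\mathbf{c})]$. The main point to verify is that the cited theorem indeed controls the \emph{joint} calibration score, not merely $\mathcal{B}_t(\mathbf{c})-\mathcal{R}_t(\mathbf{b})$. That is how its proof proceeds: within each $b$-subsequence, calibration of that copy is exactly calibration on the $(b,c)$-subbins.

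\emph{Step 2: transfer to all of $\mathcal{L}_1$.} Fix $L\in\mathcal{L}_1$. By Theorem \ref{th:S-decompose}, applicable because $\mathbf{b}\times\mathbf{c}$ refines $\mathbf{c}$, we have $\mathcal{B}^L_t(\mathbf{c})-\mathcal{R}^L_t(\mathbf{b}\times\mathbf{c})=\mathcal{K}^L_t(\mathbf{c};\mathbf{b}\times\mathbf{c})$. Proposition \ref{p:K-KS} with $M=1$, followed by Jensen's inequality for the concave square root, gives
\begin{equation*}
\mathbb{E}\left[\mathcal{K}^L_t(\mathbf{c};\mathbf{b}\times\mathbf{c})\right]\leq\left(\delta^2+2|B|\,|C_\delta|\frac{\ln t+1}{t}\right)^{1/2}\leq \delta+\left(2|B|\,|C_\delta|\frac{\ln t+1}{t}\right)^{1/2}.
\end{equation*}
This is $\delta+o(1)$ uniformly in $L$, $\mathbf{a}_t$ and $\mathbf{b}_t$.

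\emph{Step 3: the two consequences.} First, $\mathcal{R}^L_t(\mathbf{b}\times\mathbf{c})\leq\mathcal{R}^L_t(\mathbf{b})$ because refining the binning can only lower refinement (Proposition \ref{p:refine-S}). Hence $\mathcal{B}^L_t(\mathbf{c})-\mathcal{R}^L_t(\mathbf{b})\leq\mathcal{K}^L_t(\mathbf{c};\mathbf{b}\times\mathbf{c})$, which gives the proper calibeating bound.

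Second, $\mathcal{R}^L_t(\mathbf{b}\times\mathbf{c})\leq\mathcal{R}^L_t(\mathbf{c})$ by the same monotonicity. Combined with Theorem \ref{th:S-decompose} for the standard binning, this gives $\mathcal{K}^L_t(\mathbf{c})=\mathcal{B}^L_t(\mathbf{c})-\mathcal{R}^L_t(\mathbf{c})\leq\mathcal{K}^L_t(\mathbf{c};\mathbf{b}\times\mathbf{c})$, which is Corollary \ref{c:refine-S}. The identical bound then follows for proper calibration. Since all bounds are pathwise inequalities before taking expectations, uniformity over $\mathbf{a}_t$, $\mathbf{b}_t$ and $L$ is preserved.
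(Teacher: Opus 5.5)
Your proposal is correct and follows the paper's proof. Like the paper, you invoke Theorem 5 of Foster and Hart (2023) for the quadratic joint-binning bound $\delta^{2}+2|B|\,|C_{\delta}|(\ln t+1)/t$ and transfer it to all of $\mathcal{L}_{1}$; the only difference is that you unroll Theorem \ref{th:proper complete calibeating} (decomposition, Proposition \ref{p:K-KS}, Jensen, and refinement monotonicity), which the paper simply cites, and you write out the explicit error term given in the footnote.
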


\begin{proof}
Theorem 5 of Foster and Hart (2023) with $C=\Delta (A)$ (for which we have $%
\gamma ^{2}=\max_{c,d\in C}\left\Vert c-d\right\Vert ^{2}=2$) yields%
\begin{equation*}
\mathbb{E}\left[ \mathcal{B}_{t}(\mathbf{c})-\mathcal{R}_{t}(\mathbf{b}%
\times \mathbf{c})\right] \leq \delta ^{2}+2|B|\,|C_{\delta }|\frac{\ln t+1}{%
t};
\end{equation*}%
apply Theorem \ref{th:proper complete calibeating}.
\end{proof}

\medskip

\noindent \textbf{Remarks. }\emph{(a) }The following four statements are all
equivalent to \textquotedblleft $\mathbf{c}$ completely calibeats $\mathbf{b}
$\textquotedblright :\footnote{%
For clarity we consider the simplified statements that drop expectations,
limits, and errors.}

\begin{description}
\item[(J1)] $\mathbf{c}$ calibeats $\mathbf{b}\times \mathbf{c}$, i.e., $%
\mathcal{B}(\mathbf{c})\leq \mathcal{R}(\mathbf{b}\times \mathbf{c})$;
equivalently,\footnote{%
The equivalence obtains because we always have $\mathcal{R}(\mathbf{b}\times 
\mathbf{c})\leq \mathcal{R}(\mathbf{c})\leq \mathcal{B}(\mathbf{c})$ (the
first inequality since the $\mathbf{b}\times \mathbf{c}$-binning is a
refinement of the $\mathbf{c}$-binning). Similarly for every scoring rule $L$%
, i.e., for (J4) below.} $\mathcal{B}(\mathbf{c})=\mathcal{R}(\mathbf{c})=%
\mathcal{R}(\mathbf{b}\times \mathbf{c})$.

\item[(J2)] $\mathbf{c}$ is calibrated on the $\mathbf{b\times c}$-binning,
i.e., $\mathcal{K}(\mathbf{c};\mathbf{b}\times \mathbf{c})=0$.

\item[(J3)] $\mathbf{c}$ is proper calibrated on the $\mathbf{b\times c}$%
-binning, i.e., $\mathcal{K}^{L}(\mathbf{c};\mathbf{b}\times \mathbf{c})=0$
for every $L$ in $\mathcal{L}$.

\item[(J4)] $\mathbf{c}$ properly calibeats $\mathbf{b\times c}$, i.e., $%
\mathcal{B}^{L}(\mathbf{c})\leq \mathcal{R}^{L}(\mathbf{b}\times \mathbf{c})$
for every $L$ in $\mathcal{L}$; equivalently, $\mathcal{B}^{L}(\mathbf{c})=%
\mathcal{R}^{L}(\mathbf{c})=\mathcal{R}^{L}(\mathbf{b}\times \mathbf{c})$
for every $L$ in $\mathcal{L}$.
\end{description}

\noindent Indeed, since the $\mathbf{b}\times \mathbf{c}$-binning is a
refinement of the $\mathbf{c}$-binning, we get the decomposition $\mathcal{B}%
^{L}(\mathbf{c})=\mathcal{R}^{L}(\mathbf{b}\times \mathbf{c})+\mathcal{K}%
^{L}(\mathbf{c};\mathbf{b}\times \mathbf{c})$, which immediately yields (J1) 
$\Longleftrightarrow $ (J2), and (J3) $\Longleftrightarrow $ (J4). As for
(J2) $\Longleftrightarrow $ (J3), it follows from $\mathcal{K}^{L}\leq M\,%
\sqrt{\mathcal{K}}$ for every $L$ in $\mathcal{L}$, and the fact that the
standard quadratic scoring rule is in $\mathcal{L}$.

Moreover, (J1) implies that $\mathbf{c}$ is calibrated (because $\mathcal{K}(%
\mathbf{c})=\mathcal{B}(\mathbf{c})-\mathcal{R}(\mathbf{c})=0$), and (J4)
that it is proper calibrated (because $\mathcal{K}^{L}(\mathbf{c})=\mathcal{B%
}^{L}(\mathbf{c})-\mathcal{R}^{L}(\mathbf{c})=0$).

While $\mathbf{c}$ being calibrated means that the state average $\bar{a}(c)$
of each $c$-bin is equal to $c$, being calibrated on the joint binning $%
\mathbf{b}\times \mathbf{c}$ (condition (J2)) means that the state average $%
\bar{a}(b,c)$ in each $(b,c)$-subbin of the $c$-bin is also equal to $c$.

\emph{(b) }One can achieve complete calibeating---and thus proper complete
calibeating---by running a calibration procedure for each $b\in B$
separately (i.e., for each $b\in B$ consider only the periods where $b_{s}=b$
and calibrate the forecasts there); this is exactly what the procedure of
Foster and Hart (2023, Theorem 5) does.

\emph{(c)} Calibeating together with calibration does \emph{not} suffice to
achieve proper calibeating; complete calibeating does. See Example \ref%
{ex:not-S-beat}, where $L$ is the $2$-spherical proper scoring rule: $%
\mathbf{c}$ is calibrated (and thus $L$-calibrated) and calibeats $\mathbf{b}
$, but it does not $L$-calibeat $\mathbf{b}$. Indeed, $\mathbf{c}$ is not
calibrated with respect to the $\mathbf{b}\times \mathbf{c}$-binning: the $%
(b=1/5,c=1/2)$-bin is not $\mathbf{c}$-calibrated: the state average there
is $0$ rather than $1/2$. As seen in the above proof, to obtain $L$%
-calibeating from calibeating we rely on the Decomposition Theorem \ref%
{th:S-decompose}, which requires the binning to be a refinement of the $%
\mathbf{c}$-binning; therefore, it does not apply to an arbitrary $\mathbf{b}
$-binning,\footnote{%
For our simple calibeating procedure in Section \ref{sus:simple} below we
use a different tool, namely, Proposition \ref{p:online-R-S}.} but it does
apply to the joint $\mathbf{b\times c}$-binning. Appendix \ref{sus-a:bxc}
provides further evidence on this matter: we consider a setup where we vary
only the frequencies of the bins, suggesting that a \textquotedblleft
natural\textquotedblright\ proof of proper calibeating from calibeating
together with calibration might well require complete calibeating.

\subsubsection{Refined Refinement\label{susus:refined refinement}}

For the above proof we need to generalize to proper scoring rules the result
that the refinement score can only decrease when the binning becomes finer
(for the quadratic scoring rule, this is Proposition 11 of Foster and Hart
2023). While we only need it for pure binnings in this section, we state it
for general binnings (which will be used in the following section).

Let $\mathbf{f}$ be a general binning sequence on a set of bins $I,$ and $%
\mathbf{g}$ a general binning sequence on a set of bins $J$ (thus $f_{s}\in
\Delta (I)$ and $g_{s}\in \Delta (J)$). We say that $\mathbf{f}$ is a \emph{%
refinement} of $\mathbf{g}$ (or $\mathbf{g}$ is a \emph{coarsening} of $%
\mathbf{f}$) if each $j$-bin is a union of $i$-bins, with $g_{s}(j)$ the sum
of the corresponding $f_{s}(i)$; i.e., there is a partition $I=\cup _{j\in
J}I(j)$ of $I$ into disjoint sets $I(j)$ for $j\in J,$ and $%
g_{s}(j)=\sum_{i\in I(j)}f_{s}(i)$ for every $j\in J$ and $s\geq 1$.

\begin{proposition}
\label{p:refine-S}If the general binning sequence $\mathbf{f}$ is a
refinement of the general binning sequence $\mathbf{g}$, then%
\begin{equation*}
\mathcal{R}_{t}^{L}(\mathbf{f})\leq \mathcal{R}_{t}^{L}(\mathbf{g})
\end{equation*}%
for every proper scoring rule $L$ and every $t\geq 1.$
\end{proposition}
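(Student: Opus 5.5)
The plan is to express both refinement scores through the $L$-entropy and then apply concavity of $H^{L}$ (Proposition \ref{p:L}(i)) bin by bin. For general binnings, linearity of $L$ in its first argument gives, exactly as in (\ref{eq:R-H-H}),
\begin{equation*}
\mathcal{R}_{t}^{L}(\mathbf{f})=\frac{1}{t}\sum_{i\in I}\sum_{s=1}^{t}f_{s}(i)L(a_{s},\bar{a}_{t}(i))-\mathcal{H}_{t}^{L}=\sum_{i\in I}\frac{n_{t}(i)}{t}H^{L}(\bar{a}_{t}(i))-\mathcal{H}_{t}^{L}.
\end{equation*}
This uses $\sum_{i}f_{s}(i)=1$ to turn the subtracted term $\frac{1}{t}\sum_{i}\sum_{s}f_{s}(i)L(a_{s},a_{s})$ into $\mathcal{H}_{t}^{L}$. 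It also uses the identity $\sum_{s}f_{s}(i)L(a_{s},\bar{a}_{t}(i))=n_{t}(i)L(\bar{a}_{t}(i),\bar{a}_{t}(i))$, which holds by the definition of $\bar{a}_{t}(i)$ as the $f(i)$-weighted average. The same formula holds for $\mathbf{g}$ with bins $j\in J$. Since the term $\mathcal{H}_{t}^{L}$ is common to both, it suffices to compare the weighted entropy sums.

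Next, fix $j\in J$ with $n_{t}^{g}(j)>0$. From $g_{s}(j)=\sum_{i\in I(j)}f_{s}(i)$ we get $n_{t}^{g}(j)=\sum_{i\in I(j)}n_{t}^{f}(i)$. We also get
\begin{equation*}
\bar{a}_{t}^{g}(j)=\sum_{i\in I(j)}\frac{n_{t}^{f}(i)}{n_{t}^{g}(j)}\,\bar{a}_{t}^{f}(i),
\end{equation*}
a convex combination over the nonempty $i$-bins in $I(j)$. Concavity of $H^{L}$ then yields $n_{t}^{g}(j)H^{L}(\bar{a}_{t}^{g}(j))\geq\sum_{i\in I(j)}n_{t}^{f}(i)H^{L}(\bar{a}_{t}^{f}(i))$. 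Summing over $j$, and noting that empty $j$-bins contain only empty $i$-bins, gives $\mathcal{R}_{t}^{L}(\mathbf{g})\geq\mathcal{R}_{t}^{L}(\mathbf{f})$.

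The main point needing care is the case where $I$ is countably infinite. There the convex combination may have infinitely many terms, so finite Jensen does not apply verbatim. A clean way to handle this is to avoid Jensen and use properness directly: for each $i\in I(j)$,
\begin{equation*}
n_{t}^{f}(i)H^{L}(\bar{a}_{t}^{f}(i))=n_{t}^{f}(i)L(\bar{a}_{t}^{f}(i),\bar{a}_{t}^{f}(i))\leq n_{t}^{f}(i)L(\bar{a}_{t}^{f}(i),\bar{a}_{t}^{g}(j)).
\end{equation*}
Summing over $i\in I(j)$ and using linearity of $L$ in its first argument gives $n_{t}^{g}(j)L(\bar{a}_{t}^{g}(j),\bar{a}_{t}^{g}(j))$. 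The interchange of sum and linear map is justified because the weights are summable and $\mathbf{L}(\bar{a}_{t}^{g}(j))$ is a fixed vector. This argument works for any proper $L$, bounded or not, and for any finite $t$.
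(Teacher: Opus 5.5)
Your proof is correct and follows the paper's argument: write $\mathcal{R}_{t}^{L}$ as a weighted sum of $L$-entropies of bin averages minus $\mathcal{H}_{t}^{L}$, then apply concavity of $H^{L}$ inside each coarse $j$-bin (the paper first reduces to a single $j$-bin and then averages over $j$). Your properness version of the concavity step, $L(\bar{a}_{t}^{f}(i),\bar{a}_{t}^{f}(i))\leq L(\bar{a}_{t}^{f}(i),\bar{a}_{t}^{g}(j))$, is the same supergradient inequality, and it has the small bonus of justifying the countably-infinite-bin case, which the paper's Jensen step leaves implicit.
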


The proof is given in Appendix \ref{sus-a:refined-refinement}.

\medskip

For pure (non-fractional) binning sequences, we obtain\footnote{%
See Footnote \ref{ftn:K(f) ge K(g)} below.}

\begin{corollary}
\label{c:refine-S}If the binning sequence $\mathbf{i}$ is a refinement of
the binning sequence $\mathbf{j}$, and $\mathbf{j}$ is a refinement of the
binning sequence $\mathbf{c}$, then%
\begin{equation*}
\mathcal{K}_{t}^{L}(\mathbf{c};\mathbf{i})\geq \mathcal{K}_{t}^{L}(\mathbf{c}%
;\mathbf{j})\geq \mathcal{K}_{t}^{L}(\mathbf{c})
\end{equation*}%
for every proper scoring rule $L$ and every $t\geq 1.$
\end{corollary}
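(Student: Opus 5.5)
The plan is to derive the corollary from the Decomposition Theorem \ref{th:S-decompose} combined with Proposition \ref{p:refine-S}. The point to exploit is that the Brier score $\mathcal{B}_{t}^{L}(\mathbf{c})$ depends only on $\mathbf{a}$ and $\mathbf{c}$, and not on the binning. Every binning in the statement refines the forecasting sequence $\mathbf{c}$:
\begin{itemize}
\item $\mathbf{j}$ refines $\mathbf{c}$ by assumption;
\item $\mathbf{i}$ refines $\mathbf{j}$, and refinement is transitive (a union of unions of bins is again a union of bins), so $\mathbf{i}$ refines $\mathbf{c}$;
\item $\mathbf{c}$ trivially refines itself.
\end{itemize}
Theorem \ref{th:S-decompose} therefore applies to each of the three pure binnings $\mathbf{i}$, $\mathbf{j}$, $\mathbf{c}$. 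It gives
\begin{equation*}
\mathcal{K}_{t}^{L}(\mathbf{c};\mathbf{i})=\mathcal{B}_{t}^{L}(\mathbf{c})-\mathcal{R}_{t}^{L}(\mathbf{i}),\qquad \mathcal{K}_{t}^{L}(\mathbf{c};\mathbf{j})=\mathcal{B}_{t}^{L}(\mathbf{c})-\mathcal{R}_{t}^{L}(\mathbf{j}),\qquad \mathcal{K}_{t}^{L}(\mathbf{c})=\mathcal{B}_{t}^{L}(\mathbf{c})-\mathcal{R}_{t}^{L}(\mathbf{c}).
\end{equation*}
Both inequalities in the corollary thus reduce to comparing refinement scores. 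We need $\mathcal{R}_{t}^{L}(\mathbf{i})\leq \mathcal{R}_{t}^{L}(\mathbf{j})\leq \mathcal{R}_{t}^{L}(\mathbf{c})$, and this follows from Proposition \ref{p:refine-S}, applied once to the pair $(\mathbf{i},\mathbf{j})$ and once to the pair $(\mathbf{j},\mathbf{c})$.

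To invoke Proposition \ref{p:refine-S}, we view the pure binnings as general binnings whose entries $f_{s}$ are unit vectors. For pure sequences, "$\mathbf{i}$ refines $\mathbf{j}$" means that $i_{s}=i_{r}$ implies $j_{s}=j_{r}$. This defines a map $\pi:I\to J$ (restricted to the bins actually used up to time $t$). The partition $I(j)=\pi^{-1}(j)$ then satisfies $g_{s}(j)=\sum_{i\in I(j)}f_{s}(i)$, which is exactly the definition of refinement for general binnings. The same construction works for $\mathbf{j}$ and $\mathbf{c}$, with the $\mathbf{c}$-bins indexed by the forecast values.

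We also need the general-binning definitions of $\mathcal{K}^{L}$ and $\mathcal{R}^{L}$ to agree with the pure ones here. They do, as the text notes: each bin carries a single forecast value, so $\bar{c}_{t}(i)=c^{i}$.

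The only delicate point is this formal identification: checking that the pure refinement relation produces the required partition and that the transitivity step is valid. Once that is done, the corollary is a two-line subtraction. No boundedness or Lipschitz assumption on $L$ is needed, only properness, which both cited results already assume.
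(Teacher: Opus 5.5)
Your proposal is correct and follows the paper's proof: apply the Decomposition Theorem \ref{th:S-decompose} to $\mathbf{c}$, $\mathbf{i}$, and $\mathbf{j}$ (all of which refine $\mathbf{c}$) to write each calibration score as $\mathcal{B}_{t}^{L}(\mathbf{c})$ minus the corresponding refinement score, then compare the refinement scores via Proposition \ref{p:refine-S}. The extra formal checks you include (transitivity of refinement, and viewing pure binnings as general binnings with unit-vector $f_{s}$) are the details the paper leaves implicit.
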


The proof is given in Appendix \ref{sus-a:refined-refinement}.

\subsection{A Simple Way to Proper-Li Calibeat\label{sus:simple}}

We show that the simple calibeating procedure of Theorem 3 of Foster and
Hart (2023), whereby one forecasts the current state-average of the $b_{t}$%
-bin, is proper-Lipschitz calibeating. We then show that it is \emph{not}
proper calibeating, i.e., there are bounded proper scoring rules for which
it is not calibeating.

We write $\bar{a}_{t-1}^{\mathbf{b}}(b_{t})$ for the state average in the $%
b_{t}$-bin (of the $\mathbf{b}$-binning) over the periods up to $t-1$.

\begin{theorem}
\label{th:simple-calibeat-univ}Let $B$ be a finite set, and let $\zeta $ be
the deterministic $\mathbf{b}$-based forecasting procedure given by 
\begin{equation*}
c_{t}=\bar{a}_{t-1}^{\mathbf{b}}(b_{t})
\end{equation*}%
for every time $t\geq 1$ (if $t$ is the first time that $b_{t}$ is used,
take $c_{t}$ to be an arbitrary element of $C$). Then the procedure $\zeta $
is proper-Li $B$-calibeating; specifically,%
\begin{equation*}
0\leq \mathcal{B}_{t}^{L}(\mathbf{c})-\mathcal{R}_{t}^{L}(\mathbf{b})\leq
2|B|\frac{\ln t+1}{t},
\end{equation*}%
for all $t\geq 1$, all sequences $\mathbf{a}_{t}\in A^{t}$ and $\mathbf{b}%
_{t}\in B^{t}$, and all $1$-Lipschitz proper scoring rules $L$ (i.e., $L\in 
\mathcal{L}_{1}^{\mathrm{Li}}$).
\end{theorem}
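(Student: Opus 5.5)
The plan is to reduce to a single bin and then bound an ``online versus offline'' loss. Since $c_{t}=\bar{a}_{t-1}^{\mathbf{b}}(b_{t})$ depends only on the history within the $b_{t}$-bin, both $t\mathcal{B}_{t}^{L}(\mathbf{c})$ and $t\mathcal{R}_{t}^{L}(\mathbf{b})$ split as sums over $b\in B$. By (\ref{eq:B=avg-L - H}) and (\ref{eq:R-H-H}), the entropy terms $\mathcal{H}_{t}^{L}$ cancel in the difference. It therefore suffices to show, for a single sequence $x_{1},\ldots ,x_{n}\in A$ (the states in one bin, $n=n_{t}(b)$) with running averages $\bar{x}_{k}=\frac{1}{k}\sum_{j\leq k}x_{j}$, that
\begin{equation*}
0\leq \sum_{k=1}^{n}L(x_{k},\bar{x}_{k-1})-nH(\bar{x}_{n})\leq 2(\ln n+1),
\end{equation*}
with $\bar{x}_{0}$ arbitrary. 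Summing over bins then gives the bound $\sum_{b}2(\ln n_{t}(b)+1)\leq 2|B|(\ln t+1)$, and dividing by $t$ yields the statement.

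The lower bound is the general fact $\mathcal{B}^{L}\geq \mathcal{R}^{L}$ for any forecasts that respect the binning. Indeed, $\sum_{k}L(x_{k},\bar{x}_{k-1})\geq \sum_{k}L(x_{k},\bar{x}_{n})=nL(\bar{x}_{n},\bar{x}_{n})=nH(\bar{x}_{n})$ by linearity in the first argument and properness. This is the minimality of refinement noted after Theorem \ref{th:S-decompose}, applied to the $\mathbf{b}\times\mathbf{c}$-binning together with Proposition \ref{p:refine-S}.

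For the upper bound I would use a ``be-the-leader'' telescoping. Write $nH(\bar{x}_{n})=\sum_{k=1}^{n}\left[ kH(\bar{x}_{k})-(k-1)H(\bar{x}_{k-1})\right]$. Since $k\bar{x}_{k}=(k-1)\bar{x}_{k-1}+x_{k}$, linearity gives
\begin{equation*}
kH(\bar{x}_{k})=kL(\bar{x}_{k},\bar{x}_{k})=(k-1)L(\bar{x}_{k-1},\bar{x}_{k})+L(x_{k},\bar{x}_{k}).
\end{equation*}
Subtracting $(k-1)H(\bar{x}_{k-1})$, the $k$-th term of the telescoping sum equals $(k-1)D(\bar{x}_{k-1},\bar{x}_{k})+L(x_{k},\bar{x}_{k})$. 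Hence
\begin{equation*}
\sum_{k}L(x_{k},\bar{x}_{k-1})-nH(\bar{x}_{n})=\sum_{k=1}^{n}\left[ L(x_{k},\bar{x}_{k-1})-L(x_{k},\bar{x}_{k})\right] -\sum_{k}(k-1)D(\bar{x}_{k-1},\bar{x}_{k}).
\end{equation*}
The last sum is nonnegative by properness, so we may drop it. For the first sum we use (\ref{eq:D-D-L}): each term is at most $\Vert \bar{x}_{k-1}-\bar{x}_{k}\Vert$ for a $1$-Lipschitz $L$. Since $\bar{x}_{k}-\bar{x}_{k-1}=(x_{k}-\bar{x}_{k-1})/k$, its norm is at most $\sqrt{2}/k$. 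The $k=1$ term (arbitrary initial forecast) is at most $\sqrt{2}$ by the same Lipschitz bound with diameter $\sqrt{2}$. The total is therefore at most $\sqrt{2}(\ln n+1)\leq 2(\ln n+1)$.

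The main obstacle is obtaining the clean telescoping identity for a general proper $L$, where for the quadratic rule one would compute explicitly. Linearity of $L$ in its first argument, together with properness to discard the divergence terms, is what makes the identity work. The Lipschitz assumption enters only through the per-step bound $|L(x,c)-L(x,c')|\leq \Vert c-c'\Vert$. For merely bounded $L$ this per-step term is $O(1)$ rather than $O(1/k)$, which is why the result is restricted to $\mathcal{L}_{1}^{\mathrm{Li}}$.
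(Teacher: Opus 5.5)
Your reduction to a single bin and your upper bound are correct. The identity $kH(\bar{x}_{k})-(k-1)H(\bar{x}_{k-1})=L(x_{k},\bar{x}_{k})+(k-1)D(\bar{x}_{k-1},\bar{x}_{k})$ holds. Bounding $L(x_{k},\bar{x}_{k-1})-L(x_{k},\bar{x}_{k})\leq \Vert \bar{x}_{k}-\bar{x}_{k-1}\Vert \leq \sqrt{2}/k$ by (\ref{eq:D-D-L}) even improves the paper's constant $2$ to $\sqrt{2}$.

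The gap is the lower bound $0\leq \mathcal{B}_{t}^{L}(\mathbf{c})-\mathcal{R}_{t}^{L}(\mathbf{b})$, which is part of the statement. The inequality $\sum_{k}L(x_{k},\bar{x}_{k-1})\geq \sum_{k}L(x_{k},\bar{x}_{n})$ does not follow from linearity and properness. Linearity collapses $\sum_{k}L(x_{k},c)$ into $nL(\bar{x}_{n},c)$ only when the forecast $c$ is the same for all $k$, whereas the online forecasts $\bar{x}_{k-1}$ change with $k$. Termwise the inequality can fail, e.g.\ at $k=1$ with $\bar{x}_{0}=x_{1}$.

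Your fallback justification fails for the same reason. The sequence $\mathbf{c}$ is not constant on the $\mathbf{b}$-bins, so the minimality $\mathcal{R}_{t}^{L}(\mathbf{b})=\min_{\phi }\mathcal{B}_{t}^{L}(\phi (\mathbf{b}))$ does not apply to it. Going through $\mathbf{b}\times \mathbf{c}$ gives $\mathcal{B}_{t}^{L}(\mathbf{c})\geq \mathcal{R}_{t}^{L}(\mathbf{b}\times \mathbf{c})$ and, from Proposition \ref{p:refine-S}, $\mathcal{R}_{t}^{L}(\mathbf{b}\times \mathbf{c})\leq \mathcal{R}_{t}^{L}(\mathbf{b})$. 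These point in opposite directions and yield nothing. Indeed $\mathcal{B}_{t}^{L}(\mathbf{c})\geq \mathcal{R}_{t}^{L}(\mathbf{b})$ is false for general sequences: $c_{s}=a_{s}$ gives $\mathcal{B}_{t}^{L}(\mathbf{c})=0$, which is exactly what calibeating exploits. So nonnegativity has to come from the specific online-average form of $\mathbf{c}$.

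The fix is to expand around the forecast $\bar{x}_{k-1}$ rather than around $\bar{x}_{k}$. Since $x_{k}+(k-1)\bar{x}_{k-1}=k\bar{x}_{k}$, linearity in the first argument with the common second argument $\bar{x}_{k-1}$ gives
\begin{equation*}
L(x_{k},\bar{x}_{k-1})-\left[ kH(\bar{x}_{k})-(k-1)H(\bar{x}_{k-1})\right] =kL(\bar{x}_{k},\bar{x}_{k-1})-kL(\bar{x}_{k},\bar{x}_{k})=kD(\bar{x}_{k},\bar{x}_{k-1})\geq 0.
\end{equation*}
So the per-bin quantity equals $\sum_{k}kD(\bar{x}_{k},\bar{x}_{k-1})$. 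This is exactly the paper's identity (\ref{eq:v-online-error}) in Proposition \ref{p:online-offline}, and it gives both bounds at once. Nonnegativity comes from properness. The upper bound comes from $D(d,c)\leq \Vert c-d\Vert ^{2}$ for $1$-Lipschitz $L$, so each term is at most $2/k$.

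Your be-the-leader expression is the same sum rewritten as $\sum_{k}[L(x_{k},\bar{x}_{k-1})-L(x_{k},\bar{x}_{k})]$ minus $\sum_{k}(k-1)D(\bar{x}_{k-1},\bar{x}_{k})$. That form is convenient for the upper bound, since it uses Lipschitzness only through the first-order estimate (\ref{eq:D-D-L}). Being a difference, however, it cannot by itself give the lower bound.
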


As in Foster and Hart (2023), we define the \emph{online }$L$\emph{%
-refinement} score, for a pure binning sequence $\mathbf{i}$, as follows:%
\begin{equation*}
\widetilde{\mathcal{R}}_{t}^{L}\equiv \widetilde{\mathcal{R}}_{t}^{L}(%
\mathbf{i}):=\frac{1}{t}\sum_{s=1}^{t}D(a_{s},\bar{a}_{s-1}(i_{s}))
\end{equation*}%
(take $\bar{a}_{0}(i)$ to be an arbitrary point in $C$). In the refinement
score $\mathcal{R}_{t}^{L}$ one uses in each period $s$ an (offline) average
of the states, $\bar{a}_{t}(\cdot )$, taken over all periods from $1$ to $t$%
; this is replaced in $\widetilde{\mathcal{R}}_{t}^{L}$ by the corresponding
online average of the states, $\bar{a}_{s-1}(\cdot )$, taken only over the
past periods, from $1$ to $s-1$. We have:

\begin{proposition}
\label{p:online-R-S}Let $L$ be an $M$-Lipschitz proper scoring rule. Then%
\begin{equation*}
0\leq \widetilde{\mathcal{R}}_{t}^{L}(\mathbf{i})-\mathcal{R}_{t}^{L}(%
\mathbf{i})\leq 2M\frac{N_{t}}{t}\left( \ln \left( \frac{t}{N_{t}}\right)
+1\right) ,
\end{equation*}%
where $N_{t}:=\left\vert \{i_{s}:s\leq t\}\right\vert $ is the number of
bins used up to time $t.$
\end{proposition}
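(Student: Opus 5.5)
The plan is to reduce to a single bin and then telescope, using only properness, linearity of $L$ in its first argument, and the Lipschitz estimate (\ref{eq:D-D-L}) of Proposition \ref{p:L}(iii). Since $D(a_s,x)=L(a_s,x)-H(a_s)$, the entropy terms cancel in $\widetilde{\mathcal{R}}_{t}^{L}-\mathcal{R}_{t}^{L}$, so it suffices to compare $\sum_s L(a_s,\bar a_{s-1}(i_s))$ with $\sum_s L(a_s,\bar a_t(i_s))$. Both sums split over the bins.

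Fix a bin $i$ with $n=n_t(i)$ entries, occurring at times $s_1<\cdots<s_n$. Write $x_k:=a_{s_k}$ and let $\bar x_k$ be the average of $x_1,\ldots,x_k$, with $\bar x_0:=\bar a_0(i)$ arbitrary. Exactly as in the derivation of (\ref{eq:R-H-H}), linearity gives $\sum_{k=1}^n L(x_k,\bar x_n)=nH(\bar x_n)$. So the bin's contribution is
\begin{equation*}
\Delta_i:=\sum_{k=1}^n L(x_k,\bar x_{k-1})-nH(\bar x_n).
\end{equation*}

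\emph{Lower bound.} The key identity is $k\bar x_k=(k-1)\bar x_{k-1}+x_k$. For $k\geq 1$, properness and linearity give
\begin{equation*}
kH(\bar x_k)=kL(\bar x_k,\bar x_k)\le kL(\bar x_k,\bar x_{k-1})=(k-1)H(\bar x_{k-1})+L(x_k,\bar x_{k-1}).
\end{equation*}
For $k=1$ this is simply $H(x_1)\le L(x_1,\bar x_0)$. Summing over $k=1,\ldots,n$ telescopes to $nH(\bar x_n)\le\sum_k L(x_k,\bar x_{k-1})$, i.e., $\Delta_i\ge 0$.

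\emph{Upper bound ("be the leader").} Apply properness in the other direction:
\begin{equation*}
(k-1)H(\bar x_{k-1})\le (k-1)L(\bar x_{k-1},\bar x_k)=kH(\bar x_k)-L(x_k,\bar x_k).
\end{equation*}
Hence $L(x_k,\bar x_k)\le kH(\bar x_k)-(k-1)H(\bar x_{k-1})$. Telescoping gives $\sum_k L(x_k,\bar x_k)\le nH(\bar x_n)$. Therefore
\begin{equation*}
\Delta_i\le\sum_{k=1}^n\bigl[L(x_k,\bar x_{k-1})-L(x_k,\bar x_k)\bigr]\le M\sum_{k=1}^n\|\bar x_{k-1}-\bar x_k\|,
\end{equation*}
where the last step uses (\ref{eq:D-D-L}). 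Now $\bar x_k-\bar x_{k-1}=(x_k-\bar x_{k-1})/k$, and the diameter of $C$ is $\sqrt2\le 2$. This gives $\Delta_i\le 2M\sum_{k=1}^n 1/k\le 2M(1+\ln n_t(i))$; the $k=1$ term, which involves the arbitrary $\bar x_0$, is covered by the diameter bound.

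\emph{Aggregation.} Summing over the $N_t$ nonempty bins and dividing by $t$ gives
\begin{equation*}
0\le\widetilde{\mathcal{R}}_{t}^{L}-\mathcal{R}_{t}^{L}\le\frac{2M}{t}\sum_i\bigl(1+\ln n_t(i)\bigr).
\end{equation*}
Since $\sum_i n_t(i)=t$, concavity of $\ln$ (Jensen) gives $\sum_i\ln n_t(i)\le N_t\ln(t/N_t)$, which yields the stated bound.

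The main obstacle is the upper bound. A naive comparison of each online average $\bar x_{k-1}$ with the final average $\bar x_n$ would only give something like $O(n)$ per bin. The be-the-leader telescoping step is what replaces that comparison with a sum of consecutive increments of size $O(1/k)$, and this is exactly where the Lipschitz assumption (\ref{eq:D-D-L}) is needed, as opposed to mere boundedness.
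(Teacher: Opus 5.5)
Your proof is correct. The lower bound is the paper's argument in different notation: your inequality $kH(\bar x_k)\le (k-1)H(\bar x_{k-1})+L(x_k,\bar x_{k-1})$ is exactly the statement $\eta_k=kD(\bar x_k,\bar x_{k-1})\ge 0$ in the proof of Proposition \ref{p:online-offline}.

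The upper bound takes a genuinely different route. The paper keeps the exact identity (\ref{eq:v-online-error}), $\widetilde v_n-v_n=\frac1n\sum_j jD(\bar x_j,\bar x_{j-1})$. It then bounds each increment by the second-order estimate (\ref{eq:L}), $D\le M\|c-d\|^2$, which gives $jM\cdot 2/j^2=2M/j$. You instead use the be-the-leader inequality and bound the first-order stability term $L(x_k,\bar x_{k-1})-L(x_k,\bar x_k)$ via (\ref{eq:D-D-L}). One checks that this term equals $kD(\bar x_k,\bar x_{k-1})+(k-1)D(\bar x_{k-1},\bar x_k)$. So you bound the paper's increment plus a nonnegative slack, which is why you obtain an inequality rather than an identity.

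Your route is the standard follow-the-leader regret analysis from online learning. It is modular, needing only Lipschitz continuity of $L(d,\cdot)$ in the forecast. Since $\|\bar x_k-\bar x_{k-1}\|\le\sqrt2/k$, it even gives the slightly better constant $\sqrt2M$ in place of $2M$. The paper's route instead yields the exact identity, which holds for every proper scoring rule with no regularity assumption. That identity specializes to the online-variance formula in the quadratic case (Remark (a)). It is also what makes the exact computation $\eta_j\ge 1$ for the unbounded $\alpha=-1$ power rule in Remark (c) possible. Both arguments adapt to $\alpha$-H\"{o}lder rules and give the same $O(n^{-\alpha})$ rate.
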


Before proving this proposition, we show that it directly establishes
Theorem \ref{th:simple-calibeat-univ}.

\medskip

\begin{proof}[Proof of Theorem \protect\ref{th:simple-calibeat-univ}]
Our choice of $c_{t}=\bar{a}_{t-1}^{\mathbf{b}}(b_{t})$ gives $\mathcal{B}%
_{t}^{L}(\mathbf{c})=\widetilde{\mathcal{R}}_{t}^{L}(\mathbf{b})$ for every $%
\mathbf{a},\mathbf{b}$, and $L$; the result then follows from Proposition %
\ref{p:online-R-S} and the fact that $N_{t}\leq \left\vert B\right\vert $.
\end{proof}

\subsubsection{Online vs. Offline Refinement\label{susus:online refinement}}

We now prove Proposition \ref{p:online-R-S}. This will follow from the
following:

\begin{proposition}
\label{p:online-offline}Let $x_{1},\ldots ,x_{n}\in C,$ let $L$ be a scoring
rule with divergence $D$, and define%
\begin{eqnarray*}
v_{n} &%
%TCIMACRO{\TeXButton{:=}{{\;:=\;}}}%
%BeginExpansion
{\;:=\;}%
%EndExpansion
&\frac{1}{n}\sum_{j=1}^{n}D(x_{j},\bar{x}_{n})\text{ and} \\
\widetilde{v}_{n} &%
%TCIMACRO{\TeXButton{:=}{{\;:=\;}}}%
%BeginExpansion
{\;:=\;}%
%EndExpansion
&\frac{1}{n}\sum_{j=1}^{n}D(x_{j},\bar{x}_{j-1}),
\end{eqnarray*}%
where $\bar{x}_{j}:=(x_{1}+...+x_{j})/j$ and $\bar{x}_{0}\in C$ is
arbitrary. Then%
\begin{equation}
\widetilde{v}_{n}-v_{n}=\frac{1}{n}\sum_{j=1}^{n}jD(\bar{x}_{j},\bar{x}%
_{j-1}).  \label{eq:v-online-error}
\end{equation}%
Moreover, if $L$ is an $M$-Lipschitz proper scoring rule, then%
\begin{equation*}
0\leq \widetilde{v}_{n}-v_{n}\leq 2M\frac{\ln n+1}{n}.
\end{equation*}
\end{proposition}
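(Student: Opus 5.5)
We prove the identity (\ref{eq:v-online-error}) by induction on $n$, i.e., by a telescoping argument. Write $S_n:=n v_n=\sum_{j\le n}D(x_j,\bar x_n)$ and $\widetilde S_n:=n\widetilde v_n$. It suffices to show the one-step relation
\begin{equation*}
S_n-S_{n-1}=D(x_n,\bar x_{n-1})-nD(\bar x_n,\bar x_{n-1}),
\end{equation*}
since $\widetilde S_n-\widetilde S_{n-1}=D(x_n,\bar x_{n-1})$, and summing from $j=1$ to $n$ (with $S_0=\widetilde S_0=0$; for $j=1$ we have $\bar x_1=x_1$, so $S_1=0=D(x_1,\bar x_0)-1\cdot D(\bar x_1,\bar x_0)$) gives $\widetilde S_n-S_n=\sum_j jD(\bar x_j,\bar x_{j-1})$.

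For the one-step relation, use $D(d,c)=L(d,c)-L(d,d)$ and linearity of $L$ in its first argument. We get $S_n=\sum_{j\le n}L(x_j,\bar x_n)-\sum_{j\le n}H(x_j)=nL(\bar x_n,\bar x_n)-\sum_{j\le n}H(x_j)=nH(\bar x_n)-\sum_{j\le n}H(x_j)$. Hence $S_n-S_{n-1}=nH(\bar x_n)-(n-1)H(\bar x_{n-1})-H(x_n)$. On the other side, $x_n=n\bar x_n-(n-1)\bar x_{n-1}$, so by linearity in the first argument
\begin{equation*}
D(x_n,\bar x_{n-1})=L(x_n,\bar x_{n-1})-H(x_n)=nL(\bar x_n,\bar x_{n-1})-(n-1)H(\bar x_{n-1})-H(x_n),
\end{equation*}
while $nD(\bar x_n,\bar x_{n-1})=nL(\bar x_n,\bar x_{n-1})-nH(\bar x_n)$. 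Subtracting gives exactly $S_n-S_{n-1}$. This step is purely algebraic and needs no properness. The only subtlety is that $L(\cdot,c)$ extends linearly to affine combinations with a negative coefficient; this holds because $L(d,c)=d\cdot\mathbf L(c)$ is linear on all of $\mathbb R^A$.

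For the bound, properness gives $D\ge0$, so $\widetilde v_n-v_n\ge0$. For the upper bound, use Proposition \ref{p:L}(iii): $D(\bar x_j,\bar x_{j-1})\le M\Vert\bar x_j-\bar x_{j-1}\Vert^2$. For $j\ge2$ we have $\bar x_j-\bar x_{j-1}=(x_j-\bar x_{j-1})/j$, whose norm is at most $\sqrt2/j$, since the diameter of the simplex is $\sqrt2$. Each term $jD(\bar x_j,\bar x_{j-1})$ is therefore at most $2M/j$. For $j=1$, $\bar x_0$ is arbitrary, and the term is at most $M\cdot2=2M/1$. Summing, $\sum_{j=1}^n 2M/j\le 2M(\ln n+1)$, and dividing by $n$ gives the claim.

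The main obstacle is only bookkeeping in the identity, specifically the $j=1$ term and the negative-coefficient linear extension. The diameter constant $2=\gamma^2$ for $C=\Delta(A)$ is what produces the factor $2M$.
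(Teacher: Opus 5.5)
Your proof is correct and is essentially the paper's argument: the paper telescopes $\xi_n=n(\widetilde{v}_n-v_n)$ directly rather than $S_n$ and $\widetilde S_n$ separately, and it obtains the increment $nD(\bar x_n,\bar x_{n-1})$ from the same linearity of $L$ in its first argument. The bound then follows, exactly as in your proposal, from Proposition \ref{p:L}(iii) together with $\Vert\bar x_j-\bar x_{j-1}\Vert^2=\Vert x_j-\bar x_{j-1}\Vert^2/j^2\le 2/j^2$.
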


\begin{proof}
Let $\xi _{0}:=0$ and $\xi _{n}:=n(\widetilde{v}_{n}-v_{n})$ for $n\geq 1$;
canceling the $L(x_{j},x_{j})$ terms that appear in both sums yields 
\begin{equation*}
\xi _{n}=\sum_{j=1}^{n}L(x_{j},\bar{x}_{j-1})-\sum_{j=1}^{n}L(x_{j},\bar{x}%
_{n})=\sum_{j=1}^{n}L(x_{j},\bar{x}_{j-1})-nL(\bar{x}_{n},\bar{x}_{n})
\end{equation*}%
(for the second sum we have used $\bar{x}_{n}=(1/n)(x_{1}+\ldots +x_{n})$).
Put $\eta _{n}:=\xi _{n}-\xi _{n-1};$ we have%
\begin{equation*}
\eta _{n}=L(x_{n},\bar{x}_{n-1})-nL(\bar{x}_{n},\bar{x}_{n})+(n-1)L(\bar{x}%
_{n-1},\bar{x}_{n-1}).
\end{equation*}%
The sum of the first and third terms is $nL(\bar{x}_{n},\bar{x}_{n-1})$
(because $\bar{x}_{n}=(1/n)x_{n}+((n-1)/n)\bar{x}_{n-1}$), and so%
\begin{equation*}
\eta _{n}=n\left[ L(\bar{x}_{n},\bar{x}_{n-1})-L(\bar{x}_{n},\bar{x}_{n})%
\right] =nD(\bar{x}_{n},\bar{x}_{n-1}).
\end{equation*}%
Now $\xi _{n}=\sum_{j=1}^{n}\eta _{j},$ and so we have obtained the claimed
identity.

Properness gives $\eta _{j}\geq 0,$ and so $\xi _{n}\geq 0.$ The Lipschitz
condition gives, by (\ref{eq:L}),%
\begin{equation*}
\eta _{j}\leq jM\left\Vert \bar{x}_{j}-\bar{x}_{j-1}\right\Vert
^{2}=jM\left\Vert \frac{1}{j}(\bar{x}_{j-1}-x_{j})\right\Vert ^{2}\leq jM%
\frac{2}{j^{2}}=\frac{2M}{j}
\end{equation*}%
(we used $\left\Vert x-y\right\Vert ^{2}\leq 2$ for all $x,y\in C$).
Therefore%
\begin{equation*}
\xi _{n}=\sum_{j=1}^{n}\eta _{j}\leq 2M\sum_{j=1}^{n}\frac{1}{j}\leq 2M(\ln
n+1),
\end{equation*}%
completing the proof.
\end{proof}

\medskip

\begin{proof}[Proof of Proposition \protect\ref{p:online-R-S}]
For each bin $i$ with $n_{t}(i)>0,$ Proposition \ref{p:online-offline} gives%
\begin{equation*}
0\leq \frac{1}{n_{t}(i)}\sum_{s\leq t:i_{s}=i}D(a_{s},\bar{a}_{s-1}(i))-%
\frac{1}{n_{t}(i)}\sum_{s\leq t:i_{s}=i}D(a_{s},\bar{a}_{t}(i))\leq 2M\frac{%
\ln n_{t}(i)+1}{n_{t}(i)}.
\end{equation*}%
Averaging over all $i$ with weights $n_{t}(i)/t$ then yields%
\begin{equation*}
0\leq \widetilde{\mathcal{R}}_{t}^{L}-\mathcal{R}_{t}^{L}\leq 2M\frac{1}{t}%
\sum_{i}(\ln n_{t}(i)+1).
\end{equation*}%
By concavity of the logarithm, $\sum_{i}\ln n_{t}(i)\leq N_{t}\ln (t/N_{t})$
(because there are $N_{t}$ nonempty bins, and the sum of the $n_{t}(i)$ is $%
t $); this yields the claimed bound.
\end{proof}

\medskip

\noindent \textbf{Remarks.} \emph{(a)} In the quadratic case formula (\ref%
{eq:v-online-error}) yields Proposition 2 of Foster and Hart (2023) on the
online variance.

\emph{(b)} One may weaken the Lipschitz requirement to $\left\Vert \mathbf{L}%
(c)-\mathbf{L}(c^{\prime })\right\Vert \leq M\left\Vert c-c^{\prime
}\right\Vert ^{\alpha }$ for some $0<\alpha <1$ (this is usually called
\textquotedblleft $\alpha $-H\"{o}lder continuity\textquotedblright ). In
this case, we have $D^{L}(d,c)\leq M\left\Vert d-c\right\Vert ^{1+\alpha }$,
and the proof above then yields $\widetilde{v}_{n}-v_{n}=O(n^{-\alpha })$,
which implies that $\widetilde{\mathcal{R}}_{t}^{L}-\mathcal{R}%
_{t}^{L}=O(t^{-\alpha })$ and $\mathcal{B}_{t}^{L}(\mathbf{c})-\mathcal{R}%
_{t}^{L}(\mathbf{b})=O(t^{-\alpha })$.

\emph{(c)} Consider unbounded scoring rules. For the logarithmic scoring
rule, one may use the regularization of adding a positive constant to each
bin; see Appendix A.9 in Foster and Hart (2023f). However, this does \emph{%
not} work when the slope of $L$ towards the boundary of $C$ is much steeper.
For instance, let $L$ be the $\alpha $-power scoring rule with $\alpha =-1$,
a binary state space ($A=\{0,1\}$), and a single bin. If the sequence of
states is $x_{1}=(1,0)$ followed by $x_{j}=(0,1)$ for all $j\geq 2$, then%
\begin{equation*}
\eta _{j}=jD(\bar{x}_{j},\bar{x}_{j-1})=1+\frac{1}{(j-1)(j-2)^{2}}\geq 1
\end{equation*}%
for all $j\geq 3$, which shows that\footnote{%
Ignore the first $2$ periods where $\eta _{j}$ is infinite---these are the
\textquotedblleft regularization\textquotedblright\ periods in which we
\textquotedblleft seed\textquotedblright\ the bin---and start counting only
from $j=3$.} $\widetilde{v}_{n}-v_{n}\geq (n-2)/n\rightarrow 1$, and the
online refinement score is at a distance of about $1$ from the offline
refinement score.

\subsubsection{Proper-Li Calibeating vs. Proper Calibeating}

While the simple calibeating procedure is calibeating for every Lipschitz
proper scoring rule, the example below shows that it is not calibeating for
all bounded proper scoring rules.

\begin{example}
\label{ex:not-Lipschitz}In the one-dimensional case, where $A=\{0,1\}$,
identify elements $c=(c_{0},c_{1})\in C=\Delta (A)$ with the probability $%
c_{1}$ that $a=1$ (and then $c_{0}=1-c_{1}$). Consider a forecaster with a
constant sequence $\mathbf{b}$ (i.e., $b_{t}=b$ for all $t$). Let the state
sequence $\mathbf{a}$ be the alternating sequence $0,1,0,1,0,1,\ldots $.
Then $\bar{a}_{t}=1/2$ for even $t$ and $\bar{a}_{t}<1/2$ for odd $t$.
Consider the following bounded scoring rule:\footnote{%
It is generated by the utility function $u(a,x)=-\mathbf{1}_{a\neq x}$ for $%
a,x\in \{0,1\}$ (see Section \ref{s:utility} below), with $x^{\ast }(d)=1$
for $d\geq 1/2$ and $x^{\ast }(d)=0$ for $d<1/2$. The fact that the simple
calibeating procedure is not $L$-calibeating holds for every choice of
optimal decision at $d=1/2$ (where all decisions yield the same payoff);
indeed, if $x^{\ast }(1/2)<1$, then consider the state sequence $\mathbf{a}%
=(1,0,1,0,...)$.}%
\begin{equation*}
L(d,c):=\left\{ 
%TCIMACRO{\TeXButton{\arraystretch=0.5}{\renewcommand{\arraystretch}{0.5}}}%
%BeginExpansion
\renewcommand{\arraystretch}{0.5}%
%EndExpansion
\begin{tabular}{lll}
$1-d,$ &  & if $c\geq 1/2$ \\ 
&  &  \\ 
$d,$ &  & if $c<1/2$.%
\end{tabular}%
%TCIMACRO{\TeXButton{\arraystretch=1}{\renewcommand{\arraystretch}{1}}}%
%BeginExpansion
\renewcommand{\arraystretch}{1}%
%EndExpansion
\right. .
\end{equation*}%
$L$ is bounded but is not continuous, and thus not Lipschitz, since $%
L(0,1/2)=1$ and $L(0,1/2-\varepsilon )=0$ for $\varepsilon >0$. The $L$%
-divergence is%
\begin{equation*}
D^{L}(d,c)=\left\{ 
%TCIMACRO{\TeXButton{\arraystretch=0.5}{\renewcommand{\arraystretch}{0.5}}}%
%BeginExpansion
\renewcommand{\arraystretch}{0.5}%
%EndExpansion
\begin{tabular}{lll}
$0,$ &  & if ($d\geq 1/2$ and $c\geq 1/2$) or ($d<1/2$ and $c<1/2$), \\ 
&  &  \\ 
$|2d-1|,$ &  & if ($d\geq 1/2$ and $c<1/2$) or ($d<1/2$ and $c\geq 1/2$).%
\end{tabular}%
%TCIMACRO{\TeXButton{\arraystretch=1}{\renewcommand{\arraystretch}{1}}}%
%BeginExpansion
\renewcommand{\arraystretch}{1}%
%EndExpansion
\right. ,
\end{equation*}%
and the $L$-entropy is%
\begin{equation*}
H^{L}(d)=\min \{d,1-d\}.
\end{equation*}

For the sequence $c_{t}=\bar{a}_{t-1}$ (taking, say, $\bar{a}_{0}=1/2$), we
then have $\mathcal{B}_{t}^{L}(\mathbf{c})=1$ and $\mathcal{R}_{t}^{L}(%
\mathbf{b})\leq 1/2$ for all $t$. Indeed, $H^{L}(a_{s})=0$ for every $s$,
and so $\mathcal{H}_{t}^{L}=(1/t)\sum_{s\leq t}H^{L}(a_{s})=0$; then $%
\mathcal{R}_{t}^{L}(\mathbf{b})=H^{L}(\bar{a}_{t})-\mathcal{H}_{t}^{L}=H^{L}(%
\bar{a}_{t})\leq 1/2$ (in fact, $\bar{a}_{t}\rightarrow 1/2$ implies $%
\mathcal{R}_{t}^{L}(\mathbf{b})\rightarrow 1/2$); finally, $%
D^{L}(a_{s},c_{s})=D^{L}(a_{s},\bar{a}_{s-1})=1$ for every $s$ (because we
have $a_{s}=0$ and $\bar{a}_{s-1}=1/2$ for odd $s$, while $a_{s}=1$ and $%
\bar{a}_{s-1}<1/2$ for even $s$, so in both cases $D^{L}(a_{s},\bar{a}%
_{s-1})=1$), and so $\mathcal{B}_{t}^{L}(\mathbf{c})=(1/t)\sum_{s\leq
t}D^{L}(a_{s},\bar{a}_{s-1})=1$.

The simple calibeating procedure, which is proper-Li calibeating, is thus 
\emph{not} $L$-calibeating for the above bounded proper scoring rule $L$,
and hence \emph{not} proper calibeating.
\end{example}

\subsection{Proper-Li Calibeating by a Proper Continuously-Calibrated
Deterministic Procedure\label{sus:continuous}}

The result of this section is the proper counterpart of Theorems 6 and 12 of
Foster and Hart (2023, 2023ff). Since general binnings need not refine the
standard by-forecast binning (because each bin may well contain forecasts
with different values), and the decomposition $\mathcal{B}^{L}=\mathcal{K}%
^{L}+\mathcal{R}^{L}$ is no longer valid, we use the approximate
Decomposition Theorem \ref{th:decompose-delta} instead of the exact
Decomposition Theorem \ref{th:S-decompose}. Recall Section \ref%
{susus:fractional}: a continuous binning $\Pi =(w_{i})_{i\in I}$ is $\delta $%
\emph{-local} if for every $i$ there is a $y^{i}\in C$ such that $\{c\in
C:w_{i}(c)>0\}\subseteq \bar{B}(y^{i};\delta )$; i.e., all forecasts in bin $%
i$ are within $\delta $ of $y^{i}$. For every forecasting sequence $\mathbf{c%
}$, the resulting general binning sequence $\Pi (\mathbf{c})$ is then $%
\delta $-local with respect to $\mathbf{c}$. Finally, the continuous binning 
$\Pi ^{\ast }$ is given by Proposition 12 of Foster and Hart (2023ff).

\begin{theorem}
\label{th:calibeat by cont cal S}Let $B$ be a finite set. Then there exists
a \emph{deterministic} $\mathbf{b}$-based forecasting procedure $\zeta $
that is proper-Li $B$-calibeating and proper continuously-calibrated.
Specifically: first, for every continuous binning $\Pi $ there is a
deterministic $\mathbf{b}$-based forecasting procedure $\zeta $ such that%
\footnote{%
The notation $o(1)$ denotes a function $\eta (t)$ that satisfies $\eta
(t)\rightarrow 0$ as $t\rightarrow \infty $.}%
\begin{equation}
\mathcal{K}_{t}^{L}(\mathbf{c};\mathbf{b}\times \Pi (\mathbf{c}))\leq o(1);
\label{eq:34}
\end{equation}%
and second, for the continuous binning $\Pi ^{\ast }$ of Foster and Hart
(2023ff), condition (\ref{eq:34}) implies that 
\begin{equation*}
\mathcal{B}_{t}^{L}(\mathbf{c})\leq \mathcal{R}_{t}^{L}(\mathbf{b})+o(1),
\end{equation*}%
and that $\zeta $ is continuously $L$-calibrated. All these hold as $%
t\rightarrow \infty $ uniformly over all sequences $\mathbf{a}$ and $\mathbf{%
b}$ and $1$-Lipschitz proper scoring rules $L$ (i.e., $L\in \mathcal{L}_{1}^{%
\mathrm{Li}}$).
\end{theorem}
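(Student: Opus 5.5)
The plan is to reduce everything to the quadratic case, where Foster and Hart (2023, 2023ff, Theorems 6 and 12) already provide the procedure, and then transfer to $1$-Lipschitz proper scoring rules via Proposition \ref{p:K-KS} and the approximate Decomposition Theorem \ref{th:decompose-delta}.

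\emph{Step 1 (existence for a given $\Pi$).} For a continuous binning $\Pi$, the joint general binning $\mathbf{b}\times \Pi(\mathbf{c})$ puts weight $\mathbf{1}_{b_s=b}\,w_i(c_s)$ in bin $(b,i)$. This is again generated by a continuous binning applied separately in each $b$-subsequence. So I would run, for each $b\in B$ separately, the deterministic continuously calibrated procedure of Theorem 11(D) of Foster and Hart (2021) (equivalently, Theorem \ref{th:cont}) on the periods with $b_s=b$. This gives $\mathcal{K}_t(\mathbf{c};\mathbf{b}\times\Pi(\mathbf{c}))\to 0$ uniformly in $\mathbf{a},\mathbf{b}$. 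The weighted average over $b$ of $o(1)$ terms is $o(1)$ because $B$ is finite, and a $b$ visited only $O(1)$ times contributes $O(1/t)$. Proposition \ref{p:K-KS} then gives $\mathcal{K}_t^L\le \mathcal{K}_t$ for $L\in\mathcal{L}_1^{\mathrm{Li}}$, and even $\le\sqrt{\mathcal{K}_t}$ for all $L\in\mathcal{L}_1$. This proves (\ref{eq:34}) uniformly in $L$.

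\emph{Step 2 (calibeating for $\Pi^\ast$).} Here I use the property of $\Pi^\ast$ from Foster and Hart (2023ff): it is a countable family of continuous binnings at finer and finer scales, with $\delta_k$-local layers and $\delta_k\to0$, and $\mathcal{K}$ with respect to $\Pi^\ast$ going to zero controls $\mathcal{K}$ for every $\delta_k$-local layer $\Pi_k$. I fix $k$ and set $\mathbf{f}=\mathbf{b}\times\Pi_k(\mathbf{c})$, which is $\delta_k$-local with respect to $\mathbf{c}$. Theorem \ref{th:decompose-delta} gives
\[
\mathcal{B}_t^L(\mathbf{c})\le \mathcal{K}_t^L(\mathbf{c};\mathbf{f})+\mathcal{R}_t^L(\mathbf{f})+2\delta_k .
\]
Since $\mathbf{f}$ refines $\mathbf{b}$ (coarsen by summing over $i$), Proposition \ref{p:refine-S} gives $\mathcal{R}_t^L(\mathbf{f})\le\mathcal{R}_t^L(\mathbf{b})$. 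Step 1 gives $\mathcal{K}_t^L(\mathbf{c};\mathbf{f})\le o(1)$. Letting first $t\to\infty$ and then $k\to\infty$ yields $\mathcal{B}_t^L(\mathbf{c})\le\mathcal{R}_t^L(\mathbf{b})+o(1)$, uniformly over $L\in\mathcal{L}_1^{\mathrm{Li}}$, since all bounds depend only on the constant $M=1$.

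\emph{Step 3 (continuous $L$-calibration).} Coarsening $\mathbf{b}\times\Pi^\ast(\mathbf{c})$ to $\Pi^\ast(\mathbf{c})$ is an averaging of bins. Because $\mathcal{K}_t$ is a weighted sum of $\|\bar a-\bar c\|^2$, convexity gives $\mathcal{K}_t(\mathbf{c};\Pi^\ast(\mathbf{c}))\le\mathcal{K}_t(\mathbf{c};\mathbf{b}\times\Pi^\ast(\mathbf{c}))\to0$. Proposition 12 of Foster and Hart (2023ff) then makes $\mathbf{c}$ continuously calibrated. Theorem \ref{th:proper-calibration} upgrades this to proper continuous calibration, and in particular to continuous $L$-calibration.

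The main obstacle is Step 2: extracting from the definition of $\Pi^\ast$ the fact that small calibration on $\mathbf{b}\times\Pi^\ast(\mathbf{c})$ gives small calibration on a $\delta$-local subfamily with $\delta$ arbitrarily small. This has to be done with the $\mathbf{b}$-refinement carried along, and with the $o(1)$ uniform in $\mathbf{a},\mathbf{b}$. I would first try to mimic the proof of Theorem 12 in Foster and Hart (2023ff) verbatim, replacing its exact quadratic Lemma 15 by Theorem \ref{th:decompose-delta}.
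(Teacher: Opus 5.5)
Your proposal is correct and follows essentially the paper's proof: quadratic joint calibration obtained from Foster and Hart is transferred via Proposition \ref{p:K-KS}. For each $\delta_n$-local layer of $\Pi^\ast$ you combine the approximate decomposition (Theorem \ref{th:decompose-delta}) with Proposition \ref{p:refine-S} and let $\delta_n\to 0$, and continuous calibration comes from convexity of the quadratic calibration score under coarsening followed by Theorem \ref{th:proper-calibration}.

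The differences are minor. The paper cites Theorem 12 and (41) of Foster and Hart (2023ff) for Step 1 and for the layer control you flag as the main obstacle. It also derives continuous calibration from the $\Pi_0$ layer with Proposition 3 of Foster and Hart (2021), where you use $\Pi^\ast$ with Proposition 12.
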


\begin{proof}
For every continuous binning $\Pi $ the procedure of Theorem 12 in Foster
and Hart (2023ff) yields (\ref{eq:34}) for the quadratic scoring rule, and
thus uniformly for all $1$-bounded proper scoring rules\footnote{%
The Lipschitz restriction will be needed only for the next argument.} $L$ in 
$\mathcal{L}_{1}$ by Proposition \ref{p:K-KS}.

Now consider $\Pi ^{\ast }$, which contains the binning $\Pi _{0}$ of Foster
and Hart (2021) and a sequence $(\Pi _{n})_{n\geq 1}$ of $\delta _{n}$-local
continuous binnings with $\delta _{n}>0$ converging to $0.$ As shown in the
proof of Theorem 12 in Foster and Hart (2023ff), condition (\ref{eq:34}) for 
$\Pi ^{\ast }$ implies that for each $n\geq 0$ we have%
\begin{equation*}
\mathcal{K}_{t}(\mathbf{c};\mathbf{b}\times \Pi _{n}(\mathbf{c}))\leq o(1)
\end{equation*}%
(see (41) there). For $n=0$ this implies that $\mathcal{K}_{t}(\mathbf{c}%
;\Pi _{0}(\mathbf{c}))\leq \mathcal{K}_{t}(\mathbf{c};\mathbf{b}\times \Pi
_{0}(\mathbf{c}))\leq o(1)$ (by the convexity of the squared norm, since $%
\mathbf{b}\times \Pi _{0}(\mathbf{c})$ refines\footnote{\label{ftn:K(f) ge
K(g)}While for the standard quadratic calibration the inequality $K(\mathbf{c%
};\mathbf{f})\geq K(\mathbf{c};\mathbf{g})$ holds for general (fractional)
binnings $\mathbf{f}$ and $\mathbf{g}$ with $\mathbf{f}$ a refinement of $%
\mathbf{g}$ (by the convexity of $D(a,c)=\left\Vert a-c\right\Vert ^{2}$ in $%
a-c$), it need not hold for other bounded proper scoring rules. For binnings
that refine $\mathbf{c}$, see Corollary \ref{c:refine-S}. See also Foster
and Hart (2023ff, Appendix A.5 and Errata).} $\Pi _{0}(\mathbf{c})$), and so 
$\zeta $ is continuously calibrated (by Proposition 3 in Foster and Hart
2021), and thus proper continuously-calibrated (by Theorem \ref%
{th:proper-calibration}). For $n\geq 1$, using the approximate decomposition
of Theorem \ref{th:decompose-delta} for $1$-Lipschitz proper scoring rules $%
L $ (indeed, the general binning sequence $\Pi _{n}(\mathbf{c})$ is $\delta
_{n}$-local with respect to $\mathbf{c}$, and thus so is its refinement $%
\mathbf{b}\times \Pi _{n}(\mathbf{c})$) and, again, Proposition \ref{p:K-KS}%
, we get%
\begin{eqnarray*}
\mathcal{B}_{t}^{L}(\mathbf{c})-\mathcal{R}_{t}^{L}(\mathbf{b}\times \Pi
_{n}(\mathbf{c})) &\leq &\mathcal{K}_{t}^{L}(\mathbf{c};\mathbf{b}\times \Pi
_{n}(\mathbf{c}))+2\delta _{n} \\
&\leq &\mathcal{K}_{t}(\mathbf{c};\mathbf{b}\times \Pi _{n}(\mathbf{c}%
))+2\delta _{n}\leq 2\delta _{n}+o(1),
\end{eqnarray*}%
uniformly for all $L$ in $\mathcal{L}_{1}^{\mathrm{Li}}$. Since $\mathcal{R}%
_{t}^{L}(\mathbf{b}\times \Pi _{n}(\mathbf{c}))\leq \mathcal{R}_{t}^{L}(%
\mathbf{b})$ by Proposition \ref{p:refine-S}, we get%
\begin{equation*}
\mathcal{B}_{t}^{L}(\mathbf{c})-\mathcal{R}_{t}^{L}(\mathbf{b})\leq 2\delta
_{n}+o(1).
\end{equation*}%
Therefore, $\mathcal{B}_{t}^{L}(\mathbf{c})-\mathcal{R}_{t}^{L}(\mathbf{b}%
)\leq 3\delta _{n}$ for all $t$ large enough; since $\delta _{n}\rightarrow
0,$ this yields $\mathcal{B}_{t}^{L}(\mathbf{c})-\mathcal{R}_{t}^{L}(\mathbf{%
b})\leq o(1).$
\end{proof}

\subsection{Proper Multicalibeating\label{sus:multi}}

Suppose that there are $N\geq 1$ forecasting sequences, $\mathbf{b}%
^{n}=(b_{t}^{n})_{t\geq 1}$ for $n=1,2,\ldots ,N$. We assume that each $%
\mathbf{b}^{n}$ uses only finitely many forecasts: there is a finite set $%
B^{n}$ such that $b_{t}^{n}\in B^{n}$ for all $t\geq 1$. Set $\mathbf{b}=(%
\mathbf{b}^{1},\ldots ,\mathbf{b}^{N});$ we seek a $\mathbf{b}$-based
forecasting procedure---i.e., $c_{t}$ is determined after all the $%
b_{t}^{1},\ldots ,b_{t}^{N}$ are announced (and hence is a function of $%
\mathbf{a}_{t-1},\mathbf{c}_{t-1},\mathbf{b}_{t}^{1},\ldots ,\mathbf{b}%
_{t}^{N}$)---that simultaneously calibeats all the $\mathbf{b}^{n}$
sequences under the relevant classes of proper scoring rules. By applying
the results of the previous section to the joint binning $\mathbf{b}%
^{1}\times \cdots \times \mathbf{b}^{N}$ we obtain the analogue of Theorem 7
in Foster and Hart (2023):

\begin{theorem}
\label{th:multi}Let $B^{1},\ldots ,B^{N}$ be finite sets. Then:

\begin{description}
\item[(i)] For every finite $\delta $-grid $C_{\delta }$ of $C$ there exists
a stochastic $(\mathbf{b}^{1},\ldots ,\mathbf{b}^{N})$-based $C_{\delta }$%
-forecasting procedure $\zeta $ that is proper complete $(\sqrt{\delta }%
,B^{n})$-calibeating, and thus proper $(\sqrt{\delta },B^{n})$-calibeating,
for all $n=1,\ldots ,N$, and is proper $\sqrt{\delta }$-calibrated.
Moreover, $\zeta $ may be taken to be $\delta $-almost deterministic.

\item[(ii)] There exists a simple deterministic $(\mathbf{b}^{1},\ldots ,%
\mathbf{b}^{N})$-based forecasting procedure $\zeta $ that is proper-Li $%
B^{n}$-calibeating for all $n=1,\ldots ,N;$ specifically, the forecast of $%
\zeta $ in period $t$ is $c_{t}=\bar{a}_{t-1}^{\mathbf{b}^{1},\ldots ,%
\mathbf{b}^{N}}(b_{t}^{1},\ldots ,b_{t}^{N})$, the average of the states in
all past periods $s\leq t-1$ where the combination $(b_{t}^{1},\ldots
,b_{t}^{N})$ was used (if $t$ is the first period in which $%
(b_{t}^{1},\ldots ,b_{t}^{N})$ is used, take $c_{t}\in C$ to be arbitrary).

\item[(iii)] There exists a deterministic $(\mathbf{b}^{1},\ldots ,\mathbf{b}%
^{N})$-based $C$-forecasting procedure $\zeta $ that is proper-Li $B^{n}$%
-calibeating for all $n=1,\ldots ,N$, and is proper continuously-calibrated.
\end{description}
\end{theorem}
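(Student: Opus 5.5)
The plan is to reduce all three parts to the single-reference results by taking the product reference sequence. Set $\hat{B}:=B^{1}\times \cdots \times B^{N}$, which is finite, and $\hat{b}_{t}:=(b_{t}^{1},\ldots ,b_{t}^{N})\in \hat{B}$. A $(\mathbf{b}^{1},\ldots ,\mathbf{b}^{N})$-based procedure is then the same thing as a $\hat{\mathbf{b}}$-based procedure. The key structural fact is that, for each $n$, the $\hat{\mathbf{b}}$-binning refines the $\mathbf{b}^{n}$-binning: each $b^{n}$-bin is the union of the $\hat{b}$-bins whose $n$-th coordinate is $b^{n}$. Likewise, $\hat{\mathbf{b}}\times \mathbf{c}$ refines $\mathbf{b}^{n}\times \mathbf{c}$, and $\hat{\mathbf{b}}\times \Pi (\mathbf{c})$ refines $\mathbf{b}^{n}\times \Pi (\mathbf{c})$ for general binnings. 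Proposition \ref{p:refine-S} then gives, for every proper $L$ and every $t$,
\begin{equation*}
\mathcal{R}_{t}^{L}(\hat{\mathbf{b}}\times \mathbf{c})\leq \mathcal{R}_{t}^{L}(\mathbf{b}^{n}\times \mathbf{c})\leq \mathcal{R}_{t}^{L}(\mathbf{b}^{n})\quad \text{and}\quad \mathcal{R}_{t}^{L}(\hat{\mathbf{b}})\leq \mathcal{R}_{t}^{L}(\mathbf{b}^{n}).
\end{equation*}
Consequently, any upper bound on $\mathcal{B}_{t}^{L}(\mathbf{c})-\mathcal{R}_{t}^{L}(\hat{\mathbf{b}}\times \mathbf{c})$ or on $\mathcal{B}_{t}^{L}(\mathbf{c})-\mathcal{R}_{t}^{L}(\hat{\mathbf{b}})$ transfers to each $\mathbf{b}^{n}$ with the same constant, uniformly in $L$, $\mathbf{a}$, and $\mathbf{b}$.

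For (i), I apply Theorem \ref{th:beat-by-calib S} with $B=\hat{B}$. This yields a stochastic, $\delta$-almost deterministic $C_{\delta }$-procedure with $\mathbb{E}[\mathcal{B}_{t}^{L}(\mathbf{c})-\mathcal{R}_{t}^{L}(\hat{\mathbf{b}}\times \mathbf{c})]\leq \delta +o(1)$ uniformly over $L\in \mathcal{L}_{1}$. The first displayed inequality then gives proper complete $(\sqrt{\delta },B^{n})$-calibeating for each $n$. Proper $(\sqrt{\delta },B^{n})$-calibeating and proper $\sqrt{\delta }$-calibration follow exactly as in Theorem \ref{th:proper complete calibeating}, using $\mathcal{R}_{t}^{L}(\mathbf{b}^{n}\times \mathbf{c})\leq \mathcal{R}_{t}^{L}(\mathbf{b}^{n})$ and $\mathcal{K}_{t}^{L}(\mathbf{c})\leq \mathcal{K}_{t}^{L}(\mathbf{c};\hat{\mathbf{b}}\times \mathbf{c})$ (Corollary \ref{c:refine-S}). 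The only care needed is that the $o(1)$ term of Theorem \ref{th:beat-by-calib S} scales with $|\hat{B}|=\prod_{n}|B^{n}|$, which is still finite, so the limit is unaffected.

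For (ii), the stated procedure is exactly the simple procedure of Theorem \ref{th:simple-calibeat-univ} applied to $\hat{\mathbf{b}}$. That theorem gives $\mathcal{B}_{t}^{L}(\mathbf{c})-\mathcal{R}_{t}^{L}(\hat{\mathbf{b}})\leq 2|\hat{B}|(\ln t+1)/t$ for all $L\in \mathcal{L}_{1}^{\mathrm{Li}}$, and combining this with $\mathcal{R}_{t}^{L}(\hat{\mathbf{b}})\leq \mathcal{R}_{t}^{L}(\mathbf{b}^{n})$ finishes the argument. For (iii), I apply Theorem \ref{th:calibeat by cont cal S} with $B=\hat{B}$. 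This gives a deterministic procedure that is proper continuously-calibrated (a property not involving $\mathbf{b}$ at all) and satisfies $\mathcal{B}_{t}^{L}(\mathbf{c})\leq \mathcal{R}_{t}^{L}(\hat{\mathbf{b}})+o(1)$ uniformly over $L\in \mathcal{L}_{1}^{\mathrm{Li}}$, and the same refinement inequality again finishes.

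The only point I expect to need real checking is that Proposition \ref{p:refine-S} applies in the form used. For pure binnings it is immediate that $\hat{\mathbf{b}}$ (respectively $\hat{\mathbf{b}}\times \mathbf{c}$) is a refinement of $\mathbf{b}^{n}$ (respectively $\mathbf{b}^{n}\times \mathbf{c}$) in the paper's sense, via the partition of $\hat{B}$ by the $n$-th coordinate. Everything else is a direct invocation of the earlier theorems, whose bounds are uniform in the size of the finite reference set up to the vanishing error terms.
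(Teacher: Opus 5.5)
Your proposal is correct and follows the paper's route. The paper also obtains all three parts by applying Theorems \ref{th:beat-by-calib S}, \ref{th:simple-calibeat-univ}, and \ref{th:calibeat by cont cal S} to the joint reference sequence $\mathbf{b}^{1}\times\cdots\times\mathbf{b}^{N}$, whose range is the finite set $B^{1}\times\cdots\times B^{N}$. The transfer to each individual $\mathbf{b}^{n}$ uses the refinement inequality of Proposition \ref{p:refine-S}; the paper leaves that step implicit, and you spell it out correctly.
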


\section{Decision-Making Under Uncertainty\label{s:utility}}

Consider a decision maker with a utility function $u:A\times X\rightarrow 
\mathbb{R}$, where $A$ is a finite set of states (\textquotedblleft states
of nature\textquotedblright ) and $X$ is a set of \textquotedblleft
decisions.\textquotedblright\ For every probability distribution $d$ on $A$,
i.e., $d\in C=\Delta (A)$, let\footnote{%
We slightly abuse notation and continue to use $u$ also for its linear
extension (thus, $u(a,x)$ and $u(\mathbf{1}_{a},x)$ are the same).} 
\begin{equation*}
u(d,x):=\mathbb{E}_{a\sim d}\left[ u(a,x)\right]
\end{equation*}%
be the expected utility for the decision $x$ in $X$, and let 
\begin{equation*}
\overline{u}(d):=\sup_{x\in X}u(d,x)
\end{equation*}%
be the highest expected utility.

We assume that the maximum is always attained (which is the case, for
instance, when $X$ is compact and $u$ is continuous). Let $x^{\ast }(d)\in X$
be an \emph{optimal (maximizing) decision} for the distribution of states $%
d\in C$; i.e., $\overline{u}(d)=u(d,x^{\ast }(d))$.

To evaluate a forecast $c$ in $C$, let the loss from using $c$ be the
disutility incurred by choosing the optimal decision $x^{\ast }(c)$ when the
realized state is $a$:%
\begin{equation*}
L_{A}^{u}(a,c):=-u(a,x^{\ast }(c)).
\end{equation*}%
The expected loss when the true probability distribution is $d\in C$ is then%
\begin{equation*}
L^{u}(d,c):=\mathbb{E}_{a\sim d}\left[ L_{A}^{u}(a,c)\right] =-\mathbb{E}%
_{a\sim d}\left[ u(a,x^{\ast }(c))\right] =-u(d,x^{\ast }(c)).
\end{equation*}%
We refer to $L^{u}$ as the scoring rule \emph{induced} by $u$ (more
precisely, by $u$ and $x^{\ast }$; see Remark (c) below). The equivalence
between utility maximizing and proper scoring is well known (see, e.g.,
Savage 1971).

\begin{proposition}
\label{p:u-L}$L$ is a proper scoring rule if and only if there is a utility
function $u$ (with optimal decisions $x^{\ast }$) such that the induced
scoring rule is $L$, i.e., $L=L^{u}$.
\end{proposition}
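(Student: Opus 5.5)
The plan is to prove the two directions separately; both are essentially direct verifications from the definitions.

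\emph{From $u$ to a proper $L$.} Given $u$ with optimal decisions $x^{\ast }$, set $L=L^{u}$, i.e., $L_{A}(a,c)=-u(a,x^{\ast }(c))$. Its linear extension is
\begin{equation*}
L(d,c)=\sum_{a}d(a)L_{A}(a,c)=-u(d,x^{\ast }(c)),
\end{equation*}
as computed before the statement. Optimality of $x^{\ast }(d)$ for $d$ gives $u(d,x^{\ast }(d))=\overline{u}(d)\geq u(d,x^{\ast }(c))$ for every $c$. Hence
\begin{equation*}
L(d,c)=-u(d,x^{\ast }(c))\geq -\overline{u}(d)=L(d,d),
\end{equation*}
which is exactly properness. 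Nothing more is needed for this direction.

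\emph{From a proper $L$ to $u$.} Given a proper $L$, take the decision set to be the forecasts themselves, $X:=C$, and define $u(a,x):=-L_{A}(a,x)$. Then $u(d,x)=-L(d,x)$ by linearity in the first argument. Properness gives $L(d,d)\leq L(d,x)$ for all $x$, i.e., $u(d,d)\geq u(d,x)$. So the supremum $\overline{u}(d)$ is attained, and $x^{\ast }(d):=d$ is an optimal decision. The induced scoring rule is then $L^{u}(a,c)=-u(a,x^{\ast }(c))=-u(a,c)=L_{A}(a,c)$, so $L^{u}=L$.

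\emph{Points to check.} The only thing that needs care is the attainment assumption $\overline{u}(d)=\max_{x}u(d,x)$ imposed in the setup. In the construction it holds automatically, because $x=d$ is a maximizer; we do not need $X$ compact or $u$ continuous. We should also note that the choice of $x^{\ast }$ matters when maximizers are not unique. For that reason the statement is phrased as ``$u$ with optimal decisions $x^{\ast }$'', and in the converse direction we specify the selection $x^{\ast }(d)=d$ explicitly. If boundedness is required of $u$, it is inherited from the boundedness of $L$, since $u$ is just $-L_{A}$.
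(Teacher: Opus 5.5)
Your proof is correct and is essentially the paper's own argument. For the first direction you use optimality of $x^{\ast }(d)$ to get $L^{u}(d,d)\leq L^{u}(d,c)$. For the converse you take $X:=C$, $u:=-L_{A}$ and $x^{\ast }(d)=d$, which is optimal by properness and gives $L^{u}=L$.
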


\begin{proof}
Given $u$ and $x^{\ast }$, the optimality of $x^{\ast }(d)$ yields $%
u(d,x^{\ast }(d))\geq u(d,x^{\ast }(c))$, i.e., $L^{u}(d,d)\leq L^{u}(d,c)$.

Conversely, given a proper scoring rule $L$ set $X:=C$ and $%
u(a,c):=-L_{A}(a,c)$, then $x^{\ast }(d)=d$ is optimal by properness, and it
yields $L^{u}=L$.
\end{proof}

\medskip

\noindent \textbf{Remarks.} \emph{(a)} $L^{u}$ is bounded when $u$ is
bounded (it suffices that $u$ is bounded on the range of $x^{\ast }$).

\emph{(b)} Stronger regularity conditions are needed in order for the
scoring rule $L^{u}$ to be Lipschitz. For instance, let $X$ be a compact
convex subset of a Euclidean space, and for every $a$ let the function $%
u(a,\cdot )$ be continuously differentiable and strongly concave\footnote{%
A function $f(c)$ is \emph{strongly concave} if its curvature is bounded
away from zero; i.e., there exists $\delta >0$ such that $f(c)+\delta
\left\Vert c\right\Vert ^{2}$ is concave.} on $X$; then the optimal decision 
$x^{\ast }(d)$ is unique and Lipschitz in $d$, and so $L^{u}$ is Lipschitz.

\emph{(c)} The choice of the optimal decision $x^{\ast }(d)$ when multiple
maximizers exist may affect specific values of the scoring rule $L^{u}$, but
does not alter its fundamental properties. All subsequent statements hold
for any selection of optimal decisions $x^{\ast }$.

\subsection{Regret, Calibration, and Calibeating\label{sus:u-regret}}

We now study the relation between proper calibration and proper calibeating,
and no-regret when best replying to forecasts.

\subsubsection{Regret and Decision Value\label{susus:u-decision value}}

Let $u$ be a utility function with optimal decision mapping $x^{\ast }$ and
induced proper scoring rule $L^{u}$. A forecasting sequence $\mathbf{c}%
=(c_{t})_{t\geq 1}$ generates a sequence of decisions $\mathbf{x}%
=(x_{t})_{t\geq 1}$ by best replying to the forecast, i.e., $x_{t}=x^{\ast
}(c_{t})$ for every $t$. The resulting \emph{average utility} up to time $t$
is%
\begin{equation*}
U_{t}(\mathbf{c}):=\frac{1}{t}\sum_{s=1}^{t}u(a_{s},x^{\ast }(c_{s})).
\end{equation*}%
The \emph{regret for }$u$ \emph{of best replying to the sequence} $\mathbf{c}
$ (the\emph{\ }$u$\emph{-regret of }$c$\ for short) is the maximal possible
increase in average utility obtainable by using a map $\xi :C\rightarrow X$
from forecasts to decisions instead of $x^{\ast }$:%
\begin{equation*}
\text{\textsc{Reg}}_{t}^{u}(\mathbf{c}):=V_{t}^{u}(\mathbf{c})-U_{t}(\mathbf{%
c}),
\end{equation*}%
where%
\begin{equation*}
V_{t}^{u}(\mathbf{c}):=\max_{\xi :C\rightarrow X}\frac{1}{t}%
\sum_{s=1}^{t}u(a_{s},\xi (c_{s}))
\end{equation*}%
is the \emph{decision} \emph{value} embodied in the sequence $\mathbf{c},$
more precisely, in the $\mathbf{c}$-binning.\footnote{%
We will see in Proposition \ref{p:reg=K} below that the maximum is attained.}
This is the maximal average utility that can be obtained using the
partitioning given by the $\mathbf{c}$ sequence; that is, the decision $%
x_{s} $ at time $s$ is determined by the forecast $c_{s}$ only (this maximal
value can only be computed ex post, i.e., offline). Thus, $c_{s}$ need not
be taken at face value as a forecast, but rather as a \textquotedblleft
signal\textquotedblright\ on which one can base one's decision.

\medskip

\noindent \textbf{Remark. }The notion of regret here is a stronger version
of the so-called \textquotedblleft swap regret.\textquotedblright\ While
swap regret considers all mappings $\phi :X\rightarrow X$ in which the
decision $x$ is replaced throughout by $\phi (x)$, our notion allows for
mappings $\xi :C\rightarrow X$ in which decisions depend directly on
forecasts. Thus, if two distinct forecasts $c\neq c^{\prime }$ yield the
same optimal decision, i.e., $x^{\ast }(c)=x^{\ast }(c^{\prime })=x$, swap
regret would force the same alternative decision $\phi (x)$ for both $c$ and 
$c^{\prime }$, whereas our notion allows for distinct alternative decisions $%
\xi (c)\neq \xi (c^{\prime }).$ For a simple example, in the one-dimensional
case suppose that the best replies are $x_{0}$ if $c\leq 1/2$ and $x_{1}$ if 
$c\geq 1/2$ (with, say, $x_{0}$ selected at $c=1/2$). For swap regret, the
precise value of the forecast is then irrelevant; only which side of $1/2$
it lies on matters.\footnote{%
Thus, one can have the state average below $1/2$ in the periods when $%
c_{s}\leq 1/2$ and the state average above $1/2$ in the periods when $%
c_{s}>1/2$, and hence zero swap regret, while, say, the state average in the
periods when $c_{s}=0.7$ is $0.2$, which makes our regret strictly positive.}
The regret here uses the full distinction among periods provided by the
forecasting sequence $c_{t}$, rather than the possibly coarser distinction
provided by the induced decision sequence $x_{t}=x^{\ast }(c_{t})$ used in
the standard swap regret notion.

\subsubsection{Proper Calibration and Universal No Regret\label{susus:u-no
regret}}

It turns out that the $u$-regret is precisely the $L^{u}$-calibration score.
We show this in a more general setup, which will be used in the next section
as well.

Let $\mathbf{i}$ be a binning sequence (with $i_{t}\in I$) that is finer
than the binning-by-forecast sequence $\mathbf{c}$ (i.e., all entries in an $%
i$-bin have identical $c$). When considering alternative decision sequences
we are now allowed to use not just the forecasts $c_{s}$, but also the
additional information embodied in the binning $i_{s}$. The $u$\emph{-regret
of }$\mathbf{c}$\emph{\ with respect to }$\mathbf{i}$ is thus%
\begin{equation*}
\text{\textsc{Reg}}_{t}^{u}(\mathbf{c};\mathbf{i}):=V_{t}^{u}(\mathbf{i}%
)-U_{t}(\mathbf{c}),
\end{equation*}%
where%
\begin{equation*}
V_{t}^{u}(\mathbf{i}):=\max_{\xi :I\rightarrow X}\frac{1}{t}%
\sum_{s=1}^{t}u(a_{s},\xi (i_{s}))
\end{equation*}%
is the \emph{decision value} of the sequence $\mathbf{i}$. The regret is
always nonnegative,%
\begin{equation*}
\text{\textsc{Reg}}_{t}^{u}(\mathbf{c};\mathbf{i})\geq 0,
\end{equation*}%
because $\mathbf{i}$ refines $\mathbf{c}$ and so the set of mappings
includes best replying to the forecast; moreover, the finer the binning the
higher the regret (because the maximization is taken over a larger set of
mappings). When $\mathbf{i}=\mathbf{c}$, the definition reduces to the
regret of the previous section: \textsc{Reg}$_{t}^{u}(\mathbf{c};\mathbf{c}%
)= $\textsc{Reg}$_{t}^{u}(\mathbf{c})$.

\begin{proposition}
\label{p:reg=K}Let $\mathbf{c}$ be a forecasting sequence, $\mathbf{i}$ a
binning sequence, and $u$ a utility function with induced proper scoring
rule $L^{u}$. Then%
\begin{eqnarray*}
U_{t}(\mathbf{c}) &=&-\mathcal{B}_{t}^{L^{u}}(\mathbf{c})-\mathcal{H}%
_{t}^{L^{u}}\text{\ \ and} \\
V_{t}^{u}(\mathbf{i}) &=&\frac{1}{t}\sum_{s=1}^{t}u(a_{s},x^{\ast }(\bar{a}%
_{t}(i_{s})))=-\mathcal{R}_{t}^{L^{u}}(\mathbf{i})-\mathcal{H}_{t}^{L^{u}},
\end{eqnarray*}%
and so if $\mathbf{i}$ is a refinement of $\mathbf{c}$ then%
\begin{equation*}
\text{\textsc{Reg}}_{t}^{u}(\mathbf{c};\mathbf{i})=\mathcal{K}_{t}^{L^{u}}(%
\mathbf{c};\mathbf{i});
\end{equation*}%
in particular, for the by-forecast binning $\mathbf{i}=\mathbf{c}$, we have%
\begin{equation*}
\text{\textsc{Reg}}_{t}^{u}(\mathbf{c})=\mathcal{K}_{t}^{L^{u}}(\mathbf{c}).
\end{equation*}
\end{proposition}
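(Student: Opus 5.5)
The plan is to express both $U_t$ and $V_t^u$ through the induced scoring rule $L^u$ and its entropy $H^{L^u}$, and then read off the regret identity from the Decomposition Theorem \ref{th:S-decompose}.

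\textbf{The identity for $U_t$.} By definition $u(a_s,x^\ast(c_s))=-L^u(a_s,c_s)$. Averaging over $s$ and using (\ref{eq:B=avg-L - H}), i.e.\ $\frac1t\sum_s L(a_s,c_s)=\mathcal{B}_t^{L}(\mathbf{c})+\mathcal{H}_t^{L}$, gives $U_t(\mathbf{c})=-\mathcal{B}_t^{L^u}(\mathbf{c})-\mathcal{H}_t^{L^u}$.

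\textbf{The identity for $V_t^u(\mathbf{i})$.} I would group the sum by bins. For any $\xi:I\to X$,
\[
\frac1t\sum_{s=1}^t u(a_s,\xi(i_s))=\sum_{i\in I}\frac{n_t(i)}{t}\,u(\bar a_t(i),\xi(i)),
\]
using linearity of $u(\cdot,x)$ in the distribution. This splits into independent maximizations, one per nonempty bin. Each is attained at $\xi(i)=x^\ast(\bar a_t(i))$ with value $\overline u(\bar a_t(i))$, since the maximum is assumed attained. Empty bins are irrelevant, so $\xi$ can be arbitrary there. This shows the max in $V_t^u$ exists and equals $\frac1t\sum_s u(a_s,x^\ast(\bar a_t(i_s)))$.

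Next, $\overline u(d)=u(d,x^\ast(d))=-L^u(d,d)=-H^{L^u}(d)$, so
\[
V_t^u(\mathbf{i})=-\sum_i \frac{n_t(i)}{t}\,H^{L^u}(\bar a_t(i)).
\]
By (\ref{eq:R-H-H}) this equals $-\mathcal{R}_t^{L^u}(\mathbf{i})-\mathcal{H}_t^{L^u}$. Equivalently, one can note $\frac1t\sum_s L^u(a_s,\bar a_t(i_s))=\mathcal{R}_t^{L^u}(\mathbf{i})+\mathcal{H}_t^{L^u}$ from the first line of (\ref{eq:R-H-H}).

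\textbf{The regret identity.} Subtracting, $\textsc{Reg}_t^u(\mathbf{c};\mathbf{i})=\mathcal{B}_t^{L^u}(\mathbf{c})-\mathcal{R}_t^{L^u}(\mathbf{i})$, and the $\mathcal{H}$ terms cancel. When $\mathbf{i}$ refines $\mathbf{c}$, Theorem \ref{th:S-decompose} applied to the proper rule $L^u$ (proper by Proposition \ref{p:u-L}) identifies this difference with $\mathcal{K}_t^{L^u}(\mathbf{c};\mathbf{i})$. Taking $\mathbf{i}=\mathbf{c}$ gives the last claim.

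\textbf{Main obstacle.} The only delicate step is the bin-wise maximization. It needs the linearity of $u$ in the state distribution, so that each bin contributes $u(\bar a_t(i),\xi(i))$, and it needs to handle empty bins. Everything else is bookkeeping with the identities already established.
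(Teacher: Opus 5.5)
Your proposal is correct and follows the paper's proof essentially step for step. You obtain $U_t$ from (\ref{eq:B=avg-L - H}), maximize $V_t^u(\mathbf{i})$ bin by bin at $x^\ast(\bar a_t(i))$ and combine this with (\ref{eq:R-H-H}), and then read off the regret identity from the Decomposition Theorem \ref{th:S-decompose}.
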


\begin{proof}
By the definition of $L\equiv L^{u}$ we have $u(a_{s},x^{\ast
}(c_{s}))=-L(a_{s},c_{s})$ for every $s$, and so, by (\ref{eq:B=avg-L - H}),%
\begin{equation*}
U_{t}(\mathbf{c})=-\frac{1}{t}\sum_{s=1}^{t}L(a_{s},c_{s})=-\mathcal{B}%
_{t}^{L}(\mathbf{c})-\mathcal{H}_{t}^{L}.
\end{equation*}%
Next, 
\begin{eqnarray*}
\frac{1}{t}\sum_{s=1}^{t}u(a_{s},\xi (i_{s})) &=&\sum_{i}\left( \frac{%
n_{t}(i)}{t}\right) \frac{1}{n_{t}(i)}\sum_{s\leq t:i_{s}=i}u(a_{s},\xi (i))
\\
&=&\sum_{i}\left( \frac{n_{t}(i)}{t}\right) u(\bar{a}_{t}(i),\xi (i)).
\end{eqnarray*}%
This is maximized when $\xi (i)=x^{\ast }(\bar{a}_{t}(i))$ for each $i$, and
so%
\begin{eqnarray*}
V_{t}^{u}(\mathbf{i}) &=&\frac{1}{t}\sum_{s=1}^{t}u(a_{s},x^{\ast }(\bar{a}%
_{t}(i_{s})))=\sum_{i}\left( \frac{n_{t}(i)}{t}\right) u(\bar{a}%
_{t}(i),x^{\ast }(\bar{a}_{t}(i))) \\
&=&-\sum_{i}\left( \frac{n_{t}(i)}{t}\right) L(\bar{a}_{t}(i),\bar{a}%
_{t}(i))=-\mathcal{R}_{t}^{L}(\mathbf{i})-\mathcal{H}_{t}^{L}
\end{eqnarray*}%
(see (\ref{eq:R-H-H})). The Decomposition Theorem \ref{th:S-decompose} then
yields the final equality.
\end{proof}

\medskip

Thus, having no regret is the same as being calibrated. When considering all
bounded utility functions---equivalently, all bounded proper scoring rules
(by Proposition \ref{p:u-L} and Remark (a) there)---this yields:

\begin{quote}
A forecasting procedure is proper calibrated if and only if every decision
maker with bounded utility has no regret when best replying to the forecasts.%
\footnote{%
The no-regret statement should be understood to hold uniformly over
utilities. This decision-theoretic characterization is closely related to
Foster and Vohra (1997, 1998); see also Hu and Wu (2024), who analyze the
corresponding worst-case decision loss in the one-dimensional (binary state)
case.}
\end{quote}

\noindent Using \textquotedblleft universal\textquotedblright\ to refer to 
\emph{all} bounded utility functions, we will say that the procedure $\sigma 
$ has \emph{universal regret} of at most $\varepsilon $ if the regrets are
guaranteed to be $\leq \varepsilon ^{2}$, uniformly over states of nature
and utilities; formally,%
\begin{equation}
\varlimsup_{t\rightarrow \infty }\left( \sup_{u\in \mathcal{U}_{1}}\sup_{%
\mathbf{a}_{t}\in A^{t}}\mathbb{E}\left[ \text{\textsc{Reg}}_{t}^{u}(\mathbf{%
c})\right] \right) \leq \varepsilon ^{2},  \label{eq:univ-regret}
\end{equation}%
where $\mathcal{U}_{1}$ denotes all utility functions $u$ with induced
proper scoring rule $L^{u}$ that is $1$-bounded (i.e., $L^{u}\in \mathcal{L}%
_{1}$). For $\varepsilon =0$ we thus get:

\begin{theorem}
\label{th:proper calib univ no regret}A forecasting procedure $\sigma $ is
proper calibrated if and only if it has universal no regret.
\end{theorem}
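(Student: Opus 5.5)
The plan is to reduce the statement directly to the identity of Proposition \ref{p:reg=K}, applied with the by-forecast binning $\mathbf{i}=\mathbf{c}$:
\begin{equation*}
\text{\textsc{Reg}}_{t}^{u}(\mathbf{c})=\mathcal{K}_{t}^{L^{u}}(\mathbf{c}).
\end{equation*}
This identity holds pathwise, for every realization of the random forecasts, every $t$, and every $\mathbf{a}_{t}$. Taking expectations over the randomizations of $\sigma$ then gives $\mathbb{E}[\text{\textsc{Reg}}_{t}^{u}(\mathbf{c})]=\mathbb{E}[\mathcal{K}_{t}^{L^{u}}(\mathbf{c})]$. So the two suprema that appear in the definitions (proper calibration with $\varepsilon=0$, and (\ref{eq:univ-regret}) with $\varepsilon=0$) are suprema of the same quantities. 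What remains is to check that the two index sets $\mathcal{L}_{1}$ and $\{L^{u}:u\in\mathcal{U}_{1}\}$ coincide.

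First, for every $u\in\mathcal{U}_{1}$, the induced rule $L^{u}$ is proper by Proposition \ref{p:u-L}, and it is $1$-bounded by the definition of $\mathcal{U}_{1}$. Hence $L^{u}\in\mathcal{L}_{1}$, and therefore
\begin{equation*}
\sup_{u\in\mathcal{U}_{1}}\sup_{\mathbf{a}_{t}}\mathbb{E}[\text{\textsc{Reg}}_{t}^{u}(\mathbf{c})]\leq\sup_{L\in\mathcal{L}_{1}}\sup_{\mathbf{a}_{t}}\mathbb{E}[\mathcal{K}_{t}^{L}(\mathbf{c})].
\end{equation*}
This gives the implication from proper calibration to universal no regret.

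Conversely, given any $L\in\mathcal{L}_{1}$, I will use the construction in the proof of Proposition \ref{p:u-L}: take $X=C$, $u(a,c)=-L_{A}(a,c)$ and $x^{\ast}(d)=d$, which is optimal by properness. This yields $L^{u}=L$, so $u\in\mathcal{U}_{1}$. The reverse inequality between the suprema follows, giving the implication from universal no regret to proper calibration. Taking $\varlimsup_{t\to\infty}$ of these equal suprema shows that one limit is $0$ if and only if the other is.

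The only subtle point is Remark (c): $L^{u}$ depends on the selection $x^{\ast}$, so $\mathcal{U}_{1}$ should be read as pairs $(u,x^{\ast})$, and \textsc{Reg}$^{u}$ is computed with that same selection. I would state this explicitly. Under that reading the converse direction works with the specific selection $x^{\ast}(d)=d$. For the forward direction, any selection still produces a proper, $1$-bounded $L^{u}$, so no difficulty arises. The same argument in fact gives the $\varepsilon$-version: proper $\varepsilon$-calibration is equivalent to universal regret at most $\varepsilon$.
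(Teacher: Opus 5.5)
Your proposal is correct and follows the paper's argument. It combines the pathwise identity $\text{\textsc{Reg}}_t^u(\mathbf{c})=\mathcal{K}_t^{L^u}(\mathbf{c})$ from Proposition \ref{p:reg=K} with Proposition \ref{p:u-L}, which identifies $\{L^u: u\in\mathcal{U}_1\}$ with $\mathcal{L}_1$, so the two suprema coincide for every $t$. Your explicit reading of $\mathcal{U}_1$ as pairs $(u,x^\ast)$ and your observation that the $\varepsilon$-versions are equivalent are both accurate and only spell out what the paper leaves implicit.
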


Since proper calibration is equivalent to calibration, we have:%
\begin{equation*}
\begin{tabular}{|c|}
\hline
$\text{calibration\ }\Longleftrightarrow \;\text{proper calibration\ }%
\Longleftrightarrow \;\text{universal no regret}$ \\ \hline
\end{tabular}%
\end{equation*}

Moreover:

\begin{corollary}
Let $\sigma $ be a proper $\varepsilon $-calibrated procedure. Then%
\begin{equation*}
\mathbb{E}\left[ \text{\textsc{Reg}}_{t}^{u}(\mathbf{c})\right] \leq
\varepsilon ^{2}+o(1)
\end{equation*}%
as $t\rightarrow \infty $, uniformly over all state sequences $\mathbf{a}$
and all utility functions $u$ with induced proper scoring rule $L^{u}$ that
is $1$-bounded.
\end{corollary}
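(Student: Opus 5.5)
The corollary follows by combining Proposition \ref{p:reg=K} with the definition of proper $\varepsilon$-calibration. I would first fix an arbitrary utility function $u$ whose induced scoring rule $L^{u}$ is $1$-bounded, together with a selection of optimal decisions $x^{\ast}$. By Remark (c) following Proposition \ref{p:u-L}, the choice of selection does not matter. Proposition \ref{p:u-L} guarantees that $L^{u}$ is proper, so $L^{u}\in \mathcal{L}_{1}$.

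Next I apply Proposition \ref{p:reg=K} with the by-forecast binning $\mathbf{i}=\mathbf{c}$. This gives the pathwise identity $\text{\textsc{Reg}}_{t}^{u}(\mathbf{c})=\mathcal{K}_{t}^{L^{u}}(\mathbf{c})$, valid for every realization of the random forecasts and every state sequence $\mathbf{a}_{t}$. Taking expectations over the randomization of $\sigma$ then yields
\begin{equation*}
\mathbb{E}\left[ \text{\textsc{Reg}}_{t}^{u}(\mathbf{c})\right] =\mathbb{E}\left[ \mathcal{K}_{t}^{L^{u}}(\mathbf{c})\right] \leq \sup_{L\in \mathcal{L}_{1}}\sup_{\mathbf{a}_{t}\in A^{t}}\mathbb{E}\left[ \mathcal{K}_{t}^{L}(\mathbf{c})\right] =:\eta (t).
\end{equation*}

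The right-hand side $\eta(t)$ does not depend on $u$ or on $\mathbf{a}$. Proper $\varepsilon$-calibration states exactly that $\varlimsup_{t\rightarrow \infty }\eta (t)\leq \varepsilon ^{2}$, that is, $\eta (t)\leq \varepsilon ^{2}+o(1)$. The $o(1)$ term here is $\max\{\eta(t)-\varepsilon^{2},0\}$, which tends to $0$. Because this bound is the same for every $u\in\mathcal{U}_{1}$ and every $\mathbf{a}$, the required uniformity holds.

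The one step that needs care is that Proposition \ref{p:reg=K} is stated for a fixed deterministic sequence $\mathbf{c}$. Applying it inside the expectation is legitimate because it holds realization by realization. Beyond that there is no real obstacle: the argument is just the identity of Proposition \ref{p:reg=K} combined with the definition of proper $\varepsilon$-calibration.
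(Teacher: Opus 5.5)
Your proposal is correct and matches the paper's argument, which the paper leaves implicit after Theorem \ref{th:proper calib univ no regret}. It uses the pathwise identity $\text{\textsc{Reg}}_{t}^{u}(\mathbf{c})=\mathcal{K}_{t}^{L^{u}}(\mathbf{c})$ from Proposition \ref{p:reg=K} together with the definition of proper $\varepsilon$-calibration, and the uniformity over $u$ and $\mathbf{a}$ comes from the suprema over $\mathcal{L}_{1}$ and $A^{t}$ in that definition.
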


\medskip

This applies, for instance, to the result of Theorem \ref%
{th:proper-calibration}.

\subsubsection{Proper Complete Calibeating and Universal Subsuming\label%
{susus:u-subsuming}}

We next consider the setup with a reference\ sequence $\mathbf{b}$ with $%
b_{t}$ in an arbitrary fixed finite set $B$ (only its induced binning will
matter for now; see also Appendix \ref{sus-a:improvement}).

Given a utility function $u,$ best replying to the forecasts $\mathbf{c}$ is
now measured not just against the decision value $V(\mathbf{c})$ of $\mathbf{%
c}$, but rather against the possibly higher target that is now available:
the \emph{joint} \emph{decision value of }$\mathbf{b}$\emph{\ and} $\mathbf{c%
}$, namely 
\begin{equation*}
V_{t}^{u}(\mathbf{b}\times \mathbf{c})=\max_{\xi :B\times C\rightarrow X}%
\frac{1}{t}\sum_{s=1}^{t}u(a_{s},\xi (b_{s},c_{s})).
\end{equation*}%
This is the decision value of the joint binning $\mathbf{b}\times \mathbf{c}$%
; we will call it the \emph{\ }$(\mathbf{b},\mathbf{c})$\emph{-joint
decision value}. This is based on the joint binning $\mathbf{b}\times 
\mathbf{c}$ induced by $\mathbf{b}$ and $\mathbf{c}$ together: the decision $%
x_{s}$ at time $s$ may depend on both $b_{s}$ and $c_{s}$ (and nothing
else). The resulting $u$\emph{-regret of }$\mathbf{c}$ \emph{with respect to}
$\mathbf{b}$ \emph{and} $\mathbf{c}$ (the $(\mathbf{b},\mathbf{c})$\emph{%
-joint} \emph{regret of} $\mathbf{c}$\ for short) is then%
\begin{equation*}
\text{\textsc{Reg}}_{t}^{u}(\mathbf{c};\mathbf{b}\times \mathbf{c}%
)=V_{t}^{u}(\mathbf{b}\times \mathbf{c})-U_{t}(\mathbf{c}).
\end{equation*}%
A forecasting procedure with forecasts $\mathbf{c}$ has \emph{universal }$(%
\mathbf{b},\mathbf{c})$\emph{-joint} \emph{regret} of at most $\varepsilon $
if 
\begin{equation*}
\varlimsup_{t\rightarrow \infty }\left( \sup_{u\in \mathcal{U}_{1}}\sup_{%
\mathbf{a}_{t}\in A^{t},\mathbf{b}_{t}\in B^{t}}\mathbb{E}\left[ \text{%
\textsc{Reg}}_{t}^{u}(\mathbf{c};\mathbf{b}\times \mathbf{c})\right] \right)
\leq \varepsilon ^{2}.
\end{equation*}%
Clearly $V_{t}^{u}(\mathbf{b}\times \mathbf{c})\geq V_{t}^{u}(\mathbf{c})$
(the joint decision value can only be higher, since the mappings $\xi $ may
also use $b_{s}$), and so the joint regret is weakly higher than the
standard regret: \textsc{Reg}$_{t}^{u}(\mathbf{c};\mathbf{b}\times \mathbf{c}%
)\geq $\textsc{Reg}$_{t}^{u}(\mathbf{c})$. Therefore (proper) calibration,
as in Theorem \ref{th:proper calib univ no regret}, no longer suffices. What
is needed is proper complete calibeating (i.e., proper calibeating of the
joint):

\begin{theorem}
A $\mathbf{b}$-based forecasting procedure $\zeta $ is proper complete $B$%
-calibeating if and only if it has universal no $(\mathbf{b},\mathbf{c})$%
-joint\emph{\ }regret.
\end{theorem}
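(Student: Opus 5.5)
The plan is to reduce the statement to an exact term-by-term identity, in the same way that Theorem \ref{th:proper calib univ no regret} was obtained from Proposition \ref{p:reg=K}. The joint binning $\mathbf{b}\times\mathbf{c}$ is a pure binning that refines the by-forecast binning $\mathbf{c}$: all periods with the same pair $(b,c)$ carry the same forecast $c$. So Proposition \ref{p:reg=K} applies with $\mathbf{i}=\mathbf{b}\times\mathbf{c}$ and gives, for every utility $u$, every $t$, and every realization of the sequences,
\begin{equation*}
\text{\textsc{Reg}}_{t}^{u}(\mathbf{c};\mathbf{b}\times\mathbf{c})=\mathcal{K}_{t}^{L^{u}}(\mathbf{c};\mathbf{b}\times\mathbf{c})=\mathcal{B}_{t}^{L^{u}}(\mathbf{c})-\mathcal{R}_{t}^{L^{u}}(\mathbf{b}\times\mathbf{c}),
\end{equation*}
where the last equality is the Decomposition Theorem \ref{th:S-decompose}. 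Taking expectations over the randomization of $\zeta$ preserves this identity pathwise.

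Next I would match the two uniform classes. By definition, $\mathcal{U}_{1}$ is the set of utilities whose induced rule $L^{u}$ lies in $\mathcal{L}_{1}$, so $\{L^{u}:u\in\mathcal{U}_{1}\}\subseteq\mathcal{L}_{1}$. Conversely, by Proposition \ref{p:u-L} every $L\in\mathcal{L}_{1}$ equals $L^{u}$ for $X=C$, $u(a,c)=-L_{A}(a,c)$ and $x^{\ast}(d)=d$; this $u$ lies in $\mathcal{U}_{1}$. Hence the two classes of scoring rules coincide, and so
\begin{equation*}
\sup_{u\in\mathcal{U}_{1}}\sup_{\mathbf{a}_{t},\mathbf{b}_{t}}\mathbb{E}\left[\text{\textsc{Reg}}_{t}^{u}(\mathbf{c};\mathbf{b}\times\mathbf{c})\right]=\sup_{L\in\mathcal{L}_{1}}\sup_{\mathbf{a}_{t},\mathbf{b}_{t}}\mathbb{E}\left[\mathcal{B}_{t}^{L}(\mathbf{c})-\mathcal{R}_{t}^{L}(\mathbf{b}\times\mathbf{c})\right]
\end{equation*}
for every $t$. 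Taking $\varlimsup_{t\to\infty}$ then shows that the two conditions are the same for any $\varepsilon$, in particular for $\varepsilon=0$, which is the statement.

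The only delicate point I expect is the selection issue. The regret is defined using a specific $x^{\ast}$, and different $u$ (or different selections of optimal decisions) can induce the same $L$. This is harmless here because the identity above holds for each pair $(u,x^{\ast})$ with its own induced $L^{u}$, as in Remark (c). It also helps that the maximum defining $V_{t}^{u}(\mathbf{b}\times\mathbf{c})$ is attained, which the proof of Proposition \ref{p:reg=K} shows. With these noted, no inequalities are needed: the argument is a pure identity plus the equality of the two index sets.
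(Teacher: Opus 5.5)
Your proposal is correct and follows essentially the same route as the paper: the Decomposition Theorem identifies $\mathcal{B}_t^{L}(\mathbf{c})-\mathcal{R}_t^{L}(\mathbf{b}\times\mathbf{c})$ with $\mathcal{K}_t^{L}(\mathbf{c};\mathbf{b}\times\mathbf{c})$, and Proposition \ref{p:reg=K}, applied to the joint binning (which refines $\mathbf{c}$), identifies that with the joint regret. Your explicit matching of $\mathcal{U}_1$ with $\mathcal{L}_1$ via Proposition \ref{p:u-L} spells out a step the paper leaves implicit (it is only noted in the text before Theorem \ref{th:proper calib univ no regret}).
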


\begin{proof}
Proper complete calibeating is the same as proper calibration on the joint
(because $\mathcal{B}_{t}^{L}(\mathbf{c})-\mathcal{R}_{t}^{L}(\mathbf{b}%
\times \mathbf{c})=\mathcal{K}_{t}^{L}(\mathbf{c};\mathbf{b}\times \mathbf{c}%
)$; cf. Remark (a) in Section \ref{sus:joint}); and this is the same as no
regret on the joint (\textsc{Reg}$_{t}^{u}(\mathbf{c};\mathbf{b}\times 
\mathbf{c})=\mathcal{K}_{t}^{L^{u}}(\mathbf{c};\mathbf{b}\times \mathbf{c})$
by Proposition \ref{p:reg=K}).
\end{proof}

Since proper complete calibeating is equivalent to complete calibeating, we
have:%
\begin{equation*}
\begin{tabular}{|ccl|}
\hline
complete $\text{calibeating}$ & $\Longleftrightarrow $ & $\text{proper
complete calibeating\ }$ \\ 
& $\Longleftrightarrow $ & $\text{universal no joint regret}$ \\ \hline
\end{tabular}%
\end{equation*}

Corresponding approximate versions hold (with the appropriate $\varepsilon $
to $\sqrt{\varepsilon }$ adaptations, as in Theorem \ref%
{th:proper-calibration}).

Next, let us characterize the no-joint-regret condition. We define the \emph{%
decision value added by} $\mathbf{b}$ \emph{to} $\mathbf{c}$ by%
\begin{eqnarray*}
\Delta V_{t}^{u}(\mathbf{b\,}|\,\mathbf{c}) &%
%TCIMACRO{\TeXButton{:=}{{\;:=\;}}}%
%BeginExpansion
{\;:=\;}%
%EndExpansion
&V_{t}^{u}(\mathbf{b}\times\mathbf{c})-V_{t}^{u}(\mathbf{c}) \\
&=&\max_{\xi :B\times C\rightarrow X}\frac{1}{t}\sum_{s=1}^{t}u(a_{s},\xi
(b_{s},c_{s}))-\max_{\xi :C\rightarrow X}\frac{1}{t}\sum_{s=1}^{t}u(a_{s},%
\xi (c_{s}))\geq 0.
\end{eqnarray*}%
This measures the increase in decision value from using the signals in $%
\mathbf{b}$ in addition to those in $\mathbf{c}$. When the value added
vanishes asymptotically, i.e., $\Delta V_{t}^{u}(\mathbf{b\,}|\,\mathbf{c}%
)\rightarrow 0$ as $t\rightarrow \infty $, we will say that $\mathbf{c}$ 
\emph{subsumes} $\mathbf{b}$ (for $u$); in this case, using only $\mathbf{c}$
and ignoring $\mathbf{b}$ does not lose decision value (in the limit).
Requiring this for all utility functions, we say that a procedure with
forecasting sequence $\mathbf{c}$ is \emph{universally }$B$\emph{-subsuming}
if it is guaranteed to subsume every $B$-valued reference sequence; formally,%
\begin{equation*}
\lim_{t\rightarrow \infty }\left( \sup_{u\in \mathcal{U}_{1}}\sup_{\mathbf{a}%
_{t}\in A^{t},\mathbf{b}_{t}\in B^{t}}\mathbb{E}\left[ \Delta V_{t}^{u}(%
\mathbf{b\,}|\,\mathbf{c})\right] \right) =0.
\end{equation*}

We have the following key decomposition:

\begin{proposition}
\label{p:K=DV+Reg}For every forecasting sequence $\mathbf{c}$ and sequence $%
\mathbf{b}$ we have%
\begin{equation*}
\mathcal{K}_{t}^{L^{u}}(\mathbf{c};\mathbf{b}\times \mathbf{c})=\text{%
\textsc{Reg}}_{t}^{u}(\mathbf{c};\mathbf{b}\times \mathbf{c})=\text{\textsc{%
Reg}}_{t}^{u}(\mathbf{c})+\Delta V_{t}^{u}(\mathbf{b\,}|\,\mathbf{c})
\end{equation*}%
for all $t\geq 1$, all state sequences $\mathbf{a}_{t}\in A^{t}$, all
sequences $\mathbf{b}_{t}\in B^{t}$, and all utility functions $u$ with
induced proper scoring rule $L^{u}.$
\end{proposition}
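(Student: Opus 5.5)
The first equality is Proposition \ref{p:reg=K} applied with the binning sequence $\mathbf{i}=\mathbf{b}\times\mathbf{c}$. This joint binning refines the by-forecast binning $\mathbf{c}$, since all entries in a $(b,c)$-bin carry the same forecast $c$. The hypothesis of Proposition \ref{p:reg=K} is therefore satisfied, and it yields directly
\begin{equation*}
\text{\textsc{Reg}}_{t}^{u}(\mathbf{c};\mathbf{b}\times\mathbf{c})=\mathcal{K}_{t}^{L^{u}}(\mathbf{c};\mathbf{b}\times\mathbf{c}).
\end{equation*}

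The second equality is purely algebraic. By definition,
\begin{equation*}
\text{\textsc{Reg}}_{t}^{u}(\mathbf{c};\mathbf{b}\times\mathbf{c})=V_{t}^{u}(\mathbf{b}\times\mathbf{c})-U_{t}(\mathbf{c}).
\end{equation*}
I will add and subtract $V_{t}^{u}(\mathbf{c})$ to obtain
\begin{equation*}
\left[V_{t}^{u}(\mathbf{b}\times\mathbf{c})-V_{t}^{u}(\mathbf{c})\right]+\left[V_{t}^{u}(\mathbf{c})-U_{t}(\mathbf{c})\right].
\end{equation*}
The first bracket is $\Delta V_{t}^{u}(\mathbf{b}\,|\,\mathbf{c})$ by definition. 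The second bracket is $\text{\textsc{Reg}}_{t}^{u}(\mathbf{c};\mathbf{c})=\text{\textsc{Reg}}_{t}^{u}(\mathbf{c})$, using that $V_{t}^{u}(\mathbf{c})$ defined with $\xi:C\rightarrow X$ coincides with the decision value of the by-forecast binning $\mathbf{i}=\mathbf{c}$.

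The only point needing care is that all these quantities are finite and the maxima are attained, so that the subtraction is legitimate. Proposition \ref{p:reg=K} settles this: it shows that $V_{t}^{u}(\mathbf{i})$ is attained at $\xi(i)=x^{\ast}(\bar a_{t}(i))$ and equals $-\mathcal{R}_{t}^{L^{u}}(\mathbf{i})-\mathcal{H}_{t}^{L^{u}}$, for both $\mathbf{i}=\mathbf{c}$ and $\mathbf{i}=\mathbf{b}\times\mathbf{c}$.

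As a consistency check, the same identity can be recovered in scoring-rule form from Proposition \ref{p:reg=K}. Substituting the expressions for $V$ and $U$, the terms $\mathcal{H}_{t}^{L^{u}}$ cancel, and the claim becomes
\begin{equation*}
\mathcal{B}^{L}-\mathcal{R}^{L}(\mathbf{b}\times\mathbf{c})=\left(\mathcal{B}^{L}-\mathcal{R}^{L}(\mathbf{c})\right)+\left(\mathcal{R}^{L}(\mathbf{c})-\mathcal{R}^{L}(\mathbf{b}\times\mathbf{c})\right),
\end{equation*}
which is trivially true. Its pieces match by the Decomposition Theorem \ref{th:S-decompose}: $\mathcal{K}^{L}(\mathbf{c})$ is the first term, and $\Delta V$ is the second. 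There is no real obstacle here; the main thing to verify is the measurability condition, namely that the joint binning refines $\mathbf{c}$.
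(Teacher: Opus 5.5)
Your proposal is correct and follows the paper's proof: the first equality is Proposition \ref{p:reg=K} applied to the joint binning $\mathbf{b}\times\mathbf{c}$ (which refines $\mathbf{c}$), and the second comes from adding and subtracting $V_{t}^{u}(\mathbf{c})$. Your scoring-rule consistency check via the Decomposition Theorem \ref{th:S-decompose} is a harmless extra that the paper does not include.
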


\begin{proof}
We have%
\begin{eqnarray*}
\mathcal{K}_{t}^{L^{u}}(\mathbf{c};\mathbf{b}\times \mathbf{c}) &=&\text{%
\textsc{Reg}}_{t}^{u}(\mathbf{c};\mathbf{b}\times \mathbf{c})=V_{t}^{u}(%
\mathbf{b}\times \mathbf{c})-U_{t}(\mathbf{c}) \\
&=&[V_{t}^{u}(\mathbf{b}\times \mathbf{c})-V_{t}^{u}(\mathbf{c})]+[V_{t}^{u}(%
\mathbf{c})-U_{t}(\mathbf{c})] \\
&=&\Delta V_{t}^{u}(\mathbf{b\,}|\,\mathbf{c})+\text{\textsc{Reg}}_{t}^{u}(%
\mathbf{c}).
\end{eqnarray*}
\end{proof}

\medskip

This yields

\begin{theorem}
\label{th: proper calibeat universal}A $\mathbf{b}$-based forecasting
procedure is proper complete $B$-calibeating if and only if it has universal
no regret and is universally $B$-subsuming.
\end{theorem}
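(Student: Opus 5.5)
The plan is to read the statement off Proposition \ref{p:K=DV+Reg} together with the preceding theorem. First, by the Decomposition Theorem \ref{th:S-decompose}, applied to the joint binning $\mathbf{b}\times\mathbf{c}$ (which refines $\mathbf{c}$), we have $\mathcal{B}_{t}^{L}(\mathbf{c})-\mathcal{R}_{t}^{L}(\mathbf{b}\times\mathbf{c})=\mathcal{K}_{t}^{L}(\mathbf{c};\mathbf{b}\times\mathbf{c})$. By Proposition \ref{p:u-L} and Remark (a) there, ranging over $L\in\mathcal{L}_{1}$ is the same as ranging over $u\in\mathcal{U}_{1}$ with $L=L^{u}$. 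Hence proper complete $B$-calibeating is the statement that
\begin{equation*}
\sup_{u\in\mathcal{U}_{1}}\sup_{\mathbf{a}_{t},\mathbf{b}_{t}}\mathbb{E}\left[\mathcal{K}_{t}^{L^{u}}(\mathbf{c};\mathbf{b}\times\mathbf{c})\right]\rightarrow 0 .
\end{equation*}
Proposition \ref{p:K=DV+Reg} then rewrites the quantity inside the expectation as \textsc{Reg}$_{t}^{u}(\mathbf{c})+\Delta V_{t}^{u}(\mathbf{b}\,|\,\mathbf{c})$.

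The key point is that both summands are nonnegative. For the regret, \textsc{Reg}$_{t}^{u}(\mathbf{c})=\mathcal{K}_{t}^{L^{u}}(\mathbf{c})\geq 0$ by properness. For the value added, $\Delta V_{t}^{u}\geq 0$, since $\xi$ may ignore $b$. Both hold pointwise, for every realization of the forecasts.

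For the "only if" direction, nonnegativity gives the pointwise bounds $0\leq$ \textsc{Reg}$_{t}^{u}(\mathbf{c})\leq\mathcal{K}_{t}^{L^{u}}(\mathbf{c};\mathbf{b}\times\mathbf{c})$ and $0\leq\Delta V_{t}^{u}(\mathbf{b}\,|\,\mathbf{c})\leq\mathcal{K}_{t}^{L^{u}}(\mathbf{c};\mathbf{b}\times\mathbf{c})$. Taking expectations and suprema over $u,\mathbf{a}_{t},\mathbf{b}_{t}$, both left-hand sides tend to $0$. That is exactly universal no regret (\ref{eq:univ-regret}) with $\varepsilon=0$, and universal $B$-subsuming.

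One detail needs checking: the regret here is for a $\mathbf{b}$-based procedure, so the supremum in (\ref{eq:univ-regret}) should also run over $\mathbf{b}_{t}\in B^{t}$. The bound above is uniform in $\mathbf{b}_{t}$, so this causes no difficulty.

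For the "if" direction, fix $u$, $\mathbf{a}_{t}$, $\mathbf{b}_{t}$. By linearity of expectation,
\begin{equation*}
\mathbb{E}[\mathcal{K}_{t}^{L^{u}}(\mathbf{c};\mathbf{b}\times\mathbf{c})]=\mathbb{E}[\text{\textsc{Reg}}_{t}^{u}(\mathbf{c})]+\mathbb{E}[\Delta V_{t}^{u}(\mathbf{b}\,|\,\mathbf{c})].
\end{equation*}
Since the supremum of a sum is at most the sum of the suprema, the $\varlimsup$ of the left side is bounded by the sum of the two $\varlimsup$'s, and both are zero by assumption.

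There is no real obstacle; the only care needed is bookkeeping. The $\varepsilon$-versions use $\varepsilon^{2}$ thresholds, so if approximate versions were wanted, $\varepsilon^{2}$ errors on each side would combine to $2\varepsilon^{2}$. For $\varepsilon=0$, as stated, this is immaterial.
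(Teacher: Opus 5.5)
Your proposal is correct and follows the paper's proof. It combines the identity $\mathcal{B}_{t}^{L}(\mathbf{c})-\mathcal{R}_{t}^{L}(\mathbf{b}\times\mathbf{c})=\mathcal{K}_{t}^{L}(\mathbf{c};\mathbf{b}\times\mathbf{c})$ and the correspondence $L=L^{u}$ with Proposition~\ref{p:K=DV+Reg}, then uses the nonnegativity of \textsc{Reg}$_{t}^{u}(\mathbf{c})$ and $\Delta V_{t}^{u}(\mathbf{b}\,|\,\mathbf{c})$ to conclude that the expected sum vanishes uniformly if and only if each term does, while spelling out the uniformity bookkeeping (pointwise domination, subadditivity of suprema, the extra supremum over $\mathbf{b}_{t}$) that the paper leaves implicit.
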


For a $\mathbf{b}$-based procedure, the bound on regret must also be uniform
over the $\mathbf{b}$ sequences (which are used when generating the
forecasting sequence $\mathbf{c}$): the supremum in (\ref{eq:univ-regret})
is taken in addition over all $\mathbf{b}_{t}\in B^{t}$.

\medskip

\begin{proof}
Use Proposition \ref{p:K=DV+Reg}: since \textsc{Reg} and $\Delta V$ are
nonnegative, $\mathbb{E}\left[ \mathcal{K}_{t}^{L^{u}}(\mathbf{c};\mathbf{b}%
\times \mathbf{c})\right] \rightarrow 0$ if and only if $\mathbb{E}\left[
\Delta V_{t}^{u}(\mathbf{b\,}|\,\mathbf{c})\right] \rightarrow 0$ and $%
\mathbb{E}\left[ \text{\textsc{Reg}}_{t}^{u}(\mathbf{c})\right] \rightarrow
0 $ (with all convergences uniform).
\end{proof}

\medskip

Thus,%
\begin{equation*}
\begin{tabular}{|ccl|}
\hline
complete $\text{calibeating}$ & $\Longleftrightarrow $ & $\text{proper
complete calibeating\ }$ \\ 
& $\Longleftrightarrow $ & $\text{universal no regret\ }+\;\text{universal
subsuming}$ \\ \hline
\end{tabular}%
\end{equation*}

Comparing this with proper calibration, the additional requirement is
universal subsuming. Since the forecasting sequence $\mathbf{c}$ may well
incorporate knowledge of its own, the reference sequence $\mathbf{b}$ yields
no upper bound on how much $\mathbf{c}$ can improve upon it. What matters
for $\mathbf{c}$ is that it leave no decision value from $\mathbf{b}$
\textquotedblleft on the table\textquotedblright ---which is precisely
captured by the subsuming condition that there is no loss in ignoring $%
\mathbf{b}$ and using only $\mathbf{c}$. Moreover, we want this to hold for
every decision problem and utility function (which may well be unknown to
the forecaster)---universally (in addition to the uniformity over all $%
\mathbf{a}$ and $\mathbf{b}$ sequences). Proper complete calibeating is thus
a strong form of calibeating: it produces a forecast sequence $\mathbf{c}$
that, besides having no regret, exhausts all the decision value of the
reference sequence $\mathbf{b}$, universally over all decision problems with
bounded utility.

\medskip

Combining this with the results of Section \ref{sus:joint}, we obtain that
the following statements are all equivalent:

\begin{itemize}
\item complete calibeating

\item proper complete calibeating

\item calibration of the joint binning

\item proper calibration of the joint binning

\item universal no joint regret

\item universal no regret + universal subsuming
\end{itemize}

\medskip

\noindent In Appendix \ref{sus-a:improvement} we show that there is yet
another equivalent property when the reference sequence consists of
forecasts, or more generally induces decisions:

\begin{itemize}
\item universal exhaustive improvement
\end{itemize}

\noindent This says that the change in average utility from the reference
sequence to the calibeating sequence, which proper calibeating guarantees to
be nonnegative in the limit, becomes \textquotedblleft
exhaustive\textquotedblright\ for all bounded utilities if and only if the
calibeating is complete.

\medskip

\noindent \textbf{Remark. }Perfect\textbf{\ }complete calibeating says that $%
\bar{a}(b,c)=c$ for every nonempty $(b,c)$-bin, and thus $\bar{a}(b,c)=\bar{a%
}(c)=c$; this can be expressed as\footnote{%
Fix $t,$ and think of \textsc{a} as a random variable uniformly distributed
over the $t$ values $a_{1},...,a_{t}$; similarly for \textsc{b} and \textsc{c%
}.} $\mathbb{E}\left[ \text{\textsc{a}\thinspace }|\,\text{\textsc{b}},\text{%
\textsc{c}}\right] =\mathbb{E}\left[ \text{\textsc{a}}\,|\,\text{\textsc{c}}%
\right] =\,$\textsc{c}$.$ Viewing the elements of $A$ as the unit vectors of 
$C,$ the expectation of \textsc{a} is a vector whose coordinates are the
corresponding probabilities (frequencies), and so we have 
\begin{equation}
\mathbb{P}\left[ \,\text{\textsc{a}}=a\,|\,\text{\textsc{b}},\text{\textsc{c}%
\thinspace }\right] =\mathbb{P}\left[ \,\text{\textsc{a}}=a\,|\,\text{%
\textsc{c}\thinspace }\right] =\,\text{\textsc{c}}(a)
\label{eq:cc=subsume+calib}
\end{equation}%
for every $a\in A.$ The first equality says that \textsc{b} provides no
information on the distribution of \textsc{a} beyond that contained in 
\textsc{c} (i.e., \textsc{a} is independent of \textsc{b} given \textsc{c};
this is \textquotedblleft subsuming\textquotedblright ); the second, that 
\textsc{c} is calibrated.

Related notions are \textquotedblleft cross-calibration\textquotedblright\
(Feinberg and Stewart 2008), \textquotedblleft conditional-mean
encompassing\textquotedblright\ (Wooldridge 1990) and \textquotedblleft
forecast encompassing\textquotedblright\ (Chong and Hendry 1986).

\bigskip

In summary, we have demonstrated that \emph{complete calibeating} is the
\textquotedblleft ultimate\textquotedblright\ calibeating: it provides a
complete beating of the reference sequence $\mathbf{b}$, in the sense that
it extracts all information about states from $\mathbf{b}$ (which can
therefore be completely ignored), while being itself calibrated. Moreover,
in the non-perfect world with errors and asymptotic guarantees, these
properties are robust, i.e., uniform with respect to scoring rules and
utility functions.

\appendix

\section{Appendix}

\subsection{Scoring Rules\label{sus-a:scoring rules}}

We provide here further details on proper scoring rules. As in Section \ref%
{sus:scoring rules}, a scoring rule is given by $L_{A}:A\times C\rightarrow 
\mathbb{R}$, linearly extended to $L:C\times C\rightarrow \mathbb{R}$ by $%
L(d,c):=\sum_{a\in A}d(a)L_{A}(a,c)=d\cdot \mathbf{L}(c)$, where $\mathbf{L}%
(c)$ is the vector $(L_{A}(a,c))_{a\in A}$ in $\mathbb{R}^{A}$. The scoring
rule $L$ is proper if $L(d,c)\geq L(d,d)$ for every $c,d\in C$.

The following result is well known (see, e.g., Savage 1971; Gneiting \&
Raftery 2007).

\begin{proposition}
\label{p:proper}A scoring rule $L$ is proper if and only if there exists a
concave function $H:C\rightarrow \mathbb{R}$ and a supergradient selection $%
\mathbf{G}:C\rightarrow \mathbb{R}^{A}$ (i.e., for every $c\in C$ the vector 
$\mathbf{G}(c)$ is a supergradient of $H$ at\footnote{%
I.e., $H(d)\leq H(c)+(d-c)\cdot \mathbf{G}(c)$ for every $d\in C.$} $c$)
such that%
\begin{equation}
L(d,c)=H(c)+(d-c)\cdot \mathbf{G}(c)  \label{eq:S from H}
\end{equation}%
for every $c,d\in C.$
\end{proposition}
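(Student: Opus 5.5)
The statement to prove is the classical characterization of proper scoring rules (Proposition \ref{p:proper}), and the proof splits naturally into two directions.

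For the direction ``proper $\Rightarrow$ representation,'' take $H(c):=L(c,c)$ and $\mathbf{G}(c):=\mathbf{L}(c)$. By (\ref{eq:d*Lambda(c)}), for fixed $c$ the map $d\mapsto L(d,c)=d\cdot\mathbf{L}(c)$ is affine (indeed linear) in $d$. Properness gives $H(d)=L(d,d)=\min_{c\in C}L(d,c)$, so $H$ is a pointwise minimum of affine functions of $d$ and is therefore concave.

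To obtain the formula, use that $c$ and $d$ both lie in the simplex, so $\sum_a(d-c)(a)=0$. Then
\begin{equation*}
L(d,c)=d\cdot\mathbf{L}(c)=c\cdot\mathbf{L}(c)+(d-c)\cdot\mathbf{L}(c)=H(c)+(d-c)\cdot\mathbf{G}(c),
\end{equation*}
which is (\ref{eq:S from H}).

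It remains to check that $\mathbf{G}(c)$ is a supergradient of $H$ at $c$. Properness gives $H(d)=L(d,d)\le L(d,c)=H(c)+(d-c)\cdot\mathbf{G}(c)$ for all $d\in C$, which is exactly the supergradient inequality. This direction also yields the formulas in Proposition \ref{p:L}(i).

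For the converse, suppose $L(d,c)=H(c)+(d-c)\cdot\mathbf{G}(c)$ with $H$ concave and $\mathbf{G}(c)$ a supergradient of $H$ at $c$. Setting $d=c$ gives $L(c,c)=H(c)$. The supergradient inequality then gives $L(d,c)=H(c)+(d-c)\cdot\mathbf{G}(c)\ge H(d)=L(d,d)$, which is properness.

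We must also check that such an $L$ is a legitimate scoring rule, i.e., linear in $d$ and thus of the form $d\cdot\mathbf{L}(c)$. Define $L_A(a,c):=H(c)+(\mathbf{1}_a-c)\cdot\mathbf{G}(c)$. Its linear extension is $\sum_a d(a)L_A(a,c)=H(c)+(d-c)\cdot\mathbf{G}(c)$, using $\sum_a d(a)=1$.

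The only subtle point is this last step: the affine-in-$d$ expression must coincide with the linear extension on the simplex, which holds precisely because of the normalization $\sum_a d(a)=1$. There is also a minor issue at boundary points of $C$, where supergradients exist only on the relative interior in general; but in the forward direction supergradients are exhibited explicitly, and in the converse they are given by hypothesis, so no existence argument is needed.
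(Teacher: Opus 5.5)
Your proof is correct and follows the paper's argument. You set $H(c)=L(c,c)$ and $\mathbf{G}(c)=\mathbf{L}(c)$, get concavity of $H$ as a minimum of linear functions of $d$, and obtain the supergradient inequality directly from properness. For the converse you read the supergradient inequality as $L(d,d)\le L(d,c)$, as the paper does.

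Your extra check that the formula defines a genuine scoring rule, via $L_A(a,c)=H(c)+(\mathbf{1}_a-c)\cdot\mathbf{G}(c)$, is a harmless addition that the paper leaves implicit. One small point: the identity $d\cdot\mathbf{L}(c)=c\cdot\mathbf{L}(c)+(d-c)\cdot\mathbf{L}(c)$ is pure algebra and does not need $\sum_a(d-c)(a)=0$.
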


\begin{proof}
Assume that $L$ is proper. Then $H(d):=L(d,d)=\min_{c\in C}L(d,c)=\min_{c\in
C}d\cdot \mathbf{L}(c)$ is the minimum of linear functions of $d$, and thus $%
H$ is concave. The vector $\mathbf{L}(c)$ is a supergradient of $H$ at $c,$
because $H(d)-H(c)=L(d,d)-L(c,c)\leq L(d,c)-L(c,c)=d\cdot \mathbf{L}%
(c)-c\cdot \mathbf{L}(c)=(d-c)\cdot \mathbf{L}(c),$ where the inequality is
by properness.

Conversely, given a concave $H$ with a supergradient selection $\mathbf{G}$,
the supergradient inequality $H(d)\leq H(c)+(d-c)\cdot \mathbf{G}(c)$
becomes $L(d,d)\leq L(d,c)$ for the function $L$ that is defined by (\ref%
{eq:S from H}), and so $L$ is proper.
\end{proof}

\medskip

The concave function $H\equiv H^{L}$ given by $H(c):=L(c,c)$ for a proper
scoring rule $L$ is usually referred to as the $L$-\emph{entropy}.

\medskip

\noindent \textbf{Remark. }To avoid confusion: in the second part of the
proof the supergradient $\mathbf{G}(c)$ need \emph{not} be the vector $%
\mathbf{L}(c)$ that was used in the first part. The reason is that if $%
\mathbf{G}(c)$ is a supergradient then so is $\mathbf{G}(c)+\lambda \mathbf{1%
}$ for any real $\lambda $ (where $\mathbf{1}=(1,\ldots ,1)\in \mathbb{R}%
^{A} $), because on the domain $C$ we have $(d-c)\cdot \lambda \mathbf{1}=0$
for every $c,d\in C$ (and so formula (\ref{eq:S from H}) is not affected by
adding $\lambda \mathbf{1}$ to $\mathbf{G}(c)$). Since $c\cdot \mathbf{L}%
(c)=L(c,c)=H(c)$, we get the explicit relation\footnote{%
When $H$ is differentiable on a full-dimensional set in $\mathbb{R}^{A}$
that contains $C$ and $\mathbf{G}$ is the gradient $\nabla H$ of $H$, we
thus get $\mathbf{L}(c)=\nabla H(c)+(H(c)-c\cdot \nabla H(c))\mathbf{1}$.
This explains the formulas in the examples in the next section. In
particular, when $H$ is homogeneous of degree $1$ (as is the case for the $%
\alpha $-spherical scoring rules) we have $H(c)=c\cdot \nabla H(c)$ by
Euler's theorem, and so $\mathbf{L}=\nabla H$.} $\mathbf{L}(c)=\mathbf{G}%
(c)+(H(c)-c\cdot \mathbf{G}(c))\mathbf{1}$.

\medskip

The $L$\emph{-divergence} $D\equiv D^{L}:C\times C\rightarrow \mathbb{R}$ is
given by 
\begin{equation*}
D(d,c)\,:=\,L(d,c)-L(d,d),
\end{equation*}
and so $L$ is proper if and only if $D\geq 0.$ Geometrically, in this case
we have $D(d,c)=H(c)+(d-c)\cdot \mathbf{G}(c)-H(d)$ by (\ref{eq:S from H}),
and so the divergence $D(d,c)$ is equal to how much the tangent to the
concave function $H$ at $c$ (with slope $\mathbf{G}(c)$) is above $H$ at the
point $d$ (the \textquotedblleft Bregman divergence\textquotedblright ).

A scoring rule $L$ is \emph{bounded} if the functions $L_{A}(a,\cdot )$ for
all $a\in A$, and thus $L(d,\cdot )$ for all $d\in C$, are bounded; it is 
\emph{Lipschitz} if the functions $L_{A}(a,\cdot )$ for all $a\in A$, and
thus $L(d,\cdot )$ for all $d\in C$, are Lipschitz continuous (in the
forecast). It is convenient to state these conditions in terms of the vector
function $\mathbf{L}$ and the standard Euclidean norm $\left\Vert \cdot
\right\Vert $. Let $M$ be a finite constant; then:

\begin{itemize}
\item the scoring rule $L$ is $M$\emph{-bounded} if%
\begin{equation*}
\left\Vert \mathbf{L}(c)-\mathbf{L}(c^{\prime })\right\Vert \leq M
\end{equation*}%
for every $c,c^{\prime }\in C$; and

\item the scoring rule $L$ is $M$\emph{-Lipschitz} if%
\begin{equation*}
\left\Vert \mathbf{L}(c)-\mathbf{L}(c^{\prime })\right\Vert \leq M\left\Vert
c-c^{\prime }\right\Vert
\end{equation*}%
for all $c,c^{\prime }\in C$.\footnote{%
When the entropy $H$ is a so-called \textquotedblleft $M$-smooth%
\textquotedblright\ function, i.e., with an $M$-Lipschitz gradient, the
function $\mathbf{L}$ is $M\sqrt{|A|}$-Lipschitz (see the Remark above).}
\end{itemize}

We record some immediate useful implications.

\begin{lemma}
\label{l:O(diff^2)}Let $L$ be a proper scoring rule. Then

\begin{description}
\item[(i)] If $L$ is $M$-bounded then%
\begin{equation*}
0\leq D(d,c)\leq M\left\Vert c-d\right\Vert
\end{equation*}%
for every $c,d$ in $C$.

\item[(ii)] If $L$ is $M$-Lipschitz then%
\begin{equation*}
0\leq D(d,c)\leq M\left\Vert c-d\right\Vert ^{2}
\end{equation*}%
for every $c,d$ in $C$.

\item[(iii)] If $L$ is $M$-Lipschitz then%
\begin{equation*}
\left\vert D(d,c)-D(d,c^{\prime })\right\vert =\left\vert
L(d,c)-L(d,c^{\prime })\right\vert \leq M\left\Vert c-c^{\prime }\right\Vert
\end{equation*}%
for every $c,c^{\prime },d$ in $C$.
\end{description}
\end{lemma}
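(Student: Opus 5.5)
The plan is to express the divergence through $\mathbf{L}$ alone and then bound it with Cauchy--Schwarz. By Proposition \ref{p:L}(i), equivalently directly from the definitions, we have $D(d,c)=L(d,c)-L(d,d)=d\cdot \mathbf{L}(c)-d\cdot \mathbf{L}(d)$. Properness gives $L(c,c)\leq L(c,d)$, i.e., $c\cdot \mathbf{L}(c)\leq c\cdot \mathbf{L}(d)$. Adding these yields the key inequality
\begin{equation*}
0\leq D(d,c)\leq d\cdot (\mathbf{L}(c)-\mathbf{L}(d))-c\cdot (\mathbf{L}(c)-\mathbf{L}(d))=(d-c)\cdot (\mathbf{L}(c)-\mathbf{L}(d)),
\end{equation*}
where the lower bound $D\geq 0$ is simply properness.

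Applying Cauchy--Schwarz gives $D(d,c)\leq \left\Vert d-c\right\Vert \,\left\Vert \mathbf{L}(c)-\mathbf{L}(d)\right\Vert $. For (i), the $M$-boundedness assumption bounds the second factor by $M$, giving $D(d,c)\leq M\left\Vert c-d\right\Vert $. For (ii), the $M$-Lipschitz assumption bounds the second factor by $M\left\Vert c-d\right\Vert $, giving $M\left\Vert c-d\right\Vert ^{2}$.

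For (iii), the equality $D(d,c)-D(d,c^{\prime })=L(d,c)-L(d,c^{\prime })$ is immediate, since the $L(d,d)$ terms cancel. Then $L(d,c)-L(d,c^{\prime })=d\cdot (\mathbf{L}(c)-\mathbf{L}(c^{\prime }))$, and Cauchy--Schwarz together with $\left\Vert d\right\Vert \leq 1$ for $d\in C$ (since $\left\Vert d\right\Vert ^{2}=\sum_a d(a)^{2}\leq \sum_a d(a)=1$) gives the bound $M\left\Vert c-c^{\prime }\right\Vert $.

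The only nonroutine step is the symmetrization trick of adding the properness inequality at $c$, which turns $D$ into an inner product with the difference $\mathbf{L}(c)-\mathbf{L}(d)$. Everything after that is one application of Cauchy--Schwarz per part, so no step is a real obstacle.
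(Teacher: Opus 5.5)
Your proof is correct and is essentially the paper's own argument. The paper also bounds $D(d,c)\leq D(d,c)+D(c,d)=(c-d)\cdot(\mathbf{L}(d)-\mathbf{L}(c))$ and applies Cauchy--Schwarz for (i) and (ii). It handles (iii) through $L(d,c)-L(d,c^{\prime})=d\cdot(\mathbf{L}(c)-\mathbf{L}(c^{\prime}))$ together with $\left\Vert d\right\Vert \leq 1$, exactly as you do.
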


\begin{proof}
We have%
\begin{eqnarray*}
D(d,c) &\leq &D(d,c)+D(c,d) \\
&=&d\cdot \mathbf{L}(c)-d\cdot \mathbf{L}(d)+c\cdot \mathbf{L}(d)-c\cdot 
\mathbf{L}(c) \\
&=&(c-d)\cdot (\mathbf{L}(d)-\mathbf{L}(c))\leq \left\Vert c-d\right\Vert
\left\Vert \mathbf{L}(d)-\mathbf{L}(c)\right\Vert .
\end{eqnarray*}%
This yields (i) when $L$ is $M$-bounded and (ii) when $L$ is $M$-Lipschitz.
For (iii), using (\ref{eq:d*Lambda(c)}), the definition of $D$, and $%
\left\Vert d\right\Vert \leq 1$ for $d\in C$, gives:%
\begin{eqnarray*}
\left\vert D(d,c)-D(d,c^{\prime })\right\vert &=&\left\vert
L(d,c)-L(d,c^{\prime })\right\vert =\left\vert d\cdot \mathbf{L}(c)-d\cdot 
\mathbf{L}(c^{\prime })\right\vert \\
&\leq &\left\Vert d\right\Vert \left\Vert \mathbf{L}(c)-\mathbf{L}(c^{\prime
})\right\Vert \leq M\left\Vert c-c^{\prime }\right\Vert .
\end{eqnarray*}
\end{proof}

\subsubsection{Examples of Bounded Proper Scoring Rules\label%
{susus-a:scoring examples}}

We provide a number of classical examples of bounded proper scoring rules;
in fact, they are all strictly proper.

\medskip

\begin{itemize}
\item \emph{Quadratic}: 
\begin{eqnarray*}
L_{A}(a,c) &=&-2c(a)+\left\Vert c\right\Vert ^{2}, \\
L(d,c) &=&-2c\cdot d+\left\Vert c\right\Vert ^{2}, \\
H(c) &=&-\left\Vert c\right\Vert ^{2}, \\
D(d,c) &=&\left\Vert c-d\right\Vert ^{2}.
\end{eqnarray*}

\item $\alpha $-\emph{Spherical} for $\alpha >1$: 
\begin{eqnarray*}
L_{A}(a,c) &=&-\frac{c(a)^{\alpha -1}}{\left( \sum_{a}c(a)^{\alpha }\right)
^{\frac{\alpha -1}{\alpha }}}, \\
L(d,c) &=&-\frac{\sum_{a}d(a)c(a)^{\alpha -1}}{\left( \sum_{a}c(a)^{\alpha
}\right) ^{\frac{\alpha -1}{\alpha }}}, \\
H(c) &=&-\left( \sum_{a}c(a)^{\alpha }\right) ^{\frac{1}{\alpha }%
}=-\left\Vert c\right\Vert _{\alpha }, \\
D(d,c) &=&-\frac{\sum_{a}d(a)c(a)^{\alpha -1}}{\left( \sum_{a}c(a)^{\alpha
}\right) ^{\frac{\alpha -1}{\alpha }}}+\left( \sum_{a}d(a)^{\alpha }\right)
^{\frac{1}{\alpha }}
\end{eqnarray*}%
(for $\alpha \geq 2$ the scoring rule is Lipschitz).

\item $\alpha $-\emph{Power} (\emph{Tsallis}) for $\alpha >1$:%
\begin{eqnarray*}
L_{A}(a,c) &=&-\frac{1}{\alpha -1}c(a)^{\alpha -1}+\frac{1}{\alpha }%
\sum_{a}c(a)^{\alpha }, \\
L(d,c) &=&-\frac{1}{\alpha -1}\sum_{a}d(a)c(a)^{\alpha -1}+\frac{1}{\alpha }%
\sum_{a}c(a)^{\alpha }, \\
H(c) &=&-\frac{1}{\alpha (\alpha -1)}\sum_{a}c(a)^{\alpha }=-\frac{1}{\alpha
(\alpha -1)}\left( \left\Vert c\right\Vert _{\alpha }\right) ^{\alpha }, \\
D(d,c) &=&\frac{1}{\alpha }\sum_{a}c(a)^{\alpha }+\frac{1}{\alpha (\alpha -1)%
}\sum_{a}d(a)^{\alpha }-\frac{1}{\alpha -1}\sum_{a}d(a)c(a)^{\alpha -1}
\end{eqnarray*}%
(for $\alpha \geq 2$ the scoring rule is Lipschitz; for $\alpha =2$ it is
the quadratic score divided by $2$).
\end{itemize}

\subsection{Approximate Decomposition for Local Binnings\label%
{sus-a:decompose-delta}}

We give here the proof of Theorem \ref{th:decompose-delta} in Section \ref%
{susus:fractional}.

\begin{proof}[Proof of Theorem \protect\ref{th:decompose-delta}]
Replacing every $c_{s}$ and $\bar{c}_{t}(i)$ in bin $i$ with $y^{i}$ yields
scores\footnote{%
The refinement score is not affected since it does not depend on the
forecasts. For simplicity we drop the subscript $t$ from $n_{t},\bar{a}_{t},$
and $\bar{c}_{t}$.}%
\begin{eqnarray*}
\widehat{\mathcal{B}}_{t}^{L}(\mathbf{c}) &%
%TCIMACRO{\TeXButton{:=}{{\;:=\;}}}%
%BeginExpansion
{\;:=\;}%
%EndExpansion
&\sum_{i\in I}\left( \frac{n(i)}{t}\right) \sum_{s=1}^{t}\left( \frac{%
f_{s}(i)}{n(i)}\right) D(a_{s},y^{i})\text{\ \ \ and} \\
\widehat{\mathcal{K}}_{t}^{L}(\mathbf{c};\mathbf{f}) &%
%TCIMACRO{\TeXButton{:=}{{\;:=\;}}}%
%BeginExpansion
{\;:=\;}%
%EndExpansion
&\sum_{i\in I}\left( \frac{n(i)}{t}\right) D(\bar{a}(i),y^{i}).
\end{eqnarray*}%
The proof proceeds in two steps: first, we show that $\widehat{\mathcal{B}}$
and $\widehat{\mathcal{K}}$ are close to $\mathcal{B}$ and $\mathcal{K}$,
respectively; second, we establish the exact decomposition\footnote{%
Since positive fractions of a forecast $c_{s}$ may be allocated to several
bins, and thus replaced by \emph{different} $y^{i}$, there is no single
replacement of the forecasting sequence $\mathbf{c}$ that would give Step 2
by applying directly the decomposition Theorem \ref{th:S-decompose}.} $%
\widehat{\mathcal{B}}-\widehat{\mathcal{K}}=\mathcal{R}$.

\emph{Step 1.} 
\begin{eqnarray*}
\left\vert \mathcal{B}_{t}^{L}(\mathbf{c})-\widehat{\mathcal{B}}_{t}^{L}(%
\mathbf{c})\right\vert &\leq &M\delta \text{\ \ \ and} \\
\left\vert \mathcal{K}_{t}^{L}(\mathbf{c};\mathbf{f})-\widehat{\mathcal{K}}%
_{t}^{L}(\mathbf{c};\mathbf{f})\right\vert &\leq &M\delta .
\end{eqnarray*}%
Indeed, all $c_{s}$ in bin $i$ (i.e., with $f_{s}(i)>0$) satisfy $\left\Vert
c_{s}-y^{i}\right\Vert \leq \delta $, and thus their average $\bar{c}%
(i)\equiv \bar{c}_{t}(i)$ satisfies $\left\Vert \bar{c}(i)-y^{i}\right\Vert
\leq \delta $ as well. Therefore $\left\vert
D(a_{s},c_{s})-D(a_{s},y^{i})\right\vert \leq M\delta $ and $\left\vert D(%
\bar{a}(i),\bar{c}(i))-D(\bar{a}(i),y^{i})\right\vert \leq M\delta $ by (\ref%
{eq:D-D-L}). Averaging the former over $s$ and $i$ yields the inequality for 
$\mathcal{B}$, and averaging the latter over $i$ yields the inequality for $%
\mathcal{K}$.

\emph{Step 2.}%
\begin{equation*}
\widehat{\mathcal{B}}_{t}^{L}(\mathbf{c})=\widehat{\mathcal{K}}_{t}^{L}(%
\mathbf{c};\mathbf{f})+\mathcal{R}_{t}^{L}(\mathbf{f}).
\end{equation*}%
Indeed, we have%
\begin{eqnarray*}
\widehat{\mathcal{B}}_{t}^{L}(\mathbf{c}) &=&\sum_{i\in I}\left( \frac{n(i)}{%
t}\right) \sum_{s=1}^{t}\left( \frac{f_{s}(i)}{n(i)}\right) L(a_{s},y^{i})-%
\mathcal{H}_{t}^{L} \\
&=&\sum_{i\in I}\left( \frac{n(i)}{t}\right) L(\bar{a}(i),y^{i})-\mathcal{H}%
_{t}^{L},
\end{eqnarray*}%
and%
\begin{eqnarray*}
\widehat{\mathcal{K}}_{t}^{L}(\mathbf{c};\mathbf{f}) &=&\sum_{i\in I}\left( 
\frac{n(i)}{t}\right) \left[ L(\bar{a}(i),y^{i})-L(\bar{a}(i),\bar{a}(i))%
\right] \\
&=&\sum_{i\in I}\left( \frac{n(i)}{t}\right) L(\bar{a}(i),y^{i})-\sum_{i\in
I}\left( \frac{n(i)}{t}\right) H(\bar{a}(i)).
\end{eqnarray*}%
Subtracting proves the claim by (\ref{eq:R-H-H}).

Combining Step 1 and Step 2 yields the result.
\end{proof}

\subsection{Refinement of Binnings\label{sus-a:refined-refinement}}

We give here the missing proofs of the results of Section \ref{susus:refined
refinement}.

\begin{proof}[Proof of Proposition \protect\ref{p:refine-S}]
Let $I$ and $J$ be the sets of bins of $\mathbf{f}$ and $\mathbf{g}$,
respectively. It suffices to prove the claim when $J$ has only one bin; we
then apply it to each $j$-bin separately and average over $j$ to get the
general result. Dropping the subscript $t$ for convenience from $\bar{a}%
_{t}(i)$ and $n_{t}(i)$, and letting $\bar{a}$ denote the overall average of
the states (i.e., the average in the single bin in $J$), we have:%
\begin{eqnarray*}
\frac{1}{t}\sum_{i\in I}\sum_{s=1}^{t}f_{s}(i)L(a_{s},\bar{a}(i))
&=&\sum_{i\in I}\left( \frac{n(i)}{t}\right) \sum_{s=1}^{t}\left( \frac{%
f_{s}(i)}{n(i)}\right) L(a_{s},\bar{a}(i)) \\
&=&\sum_{i\in I}\left( \frac{n(i)}{t}\right) L(\bar{a}(i),\bar{a}%
(i))=\sum_{i\in I}\left( \frac{n(i)}{t}\right) H(\bar{a}(i)) \\
&\leq &H\left( \sum_{i}\frac{n(i)}{t}\bar{a}(i)\right) =H(\bar{a}) \\
&=&L(\bar{a},\bar{a})=\frac{1}{t}\sum_{s=1}^{t}L(a_{s},\bar{a}),
\end{eqnarray*}%
where the inequality is by the concavity of the function $H\equiv H^{L}$.
Subtracting $\mathcal{H}_{t}^{L}$ from both sides yields the desired
inequality.
\end{proof}

\medskip

\begin{proof}[Proof of Corollary \protect\ref{c:refine-S}]
The Decomposition Theorem \ref{th:S-decompose}, applied to each one of $%
\mathbf{c}$, $\mathbf{i}$, and $\mathbf{j}$ (all of which refine $\mathbf{c}$%
), yields $\mathcal{B}^{L}(\mathbf{c})=\mathcal{R}^{L}(\mathbf{c})+\mathcal{K%
}_{t}^{L}(\mathbf{c})=\mathcal{R}^{L}(\mathbf{i})+\mathcal{K}^{L}(\mathbf{c};%
\mathbf{i})=\mathcal{R}^{L}(\mathbf{j})+\mathcal{K}^{L}(\mathbf{c};\mathbf{j}%
)$; apply Proposition \ref{p:refine-S}.
\end{proof}

\subsection{On the Proof of Proper Calibeating by a Proper Calibrated
Procedure \label{sus-a:bxc}}

Our proof of Theorem \ref{th:beat-by-calib S}, which establishes proper
calibeating by a proper calibrated procedure, is based on complete
calibeating, i.e., the stronger construct of calibeating the joint binning
(see Section \ref{sus:joint}). As noted in the remark at the end of the
section (cf. the example in Section \ref{sus:example}), calibeating together
with calibration does not guarantee proper calibeating. In this appendix we
provide evidence for the need for complete calibeating. We do so by keeping
the state averages in all joint bins fixed, and varying the relative
frequencies of these bins. In this setting, we show that complete
calibeating is necessary to guarantee that proper calibeating always follows
from calibeating by a calibrated forecast. Our main tool is Theorem \ref%
{th:U} in Section \ref{susus-a:bxc general} below, a general result that may
be of independent interest.

Thus, fix the finite set of bins $B$ of the sequence $\mathbf{b}$, the
finite set of bins $D$ of our forecasting sequence $\mathbf{c}$, and the
state average $\bar{a}(b,d)$ of each joint $(b,d)$-bin, and allow the
(relative) frequencies $\lambda (b,d)$ of the bins to vary.\footnote{%
We are thus looking at a \textquotedblleft snapshot\textquotedblright\ of
history that considers only empirical distributions and ignores the specific
sequences and the time horizon.} For instance, repeating twice each period
of a certain $(b,d)$-bin amounts to doubling this $\lambda (b,d)$ while
keeping all the rest unchanged.\footnote{%
Up to renormalization.} For clarity, we assume that all calibration and
calibeating errors are exactly zero. Given bin frequencies $\lambda $, we
take our sequence $\mathbf{c}\equiv \mathbf{c}_{\lambda }$ to be (perfectly)
calibrated, and so $c=\bar{a}(\cdot ,d)=\sum_{b}\lambda (b,d)\bar{a}%
(b,d)/\sum_{b}\lambda (b,d)$ for each bin\footnote{%
While in general the $\mathbf{c}$ binning may be coarser than the $\mathbf{d}
$ binning (because some $\bar{a}(\cdot ,d)$ averages may turn out to be
equal), they coincide for generic $\lambda $ where these averages are all
distinct. Our proof below handles all cases.} $d$.

\begin{theorem}
\label{th:bxc}Assume that the matrix $(\bar{a}(b,d))_{b\in B,d\in D}$ of
state averages has more than two distinct entries. Then the following
statements are equivalent:

\begin{description}
\item[(i)] For every bin-frequency matrix $\lambda $, if the calibrated
forecasting sequence $\mathbf{c}\equiv \mathbf{c}_{\lambda }$ calibeats $%
\mathbf{b}$ then it properly calibeats $\mathbf{b}$.

\item[(ii)] For every bin-frequency matrix $\lambda $, if the calibrated
forecasting sequence $\mathbf{c}\equiv \mathbf{c}_{\lambda }$ calibeats $%
\mathbf{b}$ then it completely calibeats $\mathbf{b}$.
\end{description}
\end{theorem}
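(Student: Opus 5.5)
The direction (ii)$\Rightarrow$(i) is immediate. If the calibrated sequence $\mathbf{c}_{\lambda}$ calibeats $\mathbf{b}$, then by (ii) it completely calibeats $\mathbf{b}$. By Theorem \ref{th:proper complete calibeating} (with $\varepsilon=0$), or directly by the equivalence (J1)$\Leftrightarrow$(J4) in Remark (a) of Section \ref{sus:joint}, it then properly calibeats $\mathbf{b}$. The work is in (i)$\Rightarrow$(ii), which I will prove by contraposition: from a frequency matrix $\lambda$ for which $\mathbf{c}_\lambda$ calibeats $\mathbf{b}$ but not completely, I will construct a (possibly different) $\lambda'$ for which $\mathbf{c}_{\lambda'}$ calibeats $\mathbf{b}$ but fails to $L$-calibeat it for some bounded proper $L$.

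\emph{Reformulation.} Since $\mathbf{c}_\lambda$ is perfectly calibrated, formula (\ref{eq:R-H-H}) and the decomposition give
\begin{equation*}
\mathcal{R}^{L}(\mathbf{b})-\mathcal{B}^{L}(\mathbf{c})=\Phi_\lambda(H^L):=\sum_{b}\lambda(b,\cdot)H^L(\bar a(b,\cdot))-\sum_d\lambda(\cdot,d)H^L(\bar a(\cdot,d)).
\end{equation*}
Let $\mu^B_\lambda$ be the distribution placing mass $\lambda(b,\cdot)$ on $\bar a(b,\cdot)$, and $\mu^D_\lambda$ the distribution placing mass $\lambda(\cdot,d)$ on $\bar a(\cdot,d)$. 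Both are coarsenings (conditional expectations) of the joint distribution $\nu_\lambda$, which places mass $\lambda(b,d)$ on $\bar a(b,d)$, and so they have the same mean.

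By Proposition \ref{p:proper}, every concave $H$ that is, say, Lipschitz is the entropy of a bounded proper rule. Hence proper calibeating means $\Phi_\lambda(H)\ge0$ for all concave $H$, i.e., $\mu^D_\lambda$ is dominated by $\mu^B_\lambda$ in the convex order. Quadratic calibeating means only $\operatorname{Var}\mu^D_\lambda\le\operatorname{Var}\mu^B_\lambda$, i.e., $\Phi_\lambda(-\|\cdot\|^2)\ge0$. Complete calibeating means $\bar a(b,d)=\bar a(\cdot,d)$ on every nonempty bin, i.e., $\mu^D_\lambda=\nu_\lambda$.

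\emph{Key fact.} If $\mu^D\preceq_{cx}\mu^B$ and the two have equal variance, then they are equal in law, since strict convexity of $\|\cdot\|^2$ forces the martingale coupling to be trivial. Moreover, $\mu^D_\lambda$ always has variance at most that of $\nu_\lambda$, with equality iff calibeating is complete. So the plan is to find $\lambda'$ with $\Phi_{\lambda'}(q)=0$ exactly, where $q=-\|\cdot\|^2$, but $\mu^D_{\lambda'}\neq\mu^B_{\lambda'}$. For such $\lambda'$, $\mathbf{c}_{\lambda'}$ calibeats $\mathbf{b}$ (with equality), yet by the key fact the convex-order domination fails. Hence some concave $H$, which can be taken piecewise linear (hence Lipschitz and bounded), has $\Phi_{\lambda'}(H)<0$, and (i) fails.

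\emph{Construction of $\lambda'$.} Starting from the given noncomplete $\lambda$ with $\Phi_\lambda(q)\ge0$, choose a target $\lambda_1$ with $\Phi_{\lambda_1}(q)<0$. The natural candidate concentrates all mass on a single row $b_0$ having two distinct entries $\bar a(b_0,d)\neq\bar a(b_0,d')$. Then $\mu^B$ is a point mass while $\mu^D$ is not, so $\Phi_{\lambda_1}(q)<0$. If no row has two distinct entries, I use the transposed configuration instead, mixing columns appropriately; the hypothesis of more than two distinct entries should guarantee that some such configuration exists. On the segment $\lambda_\theta=(1-\theta)\lambda+\theta\lambda_1$ the map $\theta\mapsto\Phi_{\lambda_\theta}(q)$ is continuous, since the bin averages are rational in $\theta$ on bins that remain nonempty. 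The intermediate value theorem then gives $\theta^*$ with $\Phi_{\lambda_{\theta^*}}(q)=0$.

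\emph{Main obstacle.} The hard part is ensuring that at $\theta^*$ we have $\mu^D\neq\mu^B$ (equivalently, calibeating is still not complete), and handling the non-generic coincidences among the averages $\bar a(\cdot,d)$ that merge bins of $\mathbf{c}$. I would first try a perturbation argument: the set of $\lambda$ where $\mu^D_\lambda=\mu^B_\lambda$ is a proper algebraic subset (this is where more than two distinct entries is used, to rule out the degenerate $2\times2$-type identities). Perturbing $\lambda_1$ slightly in a transversal direction then moves the zero $\theta^*$ off this set. If needed, I would further mix in a small weight on a bin carrying a third distinct value, so that the two coarsened distributions have different supports while the variance equality is restored by readjusting $\theta$.
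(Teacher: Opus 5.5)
Your reduction is sound as far as it goes. The direction (ii)$\Rightarrow$(i) is exactly as in the paper. Also, if $\Phi_{\lambda'}(q)=0$ while $\mu^D_{\lambda'}\neq\mu^B_{\lambda'}$, then some piecewise-linear concave $H$ (which generates a bounded proper rule) gives $\Phi_{\lambda'}(H)<0$. Two things go wrong after that.

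First, the orientation is reversed. You have $\Phi_\lambda(-\|\cdot\|^2)=\mathbb{E}_{\mu^D}\|x\|^2-\mathbb{E}_{\mu^B}\|x\|^2=\mathrm{Var}\,\mu^D-\mathrm{Var}\,\mu^B$. So calibeating means the $\mathbf{c}$-averages are the \emph{more} spread ones, as in Example \ref{ex:not-S-beat}, where $\mathbf{c}$'s $0,\frac12,1$ are more spread than $\mathbf{b}$'s $\frac15,\frac45$. Likewise, proper calibeating means $\mu^B$ lies below $\mu^D$ in the convex order. Hence concentrating $\lambda_1$ on one row gives $\Phi_{\lambda_1}(q)>0$, and the intermediate value theorem yields nothing. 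Instead, $\lambda_1$ must sit on a column where $\lambda$ weights two distinct entries; such a column exists because $\lambda$ is not complete.

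Second, and this is the genuine gap: producing $\lambda'$ with $\Phi_{\lambda'}(q)=0$ and $\mu^D_{\lambda'}\neq\mu^B_{\lambda'}$ \emph{is} the content of (i)$\Rightarrow$(ii), and your segment and perturbation steps do not deliver it. Take $A=\{0,1\}$ and $X=\left(\begin{smallmatrix}1&0&0\\0&1&0\\0&0&1/2\end{smallmatrix}\right)$, which has three distinct entries. Let $\lambda(1,1)=0.4$, $\lambda(1,2)=0.02$, $\lambda(2,1)=0.08$, $\lambda(2,2)=0.5$. Then $\mathbf{c}_\lambda$ strictly calibeats $\mathbf{b}$ but not completely.

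With $\lambda_1$ on column $1$, the whole segment lives on the two-valued block of Remark (b) after the proof of Theorem \ref{th:column}, where calibeating always implies proper calibeating. So at $\theta^*$ you necessarily get $\mu^D=\mu^B$. Mixing in weight $\eta$ on $(3,3)$, the only bin carrying the third value, merely replaces $\mu^B,\mu^D$ by $(1-\eta)\mu^B+\eta\delta_{1/2}$ and $(1-\eta)\mu^D+\eta\delta_{1/2}$. Hence $\{\Phi(q)=0\}$ stays inside $\{\mu^D=\mu^B\}$.

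Being a ``proper algebraic subset'' of the simplex is therefore beside the point. You need $\{\mu^D=\mu^B\}$ to be nowhere dense \emph{within} the hypersurface $\{\Phi(q)=0\}$ on a suitably chosen support, and choosing that support is the whole difficulty.

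The paper avoids the boundary $\Phi(q)=0$ altogether:
\begin{itemize}
\item Claim 1 shows that proper calibeating forces every row average into the convex hull of the column averages. Violations are detected by the bounded-gradient concave functions $Q-n(\max\{v\cdot z-\alpha,0\})^2$.
\item Claim 2 gives explicit weights on any $2\times2$ submatrix with three distinct entries. These weights yield \emph{strict} quadratic calibeating with a row average outside that hull.
\item Claim 3 shows such a submatrix exists whenever $X$ is non-degenerate and neither row- nor column-constant. In the example above, take rows and columns $1,3$, with entries $1,0,0,\frac12$. In the row- or column-constant cases, (ii) holds outright.
\end{itemize}
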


One way to interpret this result is as follows: if there are bin frequencies 
$\lambda $ where the calibrated sequence $\mathbf{c}_{\lambda }$ is
calibeating $\mathbf{b}$ but it does \emph{not} completely calibeat $\mathbf{%
b}$ (and so (ii) does not hold), then there are other bin frequencies $%
\lambda ^{\prime }$ such that the calibrated $\mathbf{c}_{\lambda ^{\prime
}} $ calibeats $\mathbf{b}$ but does \emph{not} properly calibeat $\mathbf{b}
$ (i.e., (i) does not hold). In short, for calibeating to entail proper
calibeating no matter what the bin frequencies are, one needs calibeating to
entail complete calibeating.

The proof will be provided in Section \ref{susus-a:proof}, after stating and
establishing the general result (Theorem \ref{th:U}).

\subsubsection{A General Result\label{susus-a:bxc general}}

We establish a general result that proves Theorem \ref{th:bxc}.

Let $I$ and $J$ be finite sets, $X=(x_{ij})_{i\in I,j\in J}$ a matrix whose
entries are $m$-dimensional real vectors, i.e., $x_{ij}\in \mathbb{R}^{m}$,
and let $W=(w_{ij})_{i\in I,j\in J}$ be a weight matrix, i.e., $w_{ij}\geq 0$
for all $i,j,$ and $\sum_{i\in I}\sum_{j\in J}w_{ij}=1$. Define%
\begin{eqnarray*}
w_{i\cdot } &%
%TCIMACRO{\TeXButton{:=}{{\;:=\;}}}%
%BeginExpansion
{\;:=\;}%
%EndExpansion
&\sum_{j\in J}w_{ij} \\
w_{\cdot j} &%
%TCIMACRO{\TeXButton{:=}{{\;:=\;}}}%
%BeginExpansion
{\;:=\;}%
%EndExpansion
&\sum_{i\in I}w_{ij} \\
r_{i} &%
%TCIMACRO{\TeXButton{:=}{{\;:=\;}}}%
%BeginExpansion
{\;:=\;}%
%EndExpansion
&\bar{x}_{i\cdot }=\sum_{j\in J}\frac{w_{ij}}{w_{i\cdot }}x_{ij} \\
c_{j} &%
%TCIMACRO{\TeXButton{:=}{{\;:=\;}}}%
%BeginExpansion
{\;:=\;}%
%EndExpansion
&\bar{x}_{\cdot j}=\sum_{i\in I}\frac{w_{ij}}{w_{\cdot j}}x_{ij}
\end{eqnarray*}%
(thus, $w_{i\cdot }$ and $w_{\cdot j}$ are the marginals, and $r_{i}$ and $%
c_{j}$ the row and column averages; the values of $r_{i}$ and $c_{j}$ for
rows and columns with zero weight will not matter).\footnote{%
For the application to calibeating, the set of rows $I$ is $B$ (the range of 
$\mathbf{b}$, the \textquotedblleft reference forecasts\textquotedblright ),
the set of columns $J$ is $D$ (the range of $\mathbf{c}$, the forecasts
used), the weights $w$ are the bin frequencies $\lambda $, and the matrix
entries $x$ are the state averages $\bar{a}$.}

For every concave function $F:\mathbb{R}^{m}\rightarrow \mathbb{R}$ (it
suffices for $F$ to be defined on the compact convex set $\mathrm{conv}%
\{x_{ij}:i\in I,j\in J\}$) let%
\begin{eqnarray*}
E_{W}(F) &%
%TCIMACRO{\TeXButton{:=}{{\;:=\;}}}%
%BeginExpansion
{\;:=\;}%
%EndExpansion
&\sum_{i\in I}\sum_{j\in J}w_{ij}F(x_{ij}) \\
R_{W}(F) &%
%TCIMACRO{\TeXButton{:=}{{\;:=\;}}}%
%BeginExpansion
{\;:=\;}%
%EndExpansion
&\sum_{i\in I}w_{i\cdot }F(r_{i}) \\
C_{W}(F) &%
%TCIMACRO{\TeXButton{:=}{{\;:=\;}}}%
%BeginExpansion
{\;:=\;}%
%EndExpansion
&\sum_{j\in J}w_{\cdot j}F(c_{j})
\end{eqnarray*}%
(these are the overall average, average by rows, and average by columns,
respectively). We are interested in the inequalities $C\leq R$; more
precisely, when does the inequality $C(Q)\leq R(Q)$ for a quadratic concave
function $Q$ imply $C(F)\leq R(F)$ for all concave functions $F$.

The matrix $X$ is $W$\emph{-column-constant}\footnote{%
Interpreting $W$ as a probability measure, this means that $X$ is $W$-almost
surely column-constant.} if the restriction of $X$ to the support of $W$ has
constant columns, i.e., $x_{ij}=x_{i^{\prime }j}$ whenever $w_{ij}>0$ and $%
w_{i^{\prime }j}>0$; the matrix $X$ is \emph{column-constant }if it has
constant columns, i.e., $x_{ij}=x_{i^{\prime }j}$ for every $i,i^{\prime
}\in I$ and $j\in J$ (thus, $X$ is column-constant if and only if it is $W$%
-column-constant for every $W$). Similarly for $W$\emph{-row-constant} ($%
x_{ij}=x_{ij^{\prime }}$ whenever $w_{ij}>0$ and $w_{ij^{\prime }}>0$) and 
\emph{row-constant} ($x_{ij}=x_{ij^{\prime }}$ for every $i\in I$ and $%
j,j^{\prime }\in J$). For instance, if $W$ is a diagonal matrix then every $%
X $ is both $W$-column-constant and $W$-row-constant.

\begin{proposition}
\label{p:0}Let $W$ be a weight matrix. Then:

\begin{description}
\item[(a)] $C_{W}(F)\geq E_{W}(F)$ and $R_{W}(F)\geq E_{W}(F)$ for every
concave $F$.

\item[(b1)] If $X$ is $W$-column-constant then $C_{W}(F)=E_{W}(F)$ for every
concave $F$.

\item[(b2)] If $X$ is $W$-row-constant then $R_{W}(F)=E_{W}(F)$ for every
concave $F$.

\item[(c1)] $C_{W}(G)=E_{W}(G)$ for some \emph{strictly} concave function $G$
if and only if $X$ is $W$-column-constant (and then $C_{W}(F)=E_{W}(F)$ for
every concave $F$ by (b1)).

\item[(c2)] $R_{W}(G)=E_{W}(G)$ for some \emph{strictly} concave function $G$
if and only if $X$ is $W$-row-constant (and then $R_{W}(F)=E_{W}(F)$ for
every concave $F$ by (b2)).
\end{description}
\end{proposition}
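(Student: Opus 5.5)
The whole proposition reduces to Jensen's inequality applied column by column and row by row. I will write everything for columns; the row statements (b2) and (c2) follow by exchanging the roles of $I$ and $J$ (equivalently, by transposing $X$ and $W$).

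The basic observation is an exact decomposition. For each column $j$ with $w_{\cdot j}>0$, the numbers $w_{ij}/w_{\cdot j}$, $i\in I$, are probability weights with barycenter $c_{j}$. Hence
\begin{equation*}
C_{W}(F)-E_{W}(F)=\sum_{j:w_{\cdot j}>0}w_{\cdot j}\Bigl[ F(c_{j})-\sum_{i\in I}\frac{w_{ij}}{w_{\cdot j}}F(x_{ij})\Bigr] .
\end{equation*}
Columns with $w_{\cdot j}=0$ contribute nothing to $C_{W}$ or to $E_{W}$, so they can be dropped. When $F$ is concave, Jensen's inequality makes each bracket nonnegative. This gives (a) for columns, and the same computation applied to rows gives $R_{W}(F)\geq E_{W}(F)$.

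For (b1), assume $X$ is $W$-column-constant. In each column with positive weight, all entries $x_{ij}$ with $w_{ij}>0$ equal a common vector $x_{j}$, so $c_{j}=x_{j}$. Each bracket then equals $F(x_{j})-F(x_{j})=0$, and the sum vanishes for every $F$, concave or not.

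For (c1), the "if" direction is (b1). For the "only if" direction, suppose $C_{W}(G)=E_{W}(G)$ with $G$ strictly concave. The brackets are nonnegative and sum to zero, and each carries weight $w_{\cdot j}>0$, so every bracket must be zero. Strict concavity means that equality in Jensen's inequality forces all points with positive weight to coincide: if two distinct $x_{ij}$ in the same column had positive weight, the inequality would be strict. Therefore $x_{ij}=c_{j}$ whenever $w_{ij}>0$, which is precisely $W$-column-constancy.

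The only step requiring real care is this equality case of Jensen for strictly concave $G$ with finitely many points. I would prove it directly. Let $c=\sum_{k}p_{k}y_{k}$ with all $p_{k}>0$ and suppose some $y_{k}\neq c$. Write $c$ as a proper convex combination of $y_{k}$ and $z$, where $z$ is the barycenter of the remaining points under the renormalized weights. Then
\begin{equation*}
G(c)>p_{k}G(y_{k})+(1-p_{k})G(z)\geq \sum_{l}p_{l}G(y_{l}),
\end{equation*}
where the first inequality is strict concavity (using $y_{k}\neq z$, which holds because $y_{k}\neq c$) and the second is ordinary Jensen applied to the remaining points. This contradicts equality, so all $y_{k}$ equal $c$. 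Since the $x_{ij}$ lie in $\mathrm{conv}\{x_{ij}\}$, it suffices for $G$ to be strictly concave on that set.
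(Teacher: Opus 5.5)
Your proof is correct and follows the paper's argument: Jensen's inequality in each column with $w_{\cdot j}>0$, weighted by $w_{\cdot j}$ and summed. Equality then forces each column inequality to be tight, which for a strictly concave $G$ happens only when all positively weighted entries of the column coincide; the row statements follow by symmetry. The only addition is your explicit proof of the equality case of Jensen for strictly concave $G$, via splitting off one point $y_{k}$ and the barycenter $z$ of the rest, which the paper simply asserts.
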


\begin{proof}
(a) The concavity of $F$ yields%
\begin{equation}
F(c_{j})\geq \sum_{i}\frac{w_{ij}}{w_{\cdot j}}F(x_{ij})  \label{eq:concave}
\end{equation}%
for every $j$; multiplying by $w_{\cdot j}$ and summing over $j$ then gives%
\begin{equation}
C_{W}(F)=\sum_{j}w_{\cdot j}F(c_{j})\geq
\sum_{j}\sum_{i}w_{ij}F(x_{ij})=E_{W}(F).  \label{eq:R>=E}
\end{equation}

(b1) If $X$ is $W$-column-constant then $x_{ij}=c_{j}$ for every $i,j$ with $%
w_{ij}>0$, and so we get equality in (\ref{eq:concave}), and thus in (\ref%
{eq:R>=E}). Similarly for (b2).

(c1) Equality in (\ref{eq:R>=E}), and thus in (\ref{eq:concave}) for every $%
j $ with $w_{\cdot j}>0$, holds for a strictly concave $G$ if and only if
all the entries $x_{ij}$ in column $j$ that have a positive weight $w_{ij}>0$
must be equal; that is, $X$ is $W$-column-constant. Similarly for (c2).
\end{proof}

\medskip

A useful strictly concave function $F$ is the quadratic $Q(z)=-\left\Vert
z\right\Vert ^{2}$.

We will say that a matrix $X$ is \emph{non-degenerate} if it has more than
two distinct entries.

\begin{theorem}
\label{th:U}Let $X$ be a non-degenerate matrix. Then the following three
statements are equivalent.

\begin{description}
\item[(U1)] For every weight matrix $W$, if $C_{W}(Q)\leq R_{W}(Q)$ then $%
C_{W}(F)\leq R_{W}(F)$ for every concave function $F$.

\item[(U2)] For every weight matrix $W$, if $C_{W}(Q)\leq R_{W}(Q)$ then $%
C_{W}(Q)=E_{W}(Q)$.

\item[(U3)] For every weight matrix $W$, if $C_{W}(Q)\leq R_{W}(Q)$ then $%
C_{W}(F)=E_{W}(F)\leq R_{W}(F)$ for every concave function $F$.
\end{description}
\end{theorem}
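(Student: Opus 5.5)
The implications (U3)$\Rightarrow$(U1) and (U3)$\Rightarrow$(U2) are immediate: the first is the weaker conclusion $C_W(F)\leq R_W(F)$, and the second is $F=Q$. For (U2)$\Rightarrow$(U3), take $W$ with $C_W(Q)\leq R_W(Q)$. By (U2), $C_W(Q)=E_W(Q)$. Since $Q$ is strictly concave, Proposition \ref{p:0}(c1) shows that $X$ is $W$-column-constant. Then (b1) gives $C_W(F)=E_W(F)$ for every concave $F$, and (a) gives $E_W(F)\leq R_W(F)$. So the real content is (U1)$\Rightarrow$(U2), which I would prove by contraposition.

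Assume there is a $W$ with $C_W(Q)\leq R_W(Q)$ but $C_W(Q)\neq E_W(Q)$. By (c1), $X$ is then not $W$-column-constant: some column $j_0$ has $w_{i j_0},w_{i' j_0}>0$ with $x_{ij_0}\neq x_{i'j_0}$. I want a weight matrix $W'$ and a concave $F$ with $C_{W'}(Q)\leq R_{W'}(Q)$ but $C_{W'}(F)>R_{W'}(F)$.

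\emph{Step 1: reach equality in $Q$.} Deform $W$ continuously by moving mass onto the two cells $(i,j_0),(i',j_0)$. In the limit where only these two cells carry weight, rows are trivially constant, so $R=E<C$ in $Q$ (strict by (c1)). The map $W\mapsto C_W(Q)-R_W(Q)$ is continuous on the path, $\leq 0$ at the start and $>0$ at the end. By the intermediate value theorem there is a $W'$ on the path with $C_{W'}(Q)=R_{W'}(Q)$, and it still has column $j_0$ non-constant on its support.

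\emph{Step 2: a concave $F$ that separates.} Let $\mu_C$ be the distribution putting mass $w'_{\cdot j}$ on $c_j$, and $\mu_R$ the one putting mass $w'_{i\cdot}$ on $r_i$; both have the same mean $\bar x$. If $C_{W'}(G)\leq R_{W'}(G)$ for all concave $G$, then $\mu_R$ precedes $\mu_C$ in the convex order. Together with equal second moments (which is $C_{W'}(Q)=R_{W'}(Q)$), this forces $\mu_R=\mu_C$, because a nontrivial mean-preserving spread strictly increases $\int\|z\|^2$. Hence if $\mu_R\neq\mu_C$, some concave $G$ has $C_{W'}(G)>R_{W'}(G)$. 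Take $F=Q+G$ (or $G$ itself): then $C_{W'}(F)-R_{W'}(F)=C_{W'}(G)-R_{W'}(G)>0$ while $C_{W'}(Q)\leq R_{W'}(Q)$, contradicting (U1).

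\emph{Main obstacle: ensuring $\mu_R\neq\mu_C$ at the equality point.} This is where non-degeneracy (more than two distinct entries) enters. With only two distinct values, the symmetric two-point configurations can make $\mu_R=\mu_C$ unavoidable, which is why two-valued $X$ is excluded.

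My first attempt would use the one-parameter freedom in choosing the path. Put a small weight $\eta$ on a cell carrying a third distinct value $x_{kl}$, and still solve $C=R$ in $Q$ by adjusting the mixing parameter. Then argue that $\mu_R=\mu_C$ can hold only on a lower-dimensional (algebraic) set of parameters. The supports of $\mu_R$ and $\mu_C$ are finite sets of rational functions of the weights, and equality of two such atomic measures imposes extra polynomial identities. A generic choice of $\eta$ then breaks them.

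If this genericity argument is awkward, the fallback is an explicit case check. Consider the $2\times2$ or $2\times3$ submatrix containing column $j_0$ and the third value, and choose $G$ piecewise affine, namely $G(z)=\min(\ell_1(z),\ell_2(z))$. Its kink is placed to separate the atoms of $\mu_C$ from those of $\mu_R$, so that $C_{W'}(G)\neq R_{W'}(G)$ can be verified directly.
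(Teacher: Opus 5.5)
Your easy implications are fine; your (U2)$\Rightarrow$(U3) argument via (c1), (b1), (a) is essentially the paper's. The principle behind Step~2 is also correct: (U1) at a $W'$ with $C_{W'}(Q)\le R_{W'}(Q)$ forces $\mu_R\le_{\mathrm{cx}}\mu_C$, hence $\mu_R=\mu_C$ if $C_{W'}(Q)=R_{W'}(Q)$. The gap is that the whole content of the theorem lies in what you call the main obstacle, and neither route you sketch closes it. The genericity route is in fact false as stated.

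\emph{A counterexample to the genericity route.} Take scalar entries $X=\bigl[\begin{smallmatrix}1&0&0\\0&1&0\\0&0&2\end{smallmatrix}\bigr]$ and $W$ uniform on the upper-left $2\times2$ block. Then $C_W(Q)=R_W(Q)=-1/4\neq-1/2=E_W(Q)$, so (U2) fails. Moving mass toward the two cells of column $1$ (or $2$) keeps both column averages at $1/2$, while the row averages spread to $(1\pm s)/2$. So $C(Q)>R(Q)$ for every $s>0$, and your IVT point is $W$ itself, where $\mu_R=\mu_C=\delta_{1/2}$. The only cell carrying the third value is $(3,3)$. Weight $\eta$ there adds the same atom $\eta\,\delta_2$ to both $\mu_R$ and $\mu_C$ and merely multiplies $C(Q)-R(Q)$ by $1-\eta$. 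Hence $\mu_R=\mu_C$ at the equality point for every $\eta$.

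Note that in the two-valued example of Remark (b), $C_W(Q)=R_W(Q)$ already forces $\mu_R=\mu_C$. So no dimension count can work unless you specify how the third value interacts with the rows and columns in play.

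\emph{The fallback.} It presupposes a $2\times2$ or $2\times3$ submatrix through column $j_0$ with three distinct values, on which suitable weights and a kinked $G$ separate the two measures. You establish neither its existence nor the separation. The existence is a genuinely combinatorial fact, and it fails for row-constant $X$. Your argument never uses that failure of (U2) rules out row-constancy.

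\emph{How the paper supplies these pieces.} Read contrapositively, the paper's proof provides exactly what is missing, and it never deforms the given $W$.
\begin{itemize}
\item Claim~1 is the support-level consequence of your convex-order principle. Under (U1), whenever $C_W(Q)\le R_W(Q)$, every row average with positive weight lies in $\mathrm{conv}\{c_j:w_{\cdot j}>0\}$. The proof uses $F_n(z)=Q(z)-n(\max\{v\cdot z-\alpha,0\})^2$, which equals $Q$ at all column averages but tends to $-\infty$ at a separated $r_i$. This needs only $C_W(Q)\le R_W(Q)$, so no equality tuning or Strassen-type argument is required.
\item Claim~2 gives explicit weights on any block $\bigl[\begin{smallmatrix}a&b\\d&\cdot\end{smallmatrix}\bigr]$ with $a,b,d$ distinct. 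These yield $C_W(Q)<R_W(Q)$ with a row average outside that hull, treating the cases $d\notin(a,b)$ and $d\in(a,b)$ separately.
\item Claim~3 is the combinatorial step: if no $2\times2$ submatrix has three distinct entries and $X$ has more than two distinct entries, then $X$ is row-constant or column-constant.
\end{itemize}
Failure of (U2) excludes both alternatives. A row-constant $X$ has $R_W(Q)=E_W(Q)\le C_W(Q)$, so (U2) would hold automatically. And (b1) rules out column-constancy. So a bad block exists; this is where non-degeneracy is used. In the example above the bad block is rows $\{1,3\}$, columns $\{1,3\}$, with most weight on cells $(1,1)$ and $(1,3)$ and weight $\varepsilon$ on $(3,3)$.
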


Now (U3) trivially implies (U2) (since $Q$ is concave), and (U2) implies
(U1) (by Proposition \ref{p:0} (c1): since $Q$ is strictly concave, $%
C_{W}(Q)=E_{W}(Q)$ implies that $X$ is $W$-column-constant and then $%
C_{W}(F)=E_{W}(F)\leq R_{W}(F)$ for every concave $F$). It thus remains to
show that \textbf{(U1) implies (U3)}.

We first consider an easy special case of matrices $X$ that are row-constant
(whether degenerate or not); see Proposition \ref{p:row}. We then address
the substantial case of non-degenerate matrices $X$ that are not
row-constant. In this case, the statements hold if and only if the matrix $X$
is column-constant; see Theorem \ref{th:column}.

\begin{proposition}
\label{p:row}Let $X$ be a row-constant matrix. Then the statements (U1),
(U2), and (U3) of Theorem \ref{th:U} are equivalent.
\end{proposition}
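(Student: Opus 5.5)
Since the text already establishes (U3) $\Rightarrow$ (U2) $\Rightarrow$ (U1), it suffices to show (U1) $\Rightarrow$ (U3) for row-constant $X$. In fact I would show that (U3) holds outright for such $X$, so that all three statements are true and therefore equivalent.

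Write $x_{ij}=\rho_i$ for all $j$. Then for every weight matrix $W$ the matrix $X$ is $W$-row-constant. Proposition \ref{p:0}(b2) gives $R_W(F)=E_W(F)$ for every concave $F$, and Proposition \ref{p:0}(a) gives $C_W(F)\geq E_W(F)=R_W(F)$. In particular $C_W(Q)\geq R_W(Q)$ always holds.

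Now fix $W$ with $C_W(Q)\leq R_W(Q)$. Combined with the previous inequality, this forces $C_W(Q)=R_W(Q)=E_W(Q)$. Because $Q$ is strictly concave, Proposition \ref{p:0}(c1) shows that $X$ is $W$-column-constant. Proposition \ref{p:0}(b1) then gives $C_W(F)=E_W(F)$ for every concave $F$. Together with $E_W(F)=R_W(F)$ (or simply Proposition \ref{p:0}(a)), this yields $C_W(F)=E_W(F)\leq R_W(F)$, which is exactly (U3).

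Hence (U3) holds for every row-constant $X$, with no non-degeneracy assumption needed. The equivalence of (U1), (U2), (U3) then follows both from this and from the chain of implications already noted. There is no real obstacle; the only point requiring care is observing that row-constancy gives the reverse inequality $C_W(Q)\geq R_W(Q)$, which collapses the hypothesis to an equality.
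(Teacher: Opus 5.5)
Your proof is correct and is essentially the paper's argument: row-constancy gives $C_W(Q)\geq E_W(Q)=R_W(Q)$ by Proposition \ref{p:0}(a),(b2), so the hypothesis forces equality, and then (c1) followed by (b1),(b2) yields (U3). As you note, (U1) is never actually used; the paper's proof likewise establishes (U3) outright for row-constant $X$, even though it is phrased as ``(U1) implies (U3).''
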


\begin{proof}
As noted above, we need to show that \textbf{(U1) implies (U3).}

Assume (U1). If $X$ is row-constant then $C_{W}(Q)\geq E_{W}(Q)=R_{W}(Q)$
for all $W$ by Proposition \ref{p:0} (a) and (b2), and so if $W$ is such
that $C_{W}(Q)\leq R_{W}(Q)$ then we have equality $%
C_{W}(Q)=E_{W}(Q)=R_{W}(Q)$; since $Q$ is strictly concave, by Proposition %
\ref{p:0} (c1), $X$ is (also) $W$-column-constant, and then $%
C_{W}(F)=E_{W}(F)=R_{W}(F)$ for every concave $F$ by Proposition \ref{p:0}
(b1) and (b2).~Thus (U1) implies (U3).
\end{proof}

\begin{theorem}
\label{th:column}Let $X$ be a non-degenerate matrix that is not
row-constant. Then the statements (U1), (U2), and (U3) of Theorem \ref{th:U}
and (UC) below are equivalent.

\begin{description}
\item[(UC)] The matrix $X$ is column-constant.
\end{description}

\noindent Moreover, in this case $C_{W}(F)=E_{W}(F)\leq R_{W}(F)$ (for every
concave $F$), and thus $C_{W}(Q)\leq R_{W}(Q)$, hold for every $W$.
\end{theorem}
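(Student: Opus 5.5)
The implications (U3)$\Rightarrow$(U2)$\Rightarrow$(U1) were already noted after Theorem \ref{th:U}, so I will prove (UC)$\Rightarrow$(U3) and (U1)$\Rightarrow$(UC); the "moreover" clause will come out of the first of these. For (UC)$\Rightarrow$(U3): if $X$ is column-constant, then it is $W$-column-constant for every $W$. Proposition \ref{p:0}(b1) then gives $C_W(F)=E_W(F)$, and Proposition \ref{p:0}(a) gives $E_W(F)\le R_W(F)$, for every concave $F$ and every $W$. In particular $C_W(Q)\le R_W(Q)$ holds for all $W$, which is exactly the "moreover" statement, and (U3) follows.

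For (U1)$\Rightarrow$(UC) I argue by contraposition. Suppose $X$ is not column-constant. I will build a weight matrix $W$ with $C_W(Q)\le R_W(Q)$ and a concave $F$ with $C_W(F)>R_W(F)$. The model configuration uses three cells with pairwise distinct entries: $p=x_{ij}$ and $q=x_{i'j}$ in a common column $j$, and $r=x_{ik}$ in the same row $i$ as $p$ but in a different column $k$. Put weights $\alpha,\beta,\gamma>0$ on these three cells and zero elsewhere. Column $j$ then contains $\{p,q\}$, column $k$ contains $\{r\}$, row $i$ contains $\{p,r\}$, and row $i'$ contains $\{q\}$. Hence $C_W(F)-E_W(F)$ is the Jensen gap of $F$ on the two-point distribution on $\{p,q\}$ (weights $\alpha,\beta$), and $R_W(F)-E_W(F)$ is the Jensen gap on $\{p,r\}$ (weights $\alpha,\gamma$). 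For $Q$ these gaps are $\frac{\alpha\beta}{\alpha+\beta}\|p-q\|^2$ and $\frac{\alpha\gamma}{\alpha+\gamma}\|p-r\|^2$. Letting $\beta\to0$ with $\alpha,\gamma$ fixed makes the first gap smaller than the second, which gives $C_W(Q)\le R_W(Q)$. For $F$ I take $F=\min(\ell_1,\ell_2)$, the minimum of two affine functions, with kink hyperplane $\{\ell_1=\ell_2\}$ strictly separating $q$ from the segment $[p,r]$. Then $F$ is affine on $[p,r]$, so the row gap is $0$, while the column gap on $\{p,q\}$ is strictly positive. Hence $C_W(F)>R_W(F)$, contradicting (U1). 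Such a separating hyperplane exists precisely when $q\notin[p,r]$.

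It remains to find this configuration. This is the step I expect to be the main obstacle: it is a combinatorial case analysis, and it is where non-degeneracy and "not row-constant" are used. Since $X$ is not column-constant, some column $j$ has distinct entries $p=x_{ij}\neq q=x_{i'j}$. Since $X$ has a third distinct value, some cell holds $r\notin\{p,q\}$. If that cell lies in row $i$ or row $i'$, relabel so that it is in the row of $p$. If instead it lies in column $j$, then column $j$ has three distinct values, and I will use the fact that $X$ is not row-constant to find a row with two distinct entries and rebuild the triangle from there. If it lies in neither row $i$, row $i'$, nor column $j$, I will compare $r$ with the entries of rows $i$ and $i'$ in its column: either one of them differs from $p$ or $q$ appropriately, or those rows are themselves non-constant. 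I will also allow a fourth cell, for example a zero-gap cell, to adjust the marginals where needed.

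The collinear case $q\in[p,r]$ needs separate care. There I will swap roles, using $p$ or $r$ as the point to separate, since among three distinct collinear points exactly one is the middle one. If the middle point is forced into the "wrong" role, I will instead choose $F$ strictly concave along the line and exploit that the two Jensen gaps scale differently from $Q$ as the weights vary. I expect this bookkeeping, rather than the weight-tuning argument, to absorb most of the effort.
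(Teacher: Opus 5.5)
Your proof of (UC)$\Rightarrow$(U3) and of the ``moreover'' clause matches the paper's. Your three-cell gadget (corner $p$, column-mate $q$, row-mate $r$) is correct whenever $q\notin[p,r]$: letting $\beta\to0$ gives $C_W(Q)<R_W(Q)$, and a two-piece affine $F$ kills the row gap while keeping the column gap positive. This covers the paper's Case A by a slightly different mechanism; the paper instead fixes $C_W$ and drives $R_W(F_n)\to-\infty$ via Claim 1.

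The combinatorial search you flag as the main obstacle is actually routine; it is the paper's Claim 3. Suppose the third value sits at $(i'',k)$ with $i''\notin\{i,i'\}$ and $k\neq j$. If $x_{i''j}\in\{p,q\}$, row $i''$ yields a configuration. Otherwise column $j$ has three distinct values, and any non-constant row $i^*$ yields one with corner $(i^*,j)$.

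The genuine gap is the collinear case $q\in(p,r)$, the paper's Case B, which you only gesture at, and your first remedy cannot work. The corner $p$ lies in both the row pair $\{p,r\}$ and the column pair $\{p,q\}$, and $[p,q]\subset[p,r]$. Hence a concave $F$ with zero row gap is affine on $[p,r]$ and so has zero column gap. Separating $r$ by a kink gives the wrong sign, and separating $p$ makes both gaps positive. Nor can you avoid the case by relabeling: for $X$ with rows $(1,0)$ and $(\delta,\delta)$, $0<\delta<1$, every such configuration with three distinct entries has its column-mate in the middle. So you must compare two strictly positive gaps quantitatively, and this substantive step is missing from your proposal.

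The paper does this through Claim 1: under (U1), every row average lies in the convex hull of the column averages, which is proved with $F_n=Q-n(\max\{v\cdot z-\alpha,0\})^2$. It then uses the weights $W\propto\bigl[\begin{smallmatrix}1+2\delta&1-\delta\\ 1+2\delta&0\end{smallmatrix}\bigr]$ in coordinates $p=1$, $r=0$, $q=\delta$. These give $R_W(Q)>C_W(Q)$ but $r_1=(1+2\delta)/(2+\delta)>c_1=(1+\delta)/2>c_2=0$, so $r_1$ lies outside the convex hull of the column averages.

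You can also finish directly within your own framework. Take $F(z)=-\max\{z-\kappa,0\}$ with $\kappa\in(\delta,1)$. The column and row gaps are then $\min\{\alpha(1-\kappa),\beta(\kappa-\delta)\}$ and $\min\{\alpha(1-\kappa),\gamma\kappa\}$, while $C_W(Q)\le R_W(Q)$ reads $\beta(1-\delta)^2/(\alpha+\beta)\le\gamma/(\alpha+\gamma)$. Choosing $\kappa\in(1/(2-\delta),1)$, $\gamma$ small, and $\beta$ slightly above $\gamma\kappa/(\kappa-\delta)$ satisfies both, contradicting (U1). The point is to concentrate the curvature of $F$ near the corner $p$; strict concavity of $F$ along the line is not what matters, since $Q$ itself is strictly concave.
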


\begin{proof}
We will show that (U1) implies (UC) (this is the substantial part of the
proof), and (UC) implies (U3) and the \textquotedblleft
moreover\textquotedblright\ statement.

\textbf{(U1) implies (UC). }Assume (U1).

$\bullet $ \emph{Claim 1.} For every $W$ such that $C_{W}(Q)\leq R_{W}(Q)$,
the row averages $r_{i}$ must lie in the convex hull of the column averages $%
c_{j}$; i.e., letting $I_{0}:=\{i\in I:w_{i\cdot }>0\}$ and $J_{0}:=\{j\in
J:w_{\cdot j}>0\}$, we have $r_{i}\in \mathrm{conv}\{c_{j}:j\in J_{0}\}$ for
every $i\in I_{0}$.

\emph{Proof. }Let $\Gamma :=\mathrm{conv}\{c_{j}:j\in J_{0}\}$ and assume
that $r_{i}\notin \Gamma $ for some $i$ with $w_{i\cdot }>0.$ Since $\Gamma $
is a closed convex set, by the Separating Hyperplane Theorem there are $v\in 
\mathbb{R}^{m}$ and $\alpha \in \mathbb{R}$ such that $v\cdot r_{i}>\alpha
\geq v\cdot c_{j}$ for every $j\in J_{0}$. For every $n$ let $F_{n}$ be the
following function: $F_{n}(z):=Q(z)-n(\max \{v\cdot z-\alpha ,0\})^{2}$; the
function $F_{n}$ is concave (as the sum of two concave functions). Now $%
C_{W}(F_{n})=C_{W}(Q)$ for all $n$ (because $F_{n}(c_{j})=Q(c_{j})$ for all $%
j$ with $w_{\cdot j}>0$), whereas $R_{W}(F_{n})\rightarrow -\infty $ as $%
n\rightarrow \infty $ (because $v\cdot r_{i}-\alpha >0$ and so $%
F_{n}(r_{i})\rightarrow -\infty $); therefore, for $n$ large enough the
inequality $C_{W}(F_{n})\leq R_{W}(F_{n})$ fails---contradicting (U1).

$\bullet $ \emph{Claim 2.} Every $2\times 2$ submatrix of $X$ has at most $2$
distinct entries.

\emph{Proof}. Let 
\begin{equation*}
X^{0}=\left[ 
\begin{array}{cc}
a & b \\ 
d & \cdot%
\end{array}%
\right]
\end{equation*}
be a $2\times 2$ submatrix of $X$ with $a,b,d$ distinct (the fourth entry
will get weight $0$ and so will not matter). We distinguish two cases,
according to whether or not $d$ lies in the open interval $(a,b)$ (i.e., on
the straight line through $a$ and $b$, between $a$ and $b$).

Case A: $d\notin (a,b)$. Let 
\begin{equation*}
W=\left[ 
\begin{array}{cc}
(1-\varepsilon )/2 & (1-\varepsilon )/2 \\ 
\varepsilon & 0%
\end{array}%
\right]
\end{equation*}%
for $\varepsilon >0$. As $\varepsilon \rightarrow 0$ we get $%
R_{W}(Q)\rightarrow -\left\Vert (a+b)/2\right\Vert ^{2}$ and $%
C_{W}(Q)\rightarrow -\left\Vert a\right\Vert ^{2}/2-\left\Vert b\right\Vert
^{2}/2,$ and so $a\neq b$ yields $C_{W}(Q)<R_{W}(Q)$ for small enough $%
\varepsilon >0$; but $c_{1}\rightarrow a,$ $c_{2}=b$, and $r_{2}=d$, and so $%
r_{2}\notin \mathrm{conv}\{c_{1},c_{2}\}$ for small enough $\varepsilon >0$,
a contradiction to Claim 1.

Case B: $d\in (a,b)$. Restricting to the $1$-dimensional space (the line)
that contains $a,b,$ and $d$ and using the coordinate system $\lambda
a+(1-\lambda )b\mapsto \lambda $ yields the matrix\footnote{%
The change of coordinates does not affect the sign of $R_{W}(Q)-C_{W}(Q)$;
it amounts to using the quadratic $Q_{1}(z)=-\left\Vert z-b\right\Vert
^{2}/\left\Vert a-b\right\Vert ^{2}$ instead of $Q(z)=-\left\Vert
z\right\Vert ^{2}$. Every inequality between $C_{W}$ and $R_{W}$ that holds
for $Q$ holds also for $Q_{1}$, in fact for every $F=\lambda Q+G$ with $%
\lambda >0$ and $G$ affine. Indeed, $C_{W}(G)=E_{W}(G)=R_{W}(G)$ since $G$
is affine, and so $C_{W}(\lambda Q+G)-R_{W}(\lambda Q+G)=\lambda
(C_{W}(Q)-R_{W}(Q))$ since $C_{W}(F)$ and $R_{W}(F)$ are linear in $F.$} 
\begin{equation*}
\left[ 
\begin{array}{cc}
1 & 0 \\ 
\delta & \cdot%
\end{array}%
\right]
\end{equation*}%
with $0<\delta <1.$ Let 
\begin{equation*}
W=\frac{1}{3(1+\delta )}\left[ 
\begin{array}{cc}
1+2\delta & 1-\delta \\ 
1+2\delta & 0%
\end{array}%
\right] ,
\end{equation*}%
then a straightforward computation yields $R_{W}(Q)-C_{W}(Q)=\delta
(1-\delta )(1+2\delta )/(6(2+\delta ))>0$, but $c_{1}=(1+\delta )/2,$ $%
c_{2}=0$, and $r_{1}=(1+2\delta )/(2+\delta )>c_{1}>c_{2}$, and so $%
r_{1}\notin \mathrm{conv}\{c_{1},c_{2}\}$, a contradiction to Claim 1.

This completes the proof of Claim 2.

$\bullet $ \emph{Claim 3.} $X$ is column-constant (i.e., (UC)).

\emph{Proof}. Assume that $X$ is not column-constant; let $a\neq b$ be two
distinct entries in, say, column 1. Then Claim 2 implies that every row that
has $a$ or $b$ in column 1 must contain only $a$ and $b$. If column 1 were
to contain only $a$ and $b$, then the entire matrix would contain only $a$
and $b$, a contradiction to the non-degeneracy of $X$. Therefore column 1
must contain another distinct entry $d$. But then each row with $a$ in
column 1 cannot contain $b$ (again by Claim 2, using the two rows with $a$
and $d$ in column 1), and so the row is a constant $a$ row; similarly, each
row with $b$ in column 1 cannot contain $a$, and so it is a constant $b$
row. Carrying out the same argument, but now with the pair of distinct
entries $a$ and $d$, shows that every row with $d$ in column 1 must be a
constant $d$ row; this holds for every $d$, and so all rows are constant
rows---a contradiction to our assumption that $X$ is not row-constant. This
completes the proof of Claim 3; thus, (U1) implies (UC).

\textbf{(UC) implies (U3) and the \textquotedblleft
moreover\textquotedblright\ statement.}

If $X$ is column-constant then $C_{W}(F)=E_{W}(F)\leq R_{W}(F)$ for every $W$
and every concave $F$ by Proposition \ref{p:0} (a) and (b1).
\end{proof}

\medskip

\noindent \textbf{Remarks. }\emph{(a) }The result continues to hold even if
we restrict the concave functions $F$ to those that generate bounded proper
scoring rules (see Proposition \ref{p:proper}), in particular those with
bounded gradients in the relevant domain (namely, the compact convex set
that contains all $x_{ij}$); indeed, our construction uses only such
functions $F_{n}$ (see the proof of Claim 1 above).

\emph{(b)} The result is false when $X$ has only two distinct entries. For
example, let 
\begin{equation*}
X=\left[ 
\begin{array}{cc}
1 & 0 \\ 
0 & 1%
\end{array}%
\right] .
\end{equation*}%
For every weight matrix $W$ (with positive row and column weight sums) we
have $r_{1}\geq c_{1}$ iff $w_{11}/(w_{11}+w_{12})\geq
w_{11}/(w_{11}+w_{21}) $ iff $w_{12}\leq w_{21}$ iff $w_{22}/(w_{22}+w_{12})%
\geq w_{22}/(w_{22}+w_{21})$ iff $c_{2}\geq r_{2}$, and so $r_{1},r_{2}$ are
either both inside the interval $[c_{1},c_{2}]$, or both strictly outside.
Because the corresponding weighted averages of the $r_{i}$ and the $c_{j}$
are equal, in the first case we have $C_{W}(F)\leq R_{W}(F)$ for every
concave $F$, and in the second case we have $C_{W}(F)>R_{W}(F)$ for all
strictly concave $F.$ Therefore, (U1) holds. However, (U2) does not hold:
take for instance all weights to be $1/4$, then $C_{W}(Q)=R_{W}(Q)=-1/4,$
whereas $E_{W}(Q)=-1/2$.

\emph{(c)} There are degenerate non-row-constant matrices for which Theorem %
\ref{th:U} holds; for instance, an $n\times n$ identity matrix $X$ for $%
n\geq 3$.

\subsubsection{Proof\label{susus-a:proof}}

We prove the main result of this Appendix.

\medskip

\begin{proof}[Proof of Theorem \protect\ref{th:bxc}]
We use Theorem \ref{th:U} above, with $X=(\bar{a}(b,d))_{b\in B,d\in D}$ the
matrix of state averages and $W=\lambda \equiv (\lambda (b,d))_{b\in B,d\in
D}$ the matrix of bin frequencies.

The construction of the perfectly calibrated $\mathbf{c}\equiv \mathbf{c}%
_{\lambda }$ yields $\mathcal{K}^{L}(\mathbf{c};\mathbf{c})=\mathcal{K}^{L}(%
\mathbf{c};\mathbf{d})=0,$ and thus $\mathcal{B}^{L}(\mathbf{c})=\mathcal{R}%
^{L}(\mathbf{c})=\mathcal{R}^{L}(\mathbf{d})$ (by the Decomposition Theorem %
\ref{th:S-decompose}), for every $L\in \mathcal{L}.$ Let $\lambda (b,\cdot
):=\sum_{d}\lambda (b,d)$ and $\lambda (\cdot ,d):=\sum_{b}\lambda (b,d)$
denote the marginal frequencies, and $\bar{a}(b,\cdot )$ and $\bar{a}(\cdot
,d)$ the marginal state averages; for every $L\in \mathcal{L}$ we have%
\begin{eqnarray*}
\mathcal{R}^{L}(\mathbf{b}) &=&\sum_{b}\lambda (b,\cdot )H^{L}(\bar{a}%
(b,\cdot ))-\mathcal{H}^{L}=R_{W}(H^{L})-\mathcal{H}^{L} \\
\mathcal{B}^{L}(\mathbf{c}) &=&\mathcal{R}^{L}(\mathbf{c})=\mathcal{R}^{L}(%
\mathbf{d})=\sum_{d}\lambda (\cdot ,d)H^{L}(\bar{a}(\cdot ,d))-\mathcal{H}%
^{L}=C_{W}(H^{L})-\mathcal{H}^{L},\text{\ \ \ and} \\
\mathcal{R}^{L}(\mathbf{b}\times \mathbf{c}) &\geq &\mathcal{R}^{L}(\mathbf{b%
}\times \mathbf{d})=\sum_{b}\sum_{d}\lambda (b,d)H^{L}(\bar{a}(b,d))-%
\mathcal{H}^{L}=E_{W}(H^{L})-\mathcal{H}^{L},
\end{eqnarray*}%
(by (\ref{eq:R-H-H})), where $H^{L}$ is the concave $L$-entropy function,
and $\mathcal{H}^{L}$ is the average $L$-entropy of the states (which does
not matter for the comparisons below).

Therefore:

\begin{itemize}
\item $\mathbf{c}$ is calibeating $\mathbf{b}$ if and only if $C_{W}(Q)\leq
R_{W}(Q)$ (because for the quadratic scoring rule we have $%
H^{L}(z)=Q(z)=-\left\Vert z\right\Vert ^{2}$);

\item $\mathbf{c}$ is proper calibeating $\mathbf{b}$ if and only if $%
C_{W}(H^{L})\leq R_{W}(H^{L})$ for every bounded proper scoring rule%
\footnote{%
The normalization of $L$ no longer matters since the errors are assumed to
be exactly $0$.} $L\in \mathcal{L}$, which is equivalent to $C_{W}(F)\leq
R_{W}(F)$ for every concave $F$ that generates a bounded proper scoring rule
(see Proposition \ref{p:proper} and Remark (a) in Section \ref{susus-a:bxc
general}: Theorem \ref{th:U} remains valid with (U1) restricted to this
class); and

\item $\mathbf{c}$ is calibeating $\mathbf{b}\times \mathbf{d}$ if and only
if $C_{W}(Q)=E_{W}(Q)$ (see (J1) in Remark (a) following Theorem \ref%
{th:beat-by-calib S}).
\end{itemize}

Thus (U1) (more precisely, its above restricted version) of Theorem \ref%
{th:U} is precisely (i), and (U2) implies (ii) (because $\mathcal{R}^{L}(%
\mathbf{b}\times \mathbf{d})\leq \mathcal{R}^{L}(\mathbf{b}\times \mathbf{c}%
) $, and so if $\mathbf{c}$ is calibeating $\mathbf{b}\times \mathbf{d}$
then it calibeats $\mathbf{b}\times \mathbf{c}$).\footnote{%
When the $\mathbf{c}$ and $\mathbf{d}$ binnings coincide, (U2) is precisely
(ii).} Since (U1) and (U2) are equivalent by Theorem \ref{th:U}, and (ii)
implies (i) (as established in Section \ref{sus:joint}), it follows that (i)
and (ii) are equivalent.
\end{proof}

\subsection{Proper Calibeating and Universal Utility Improvement\label%
{sus-a:improvement}}

In the decision-making under uncertainty setup of Section \ref{s:utility},
we consider here the effect of calibeating on utility. The reference
sequence is now taken to consist of actual forecasts, i.e., $B$ is a finite
subset of $C=\Delta (A)$, and we study the increase in average utility that
can be guaranteed by calibeating (the Remark at the end of this subsection
shows that the results generalize beyond the case of a forecasting reference
sequence).

\begin{proposition}
\label{p:U-U}Let $\mathbf{b}$ and $\mathbf{c}$ be forecasting sequences, $%
\mathbf{i}$ a binning sequence that refines $\mathbf{b}$, and $u$ a utility
function with induced proper scoring rule $L^{u}$. Then%
\begin{equation*}
U_{t}(\mathbf{c})-U_{t}(\mathbf{b})=\text{\textsc{Reg}}_{t}^{u}(\mathbf{b};%
\mathbf{i})+\left( \mathcal{R}_{t}^{L^{u}}(\mathbf{i})-\mathcal{B}%
_{t}^{L^{u}}(\mathbf{c})\right) .
\end{equation*}
\end{proposition}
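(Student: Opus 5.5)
The identity follows by writing each average utility in terms of scores via Proposition \ref{p:reg=K} and then cancelling. Write $L\equiv L^{u}$. Proposition \ref{p:reg=K} applied to the forecasting sequence $\mathbf{c}$ gives $U_{t}(\mathbf{c})=-\mathcal{B}_{t}^{L}(\mathbf{c})-\mathcal{H}_{t}^{L}$. Applied to $\mathbf{b}$ together with the binning $\mathbf{i}$ (which refines $\mathbf{b}$, as that proposition requires), it gives two further identities. The first is $V_{t}^{u}(\mathbf{i})=-\mathcal{R}_{t}^{L}(\mathbf{i})-\mathcal{H}_{t}^{L}$. The second is
\begin{equation*}
\text{\textsc{Reg}}_{t}^{u}(\mathbf{b};\mathbf{i})=V_{t}^{u}(\mathbf{i})-U_{t}(\mathbf{b}),
\end{equation*}
which is simply the definition of the regret with respect to $\mathbf{i}$. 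Rearranging the second, $U_{t}(\mathbf{b})=V_{t}^{u}(\mathbf{i})-\text{\textsc{Reg}}_{t}^{u}(\mathbf{b};\mathbf{i})=-\mathcal{R}_{t}^{L}(\mathbf{i})-\mathcal{H}_{t}^{L}-\text{\textsc{Reg}}_{t}^{u}(\mathbf{b};\mathbf{i})$.

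Subtracting this from the expression for $U_{t}(\mathbf{c})$, the average entropy term $\mathcal{H}_{t}^{L}$ cancels; it depends only on the state sequence $\mathbf{a}_{t}$ and is therefore the same in both. What remains is
\begin{equation*}
U_{t}(\mathbf{c})-U_{t}(\mathbf{b})=\text{\textsc{Reg}}_{t}^{u}(\mathbf{b};\mathbf{i})+\left(\mathcal{R}_{t}^{L}(\mathbf{i})-\mathcal{B}_{t}^{L}(\mathbf{c})\right),
\end{equation*}
which is the claimed identity.

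The one point needing care is that the utility formula $U_{t}=-\mathcal{B}^{L}-\mathcal{H}^{L}$ and the value formula $V^{u}_{t}(\mathbf{i})=-\mathcal{R}^{L}(\mathbf{i})-\mathcal{H}^{L}$ hold for any forecasting sequence and any pure binning. This is clear from the proof of Proposition \ref{p:reg=K}: both rely only on linearity of $L$ in its first argument and on the choice $\xi(i)=x^{\ast}(\bar{a}_{t}(i))$. The refinement hypothesis is needed only so that $\text{\textsc{Reg}}_{t}^{u}(\mathbf{b};\mathbf{i})$ is the intended, nonnegative regret. No step here is a real obstacle; the argument is bookkeeping.
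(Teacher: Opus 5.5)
Your proposal is correct and is essentially the paper's own proof: write $U_t(\mathbf{c})=-\mathcal{B}_t^{L^u}(\mathbf{c})-\mathcal{H}_t^{L^u}$ and $U_t(\mathbf{b})=V_t^u(\mathbf{i})-\textsc{Reg}_t^u(\mathbf{b};\mathbf{i})=-\mathcal{R}_t^{L^u}(\mathbf{i})-\mathcal{H}_t^{L^u}-\textsc{Reg}_t^u(\mathbf{b};\mathbf{i})$ via Proposition~\ref{p:reg=K} and the definition of regret, then subtract so that the $\mathcal{H}_t^{L^u}$ terms cancel. Your side remark is also accurate: the refinement hypothesis matters only for interpreting $\textsc{Reg}_t^u(\mathbf{b};\mathbf{i})$ as a nonnegative regret, not for the algebraic identity.
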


\begin{proof}
By Proposition \ref{p:reg=K} and the definition of regret, we have%
\begin{equation*}
U_{t}(\mathbf{c})=-\mathcal{B}_{t}^{L^{u}}(\mathbf{c})-\mathcal{H}%
_{t}^{L^{u}}
\end{equation*}%
and%
\begin{equation*}
U_{t}(\mathbf{b})=V_{t}^{u}(\mathbf{i})-\text{\textsc{Reg}}_{t}^{u}(\mathbf{b%
};\mathbf{i})=-\mathcal{R}_{t}^{L^{u}}(\mathbf{i})-\mathcal{H}_{t}^{L^{u}}-%
\text{\textsc{Reg}}_{t}^{u}(\mathbf{b};\mathbf{i}).
\end{equation*}%
Subtracting yields the result.
\end{proof}

\medskip

Thus, if $\mathbf{c}$ is $L^{u}$-calibeating $\mathbf{b}$ (i.e., $\mathcal{R}%
_{t}^{L^{u}}(\mathbf{b})-\mathcal{B}_{t}^{L^{u}}(\mathbf{c})\geq 0$), best
replying to the forecasts $c_{s}$ instead of the forecasts $b_{s}$ (i.e.,
taking the decisions $x^{\ast }(c_{s})$ instead of $x^{\ast }(b_{s})$)
yields a nonnegative improvement in average utility, at least as large as
the regret of $\mathbf{b}$.

\begin{corollary}
\label{c:u-calibeat}Let $B\subseteq C$, and let $\zeta $ be a $\mathbf{b}$%
-based forecasting procedure that is proper $(\varepsilon ,B)$-calibeating.
Then\footnote{%
The term $U_{t}(\mathbf{b})$ is constant and can be taken out of the
expectation.}%
\begin{equation}
\mathbb{E}\left[ U_{t}(\mathbf{c})-U_{t}(\mathbf{b})\right] \geq \text{%
\textsc{Reg}}_{t}^{u}(\mathbf{b})-\varepsilon ^{2}-o(1)
\label{eq:u-calibeat}
\end{equation}%
as $t\rightarrow \infty $, uniformly over all state sequences $\mathbf{a}$,
all forecasting sequences $\mathbf{b}$, and all utility functions $u$ with
induced proper scoring rule $L^{u}$ that is $1$-bounded.
\end{corollary}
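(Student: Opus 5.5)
The plan is to apply Proposition \ref{p:U-U} with the binning $\mathbf{i}=\mathbf{b}$. This is allowed, since $\mathbf{b}$ trivially refines itself and $B\subseteq C$, so $\mathbf{b}$ is a genuine forecasting sequence. It gives the pathwise identity
\begin{equation*}
U_{t}(\mathbf{c})-U_{t}(\mathbf{b})=\text{\textsc{Reg}}_{t}^{u}(\mathbf{b})+\left( \mathcal{R}_{t}^{L^{u}}(\mathbf{b})-\mathcal{B}_{t}^{L^{u}}(\mathbf{c})\right) ,
\end{equation*}
where $\text{\textsc{Reg}}_{t}^{u}(\mathbf{b};\mathbf{b})=\text{\textsc{Reg}}_{t}^{u}(\mathbf{b})$ is the ordinary regret. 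The terms $U_{t}(\mathbf{b})$, $\text{\textsc{Reg}}_{t}^{u}(\mathbf{b})$ and $\mathcal{R}_{t}^{L^{u}}(\mathbf{b})$ depend only on $\mathbf{a}_{t}$ and $\mathbf{b}_{t}$, so they are deterministic. Taking expectations over the randomization of $\zeta $ therefore yields
\begin{equation*}
\mathbb{E}\left[ U_{t}(\mathbf{c})-U_{t}(\mathbf{b})\right] =\text{\textsc{Reg}}_{t}^{u}(\mathbf{b})-\mathbb{E}\left[ \mathcal{B}_{t}^{L^{u}}(\mathbf{c})-\mathcal{R}_{t}^{L^{u}}(\mathbf{b})\right] .
\end{equation*}

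Next I use the hypothesis. Since $u$ ranges over utilities whose induced $L^{u}$ is in $\mathcal{L}_{1}$, the definition of proper $(\varepsilon ,B)$-calibeating gives
\begin{equation*}
\eta (t):=\sup_{L\in \mathcal{L}_{1}}\sup_{\mathbf{a}_{t},\mathbf{b}_{t}}\mathbb{E}\left[ \mathcal{B}_{t}^{L}(\mathbf{c})-\mathcal{R}_{t}^{L}(\mathbf{b})\right]
\end{equation*}
with $\varlimsup_{t\rightarrow \infty }\eta (t)\leq \varepsilon ^{2}$. Setting $o(1):=\max \{\eta (t)-\varepsilon ^{2},0\}$, which tends to $0$ and does not depend on $u$, $\mathbf{a}$ or $\mathbf{b}$, gives $\mathbb{E}[\mathcal{B}_{t}^{L^{u}}(\mathbf{c})-\mathcal{R}_{t}^{L^{u}}(\mathbf{b})]\leq \varepsilon ^{2}+o(1)$ uniformly. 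Substituting into the displayed identity gives (\ref{eq:u-calibeat}).

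There is no real obstacle here. The only points to check are that the identity of Proposition \ref{p:U-U} holds for every realization of $\mathbf{c}$, so expectations can be taken termwise, and that the error term is uniform because the supremum over $\mathcal{L}_{1}$ is built into the definition of proper calibeating. The choice of optimal decision $x^{\ast }$ is irrelevant, by Remark (c) after Proposition \ref{p:u-L}.
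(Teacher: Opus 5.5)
Your proposal is correct and follows the paper's proof exactly: it applies Proposition~\ref{p:U-U} with $\mathbf{i}=\mathbf{b}$ and bounds $\mathbb{E}[\mathcal{B}_t^{L^u}(\mathbf{c})-\mathcal{R}_t^{L^u}(\mathbf{b})]$ by $\varepsilon^2+o(1)$ using the definition of proper calibeating. The rest of what you write — taking the deterministic terms out of the expectation, and getting a uniform $o(1)$ from the supremum over $\mathcal{L}_1$ built into that definition — only spells out what the paper leaves implicit.
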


\begin{proof}
Apply Proposition \ref{p:U-U} with $\mathbf{i}=\mathbf{b}$ and use the
proper calibeating inequality $\mathbb{E}\left[ \mathcal{B}_{t}^{L^{u}}(%
\mathbf{c})-\mathcal{R}_{t}^{L^{u}}(\mathbf{b})\right] \leq \varepsilon
^{2}+o(1).$
\end{proof}

\medskip

Is the regret of $\mathbf{b}$ the largest attainable improvement? An
immediate observation is that a general upper bound is provided by the
regret of $\mathbf{b}$ with respect to the joint binning $\mathbf{b}\times 
\mathbf{c}$:

\begin{proposition}
\label{p:U<=U+Reg}Let $B\subseteq C$, and let $\zeta $ be a $\mathbf{b}$%
-based forecasting procedure. Then%
\begin{equation*}
U_{t}(\mathbf{c})-U_{t}(\mathbf{b})\leq \text{\textsc{Reg}}_{t}^{u}(\mathbf{b%
};\mathbf{b}\times \mathbf{c})
\end{equation*}%
for all $t\geq 1$, all state sequences $\mathbf{a}_{t}\in A^{t}$, all
sequences $\mathbf{b}_{t}\in B^{t}$, and all utility functions $u$.
\end{proposition}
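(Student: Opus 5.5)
The plan is to express both average utilities through decision values and regrets, using Proposition \ref{p:reg=K} and the definitions in Section \ref{s:utility}. The key observation is that the decisions $x^{\ast }(c_{s})$ taken by best replying to $\mathbf{c}$ form one admissible map from the joint binning to decisions. Concretely, define $\xi :B\times C\rightarrow X$ by $\xi (b,c):=x^{\ast }(c)$. This map is measurable with respect to $\mathbf{b}\times \mathbf{c}$. By the definition of the joint decision value we then get
\begin{equation*}
U_{t}(\mathbf{c})=\frac{1}{t}\sum_{s=1}^{t}u(a_{s},\xi (b_{s},c_{s}))\leq V_{t}^{u}(\mathbf{b}\times \mathbf{c}).
\end{equation*}

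Next, because $B\subseteq C$, the joint binning $\mathbf{b}\times \mathbf{c}$ refines the binning by $\mathbf{b}$: all periods in a $(b,c)$-bin carry the same forecast $b$. So the regret of $\mathbf{b}$ with respect to $\mathbf{i}=\mathbf{b}\times \mathbf{c}$ is well defined, and by definition
\begin{equation*}
\text{\textsc{Reg}}_{t}^{u}(\mathbf{b};\mathbf{b}\times \mathbf{c})=V_{t}^{u}(\mathbf{b}\times \mathbf{c})-U_{t}(\mathbf{b}).
\end{equation*}
Subtracting $U_{t}(\mathbf{b})$ from both sides of the inequality in the previous paragraph gives exactly the claimed bound. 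It holds for every $t$, every $\mathbf{a}_{t}$, every $\mathbf{b}_{t}$, and every $u$, since no property of $\zeta $ was used beyond the fact that $\mathbf{c}$ is a sequence in $C$.

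The only point needing care is the existence of the maximum in $V_{t}^{u}(\mathbf{b}\times \mathbf{c})$. This follows from the computation in the proof of Proposition \ref{p:reg=K}, where the maximizer is $\xi (i)=x^{\ast }(\bar{a}_{t}(i))$. There is also no real obstacle if the procedure is stochastic: the inequality holds pathwise, for each realization of $\mathbf{c}$, and hence also in expectation. I would add a brief remark that one could alternatively write the bound as $\mathcal{R}_{t}^{L^{u}}(\mathbf{b}\times \mathbf{c})-\mathcal{B}_{t}^{L^{u}}(\mathbf{c})\leq 0$ via Proposition \ref{p:U-U} with $\mathbf{i}=\mathbf{b}\times \mathbf{c}$. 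That form makes the comparison with Corollary \ref{c:u-calibeat} transparent: equality, and thus exhaustive improvement, holds exactly under complete calibeating.
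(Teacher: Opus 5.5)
Your proof is correct, and it is more elementary than the paper's. The paper applies Proposition~\ref{p:U-U} with $\mathbf{i}=\mathbf{b}\times\mathbf{c}$ and then the Decomposition Theorem~\ref{th:S-decompose}. This gives the exact identity $U_t(\mathbf{c})-U_t(\mathbf{b})=\text{\textsc{Reg}}_t^u(\mathbf{b};\mathbf{b}\times\mathbf{c})-\mathcal{K}_t^{L^u}(\mathbf{c};\mathbf{b}\times\mathbf{c})$ in (\ref{eq:improv-joint}), and the paper then drops the calibration term, which is nonnegative because $L^u$ is proper. You bypass scoring rules entirely. Best replying to $\mathbf{c}$ is one admissible rule $\xi(b,c)=x^{\ast}(c)$ on the joint binning, so $U_t(\mathbf{c})\le V_t^u(\mathbf{b}\times\mathbf{c})$, and the claim is this inequality minus $U_t(\mathbf{b})$.

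Your route also yields an exact identity, with slack $V_t^u(\mathbf{b}\times\mathbf{c})-U_t(\mathbf{c})=\text{\textsc{Reg}}_t^u(\mathbf{c};\mathbf{b}\times\mathbf{c})$. By Proposition~\ref{p:reg=K} this equals $\mathcal{K}_t^{L^u}(\mathbf{c};\mathbf{b}\times\mathbf{c})$, so the two arguments give the same decomposition. Yours names the gap as the joint regret of $\mathbf{c}$; the paper's names it as the joint calibration score.

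What each buys: the paper's form is the one it needs next, since (\ref{eq:improv-joint}) is what yields Theorem~\ref{th:proper calibeat universal improving}. Yours shows the inequality is a pure ``more information cannot hurt'' fact. It uses neither properness of $L^u$ nor the decomposition. It would hold for any fixed map from forecasts to decisions in place of $x^{\ast}$. It also holds with a supremum in place of the maximum, so the attainment issue you flag is not actually needed.

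One small point on your closing remark: the equality / exhaustive-improvement characterization is the content of Corollary~\ref{c:u-calibeat-joint} and Theorem~\ref{th:proper calibeat universal improving} (universally over $u$), not of Corollary~\ref{c:u-calibeat}.
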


\begin{proof}
By Proposition \ref{p:U-U} and the Decomposition Theorem \ref{th:S-decompose}
we have%
\begin{eqnarray}
U_{t}(\mathbf{c})-U_{t}(\mathbf{b}) &=&\text{\textsc{Reg}}_{t}^{u}(\mathbf{b}%
;\mathbf{b}\times \mathbf{c})+\mathcal{R}_{t}^{L^{u}}(\mathbf{b}\times 
\mathbf{c})-\mathcal{B}_{t}^{L^{u}}(\mathbf{c})  \notag \\
&=&\text{\textsc{Reg}}_{t}^{u}(\mathbf{b};\mathbf{b}\times \mathbf{c})-%
\mathcal{K}_{t}^{L^{u}}(\mathbf{c};\mathbf{b}\times \mathbf{c})\leq \text{%
\textsc{Reg}}_{t}^{u}(\mathbf{b};\mathbf{b}\times \mathbf{c}).
\label{eq:improv-joint}
\end{eqnarray}
\end{proof}

\medskip

This upper bound is in fact achieved by (proper) complete calibeating, i.e.,
(proper) calibeating the joint binning; see Theorems \ref{th:proper complete
calibeating} and \ref{th:beat-by-calib S}.

\begin{corollary}
\label{c:u-calibeat-joint}Let $B\subseteq C$, and let $\zeta $ be a $\mathbf{%
b}$-based forecasting procedure that is proper complete $(\varepsilon ,B)$%
-calibeating. Then%
\begin{equation*}
\mathbb{E}\left[ \text{\textsc{Reg}}_{t}^{u}(\mathbf{b};\mathbf{b}\times 
\mathbf{c})\right] -\varepsilon ^{2}-o(1)\leq \mathbb{E}\left[ U_{t}(\mathbf{%
c})-U_{t}(\mathbf{b})\right] \leq \mathbb{E}\left[ \text{\textsc{Reg}}%
_{t}^{u}(\mathbf{b};\mathbf{b}\times \mathbf{c})\right]
\end{equation*}%
and in addition we have%
\begin{equation*}
0\leq \mathbb{E}\left[ \text{\textsc{Reg}}_{t}^{u}(\mathbf{c})\right] \leq 
\mathbb{E}\left[ \text{\textsc{Reg}}_{t}^{u}(\mathbf{c};\mathbf{b}\times 
\mathbf{c})\right] \leq \varepsilon ^{2}+o(1);
\end{equation*}%
these inequalities hold as $t\rightarrow \infty $ uniformly over all state
sequences $\mathbf{a}$, all forecasting sequences $\mathbf{b}$, and all
utility functions $u$ with induced proper scoring rule $L^{u}$ that is $1$%
-bounded.
\end{corollary}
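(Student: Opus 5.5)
The upper bound in Corollary \ref{c:u-calibeat-joint} is already Proposition \ref{p:U<=U+Reg}. For the lower bound, I would use the exact identity (\ref{eq:improv-joint}) from the proof of that proposition:
\begin{equation*}
U_{t}(\mathbf{c})-U_{t}(\mathbf{b})=\text{\textsc{Reg}}_{t}^{u}(\mathbf{b};\mathbf{b}\times \mathbf{c})-\mathcal{K}_{t}^{L^{u}}(\mathbf{c};\mathbf{b}\times \mathbf{c}),
\end{equation*}
which holds pathwise for every realization of the forecasts. Take expectations over the randomization of $\zeta$; this is legitimate since every term is bounded for $L^{u}\in\mathcal{L}_{1}$. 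It then suffices to show $\mathbb{E}[\mathcal{K}_{t}^{L^{u}}(\mathbf{c};\mathbf{b}\times\mathbf{c})]\le\varepsilon^{2}+o(1)$, uniformly over $\mathbf{a}$, $\mathbf{b}$, and $u\in\mathcal{U}_{1}$.

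That bound is exactly the definition of proper complete $(\varepsilon,B)$-calibeating. By the Decomposition Theorem \ref{th:S-decompose}, applied to the joint binning $\mathbf{b}\times\mathbf{c}$ (which refines $\mathbf{c}$), we have $\mathcal{B}_{t}^{L}(\mathbf{c})-\mathcal{R}_{t}^{L}(\mathbf{b}\times\mathbf{c})=\mathcal{K}_{t}^{L}(\mathbf{c};\mathbf{b}\times\mathbf{c})$. The definition gives a $\varlimsup$ of a supremum over $L\in\mathcal{L}_{1}$ and all sequences that is at most $\varepsilon^{2}$. Rewriting it as ``$\le\varepsilon^{2}+o(1)$ uniformly'' is immediate: let $\eta(t)$ be that supremum minus $\varepsilon^{2}$, truncated below at $0$; then $\eta(t)\to0$. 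Finally, $u\in\mathcal{U}_{1}$ means precisely that $L^{u}\in\mathcal{L}_{1}$, so the uniformity transfers from scoring rules to utilities.

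For the second chain of inequalities, I would argue as follows. $\text{\textsc{Reg}}_{t}^{u}(\mathbf{c})\ge0$ holds because $\xi=x^{\ast}$ is a feasible map. The middle inequality holds because $V_{t}^{u}(\mathbf{b}\times\mathbf{c})\ge V_{t}^{u}(\mathbf{c})$; equivalently, by Proposition \ref{p:K=DV+Reg}, the two regrets differ by $\Delta V_{t}^{u}(\mathbf{b}\,|\,\mathbf{c})\ge0$. The last inequality holds because $\text{\textsc{Reg}}_{t}^{u}(\mathbf{c};\mathbf{b}\times\mathbf{c})=\mathcal{K}_{t}^{L^{u}}(\mathbf{c};\mathbf{b}\times\mathbf{c})$ by Proposition \ref{p:reg=K}, together with the bound from the previous paragraph. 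Taking expectations preserves all the pathwise inequalities.

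There is no real obstacle. The only care needed is bookkeeping: check that (\ref{eq:improv-joint}) holds pathwise, so that it survives taking expectations (the term $U_{t}(\mathbf{b})$ is deterministic), and that the uniformity over utilities is correctly inherited from the uniformity over $\mathcal{L}_{1}$.
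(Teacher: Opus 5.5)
Your proof is correct and follows essentially the paper's argument. The lower bound comes from Proposition \ref{p:U-U} with $\mathbf{i}=\mathbf{b}\times\mathbf{c}$ (i.e., identity (\ref{eq:improv-joint})) combined with the proper complete calibeating bound, and the second chain rests on $\text{\textsc{Reg}}_{t}^{u}=\mathcal{K}_{t}^{L^{u}}$ from Proposition \ref{p:reg=K}. The only cosmetic difference is the middle inequality: you use $V_{t}^{u}(\mathbf{b}\times\mathbf{c})\geq V_{t}^{u}(\mathbf{c})$ where the paper cites Corollary \ref{c:refine-S}, and these are the same fact once regret is identified with calibration.
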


\begin{proof}
Apply Proposition \ref{p:U-U} with $\mathbf{i}=\mathbf{b}\times \mathbf{c}$
and use the proper-calibeating inequality $\mathbb{E}\left[ \mathcal{B}%
_{t}^{L^{u}}(\mathbf{c})-\mathcal{R}_{t}^{L^{u}}(\mathbf{b}\times \mathbf{c})%
\right] \leq \varepsilon ^{2}+o(1)$; for the second statement, use $\mathcal{%
K}_{t}^{L^{u}}(\mathbf{c})\leq \mathcal{K}_{t}^{L^{u}}(\mathbf{c};\mathbf{b}%
\times \mathbf{c})=\mathcal{B}_{t}^{L^{u}}(\mathbf{c})-\mathcal{R}%
_{t}^{L^{u}}(\mathbf{b}\times \mathbf{c})$.
\end{proof}

\medskip

This suggests defining the notion of an \textquotedblleft exhaustive%
\footnote{%
We point out that \textquotedblleft exhaustive\textquotedblright\ is a local
rather than global notion.} improvement,\textquotedblright\ namely, an
average utility improvement that achieves the upper bound of Proposition \ref%
{p:U<=U+Reg}. We require this uniformly over histories and utilities: a $%
\mathbf{b}$-based forecasting procedure $\zeta $ is \emph{universally }$%
(\varepsilon ,B)$\emph{-exhaustively-improving} if 
\begin{equation*}
\varlimsup_{t\rightarrow \infty }\left( \sup_{u\in \mathcal{U}_{1}}\sup_{%
\mathbf{a}_{t}\in A^{t},\mathbf{b}_{t}\in B^{t}}\mathbb{E}\left[ \text{%
\textsc{Reg}}_{t}^{u}(\mathbf{b};\mathbf{b}\times \mathbf{c})-\left( U_{t}(%
\mathbf{c})-U_{t}(\mathbf{b})\right) \right] \right) \leq \varepsilon ^{2}.
\end{equation*}%
Since \textsc{Reg}$_{t}^{u}(\mathbf{b};\mathbf{b}\times \mathbf{c})-\left(
U_{t}(\mathbf{c})-U_{t}(\mathbf{b})\right) =\mathcal{K}_{t}^{L^{u}}(\mathbf{c%
};\mathbf{b}\times \mathbf{c})$ by (\ref{eq:improv-joint}), we have:

\begin{theorem}
\label{th:proper calibeat universal improving}Let $B\subseteq C$. A $\mathbf{%
b}$-based forecasting procedure $\zeta $ is proper complete $(\varepsilon
,B) $-calibeating if and only if it is universally $(\varepsilon ,B)$%
-exhaustively-improving.
\end{theorem}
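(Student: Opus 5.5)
The approach is to reduce the statement to the pathwise identity obtained in the proof of Proposition \ref{p:U<=U+Reg}, namely (\ref{eq:improv-joint}):
\begin{equation*}
\text{\textsc{Reg}}_{t}^{u}(\mathbf{b};\mathbf{b}\times \mathbf{c})-\left( U_{t}(\mathbf{c})-U_{t}(\mathbf{b})\right) =\mathcal{K}_{t}^{L^{u}}(\mathbf{c};\mathbf{b}\times \mathbf{c}).
\end{equation*}
This holds for every $t$, every realization of $\mathbf{a}_{t}$, $\mathbf{b}_{t}$, every realized $\mathbf{c}_{t}$, and every utility $u$ with induced scoring rule $L^{u}$. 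It comes from Proposition \ref{p:U-U} applied with the binning $\mathbf{i}=\mathbf{b}\times \mathbf{c}$. That binning refines $\mathbf{b}$, so the proposition applies, and it also refines $\mathbf{c}$. Because it refines $\mathbf{c}$, the Decomposition Theorem \ref{th:S-decompose} gives $\mathcal{B}_{t}^{L^{u}}(\mathbf{c})-\mathcal{R}_{t}^{L^{u}}(\mathbf{b}\times \mathbf{c})=\mathcal{K}_{t}^{L^{u}}(\mathbf{c};\mathbf{b}\times \mathbf{c})$.

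Next I take expectations over the randomization of $\zeta$. The identity is pathwise, so the expectation of the left-hand side equals $\mathbb{E}[\mathcal{K}_{t}^{L^{u}}(\mathbf{c};\mathbf{b}\times \mathbf{c})]$. By the Decomposition Theorem again, this in turn equals $\mathbb{E}[\mathcal{B}_{t}^{L^{u}}(\mathbf{c})-\mathcal{R}_{t}^{L^{u}}(\mathbf{b}\times \mathbf{c})]$. I then take the supremum over $\mathbf{a}_{t}\in A^{t}$, $\mathbf{b}_{t}\in B^{t}$ and $u\in \mathcal{U}_{1}$, followed by $\varlimsup_{t}$. The quantity in the definition of universally $(\varepsilon ,B)$-exhaustively-improving then coincides exactly with
\begin{equation*}
\varlimsup_{t\rightarrow \infty }\sup_{u\in \mathcal{U}_{1}}\sup_{\mathbf{a}_{t},\mathbf{b}_{t}}\mathbb{E}\left[ \mathcal{B}_{t}^{L^{u}}(\mathbf{c})-\mathcal{R}_{t}^{L^{u}}(\mathbf{b}\times \mathbf{c})\right] .
\end{equation*}

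It remains to show that this is the same as the quantity in the definition of proper complete $(\varepsilon ,B)$-calibeating, whose supremum runs over $L\in \mathcal{L}_{1}$. For that, I use Proposition \ref{p:u-L} together with the definition of $\mathcal{U}_{1}$: the set $\{L^{u}:u\in \mathcal{U}_{1}\}$ is exactly $\mathcal{L}_{1}$. One inclusion holds by definition of $\mathcal{U}_{1}$. For the other, any $L\in \mathcal{L}_{1}$ is $L^{u}$ for $u(a,c)=-L_{A}(a,c)$ with $X=C$ and $x^{\ast }(d)=d$. Hence the two suprema agree, and the two conditions are the same inequality $\leq \varepsilon ^{2}$.

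The only delicate point I foresee is the dependence on the choice of $x^{\ast }$. Remark (c) after Proposition \ref{p:u-L} covers it: the identity holds for any selection, and the supremum is over pairs $(u,x^{\ast })$, so the range is still all of $\mathcal{L}_{1}$. Everything else is bookkeeping.
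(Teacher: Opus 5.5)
Your proposal is correct and follows the paper's route. The paper gets the theorem directly from the pathwise identity (\ref{eq:improv-joint}), $\text{\textsc{Reg}}_{t}^{u}(\mathbf{b};\mathbf{b}\times \mathbf{c})-(U_{t}(\mathbf{c})-U_{t}(\mathbf{b}))=\mathcal{K}_{t}^{L^{u}}(\mathbf{c};\mathbf{b}\times \mathbf{c})$, together with the decomposition $\mathcal{K}_{t}^{L}(\mathbf{c};\mathbf{b}\times\mathbf{c})=\mathcal{B}_{t}^{L}(\mathbf{c})-\mathcal{R}_{t}^{L}(\mathbf{b}\times\mathbf{c})$, and your check via Proposition \ref{p:u-L} that $\{L^{u}:u\in \mathcal{U}_{1}\}=\mathcal{L}_{1}$ just makes explicit a step the paper leaves implicit.
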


This allows us to add \emph{universal exhaustive improvement} to the list of
equivalent properties of (proper) complete calibeating\footnote{%
With the proviso that the reference sequence consists of forecasts (or, more
generally, induces decisions; see the Remark below).} at the end of Section %
\ref{susus:u-subsuming}.

\medskip

\noindent \textbf{Remark. }The analysis above applies with no change to any
setup where the reference sequence induces a decision sequence by a fixed
mapping $x^{\#}:B\rightarrow X$ (fixed over time, though it may depend on $u$%
). In this case the average utility of $\mathbf{b}$ is well defined by $%
U_{t}(\mathbf{b})=(1/t)\sum_{s\leq t}u(a_{s},x^{\#}(b_{s})).$ Since the
exact definition of $U_{t}(\mathbf{b})$ plays no role in the results in this
subsection (it only needs to be defined, and to be bounded from above by the
decision value $V_{t}^{u}(\mathbf{b})$, so that \textsc{Reg}$_{t}^{u}(%
\mathbf{b};\mathbf{i})=V_{t}^{u}(\mathbf{i})-U_{t}(\mathbf{b})\geq 0$),
everything carries through. Two leading examples: first, as considered
above, when each $b$ is a forecast (i.e., $B\subset C$) and $x^{\#}$ is a $u$%
-best reply mapping $x^{\ast }$. Second, when each $b$ is already a
decision, i.e., $B$ is a finite subset of $X$ (and $x^{\#}$ is the identity
map).

\end{document}